\renewcommand{\hat}{\widehat}
\renewcommand{\tilde}{\widetilde}
\newcommand{\eps}{\varepsilon}
\newcommand{\Dd}{\mathcal{D}}
\newcommand{\Nn}{\mathcal{N}}
\newcommand{\Uu}{\mathcal{U}}
\newcommand{\trnc}{\text{trnc}}
\newcommand{\trunc}{\operatorname{trnc}\hspace{0.05em}}
\newcommand{\sqfor}{\mathsf{Squared Forrelation}}
\newcommand{\distbal}{\mathsf{Balance Checking}}
\newcommand{\bllqsv}{\text{Black-Box }\mathsf{LLQSV}}
\newcommand{\llqsv}{\mathsf{LLQSV}}
\DeclareMathOperator{\negl}{negl}
\DeclareMathOperator{\sgn}{sgn}
\DeclareMathOperator{\erfc}{erfc}
\DeclareMathOperator{\RejSamp}{RejSamp}
\let\E\relax
\newcommand{\E}{\mathop{\bf E\/}}
\renewcommand{\Pr}{\mathop{\bf Pr\/}}
\newcommand{\Cov}{\mathop{\bf Cov\/}}
\renewcommand{\poly}{\mathrm{poly}}
\newclass{\QCAM}{QCAM}
\newtheorem{theorem}{Theorem}
\newtheorem{lemma}[theorem]{Lemma}
\newtheorem{claim}[theorem]{Claim}
\newtheorem{corollary}[theorem]{Corollary}
\newtheorem{definition}[theorem]{Definition}
\newtheorem{conjecture}[theorem]{Conjecture}
\newtheorem{fact}[theorem]{Fact}
\newtheorem{remark}[theorem]{Remark}
\newtheorem{problem}[theorem]{Problem}
\begin{document}
\title{On Certified Randomness from Fourier Sampling or Random Circuit Sampling}

\author{Roozbeh Bassirian}\affiliation{University of Chicago}
\email{roozbeh@uchicago.edu}
\author{Adam Bouland}\affiliation{Stanford University}
\email{abouland@stanford.edu}
\author{Bill Fefferman}\affiliation{University of Chicago}
\email{wjf@uchicago.edu}
\author{Sam Gunn}\affiliation{UC Berkeley}
\email{gunn@berkeley.edu}
\author{Avishay Tal}\affiliation{UC Berkeley}
\email{atal@berkeley.edu}
\maketitle
\begin{abstract}
Certified randomness has a long history in quantum information, with many potential applications. Recently, Aaronson (2018, 2020) proposed a novel public certified randomness protocol based on existing random circuit sampling (RCS) experiments.
The security of his protocol, however, relies on non-standard complexity-theoretic conjectures which were not previously studied in the literature.

Inspired by Aaronson's work, we study certified randomness in the quantum random oracle model (QROM). 
We show that quantum Fourier Sampling can be used to define a publicly verifiable certified randomness protocol with black-box security without any computational assumptions.
In addition to giving a certified randomness protocol in the QROM, our work can also be seen as supporting Aaronson's conjectures for RCS-based randomness generation, as our protocol is in some sense the ``black-box version'' of Aaronson's protocol.
In further support of Aaronson's proposal, we prove a Fourier Sampling version of Aaronson's conjecture by extending Raz and Tal's separation of $\BQP$ vs $\PH$.

Our work complements the subsequent certified randomness protocol of Yamakawa and Zhandry (2022) in the QROM. 
Whereas the security of that protocol relied on the Aaronson-Ambainis conjecture, ours does not rely on any computational assumption --- at the expense of requiring exponential-time classical verification. Our protocol also has a simple heuristic implementation.
\end{abstract}

\thispagestyle{empty}
\clearpage
\pagebreak
\setcounter{page}{1}
\newpage

 \section{Introduction}

The task of producing certified randomness -- i.e. numbers that can be proved random to a skeptic, without any trust in the device  -- is a uniquely quantum ability, with a long history in quantum information.
Certified randomness has many possible applications, for example in the certification of public lotteries, the selection of cryptographic keys, or proof of work cryptocurrencies (see e.g. \cite{acinsurvey} for a discussion).
It was first realized that Bell inequality violations could enable certified randomness using multiple non-communicating devices \cite{Pironio2010}.
Subsequently, single-device certified randomness schemes have been designed that use interactive protocols and assume post-quantum cryptography \cite{BCM+18, BKVV20, KCVY21, HG21, LG21}.
Recently, in unpublished work, Aaronson \cite{a-talk1,a-talk} has suggested that it might even be possible to generate public certified randomness with a single device using near-term random circuit sampling (RCS) experiments.
Aaronson’s proposal therefore offers the possibility that recent quantum advantage experiments \cite{boixo-supremacy,ustc-rcs} might have useful applications.

The security of Aaronson's protocol, however, relies on a custom complexity-theoretic conjecture called the “long list quantum supremacy verification assumption," or $\llqsv$.
This conjecture is both strong and nonstandard -- it essentially conjectures that the output distribution of RCS is not only $\textsf{BQP}$-indistinguishable from random, but is indistinguishable for quantum variants of $\textsf{AM}$, even with access to exponentially many samples.
It is therefore important to critically examine its validity, and more generally offer evidence that statistical tests for RCS certify the generation of entropy.
More broadly, Aaronson's work opens the question of what other certified randomness protocols might be possible in the single-device, non-interactive setting -- and whether or not it is possible to achieve certified randomness under weaker or different assumptions.

\vspace{-1em}
\subsection{Our results in context}

Our first result studies certified randomness in the context of Fourier Sampling in the quantum random oracle model (QROM).
Here the task is, given black-box access to a random function $f$, output a sample from $f$'s Fourier spectrum.
One can easily achieve this by querying $f$ on a uniform superposition and measuring in the Hadamard basis.
We show that Fourier Sampling can be used to generate certified randomness.
We prove that in this black-box setting, any quantum device passing a certain statistical test based on Fourier Sampling\footnote{ The test checks that the outputs are heavy Fourier coefficients, similar to $\mathsf{FourierFishing}$ \cite{aaronsonbqpph} or Heavy Output Generation (HOG) \cite{ac}.} must generate a high amount of min-entropy.
We then use this test to construct a cryptographic protocol to generate certified randomness, non-interactively using a single device.

Our second result provides black-box evidence for Aaronson's $\llqsv$ conjecture. Informally, $\llqsv$ asks to distinguish whether an (exponentially long) list of random quantum circuits and outcomes $(C_i, s_i)$ are uncorrelated, or $s_i$ is sampled from the output distribution of $C_i$. \emph{Black-Box} $\llqsv$ replaces random circuits with random Boolean functions, and the output distribution is defined by the Fourier coefficients of $f_i$. We can prove that black-box $\llqsv$ is not solvable in $\BQP$ or $\PH$. Even though this is an oracle result, it can be seen as an attempt to come closer to attaining certified randomness in the white box setting, since the $\llqsv$ conjecture was designed for this purpose~\cite{a-talk}.

We believe these results are interesting for the following reasons.
First, they provide complexity-theoretic support for Aaronson's certified randomness protocol.
As we will discuss, the intellectual history of RCS is intimately tied with analogous developments in Fourier sampling (e.g. \cite{aaronsonbqpph}).
Our Fourier sampling protocol is in some sense the ``black-box'' version of Aaronson's protocol -- and therefore, our protocol's security is evidence in favor of the security of Aaronson's construction. 
To further elucidate this point, we prove a variant of Aaronson's $\llqsv$ conjecture in the Fourier sampling domain, which requires a nontrivial extension of the recent $\BQP$ vs $\PH$ oracle separation of Raz and Tal \cite{DBLP:conf/stoc/RazT19} to the $\llqsv$ setting. This provides evidence to back Aaronson's conjectures.

Second, our work provides a certified randomness protocol in its own right.
Fourier Sampling is a well-studied computational primitive in the black-box model \cite{aaronsonbqpph} and a plausible candidate for a quantum supremacy experiment in a white-box model \cite{fefferman2015power}.
Our work shows this experiment is capable of generating certified randomness in the black-box setting, and it is plausible that it might be instantiable in the white-box setting via sufficiently strong cryptographic hash functions.

Third, our work complements a subsequent certified randomness protocol of Yamakawa and Zhandry \cite{YK22}. Assuming the Aaronson-Ambainis conjecture~\cite{AA14}, Yamakawa and Zhandry designed a certified randomness protocol in the QROM model which is both non-interactive and \emph{efficiently verifiable} -- i.e. the tests performed on the output can be done in polynomial time.
Their work was the first to demonstrate that efficient verification of certified randomness is in principle possible using a non-interactive protocol.
Our work, however, offers two advantages over Yamakawa and Zhandry's protocol.
First, our protocol is closer to experimental feasibility, as their protocol not only requires instantiating the QROM, but also performing the list decoding of folded Reed-Solomon codes in superposition.
Second, while the security of Yamakawa and Zhandry's protocol relies on the Aaronson-Ambainis conjecture, our result does not rely on computational assumptions. We summarize this comparison in Table \ref{tab:comparison}.

Since we posted a first draft of our manuscript online, Aaronson and Hung~\cite{DBLP:conf/stoc/AaronsonH23} have made valuable contributions to the analysis of quantum advantage based certified randomness protocols. As noted in their work, our results have much in common. 
The main focus of~\cite{DBLP:conf/stoc/AaronsonH23} is to formalize the white-box certified randomness protocol introduced in~\cite{a-talk, a-talk0}, assuming the hardness of $\llqsv$ for $\mathsf{QCAM/poly}$, the basic outline of which preceded and inspired our work.
Their work also improves some of the results in this paper, in the sense that they give black-box evidence for hardness of $\llqsv$ and the security of the certified randomness protocol in the random oracle model, a line of inquiry which was instigated by our work.
We believe that both works together build a foundation to establish the first practical use-case of near-term quantum devices\footnote{Instantiating oracles using pseudorandom functions can have several challenges on NISQ devices that we leave open for future works.}. There are several differences between our works as well:
\begin{itemize}
\item Our result gives a $\mathsf{AM}$ (implied by our $\PH$ bound) and $\BQP$ lower bound for $\llqsv$ in the black box model, which Aaronson and Hung improve to $\mathsf{QCAM/poly}$. We note that our $\PH$ lower bound is incomparable to their result. Our work also finds connections to the Forrelation problem and uses different techniques that may be of independent interest.
\item Our techniques for studying Fourier Sampling in QROM allow us to argue about linear min-entropy directly without an entropy accumulation theorem. We utilize a ``derandomization'' argument to establish a general trade-off between a strong lower bound with constant min-entropy generation, and a weaker lower bound with linear min-entropy generation.
\item On the other hand, the proof techniques in~\cite{DBLP:conf/stoc/AaronsonH23}, work against more general \emph{entangled} quantum adversaries, where it is less clear how to extend our techniques in this way. We leave this extension as an interesting future direction.
\end{itemize}
\begin{table}[t]
    \centering
    \begin{tabular}{| l | l | l | l |}
        \hline
         \cellcolor[gray]{0.8} Protocol & \cellcolor[gray]{0.8}\cite{YK22} & \cellcolor[gray]{0.8} \cite{a-talk0} & \cellcolor[gray]{0.8} This Work, \cite{DBLP:conf/stoc/AaronsonH23} \\
         \hline
         \cellcolor[gray]{0.8} Verification & Efficient & Inefficient  & Inefficient \\
         \hline
         \cellcolor[gray]{0.8} Security & QROM + AA conjecture & Custom conjecture & QROM \\
         \hline
         \cellcolor[gray]{0.8} Implementation & Not NISQ  & RCS & RCS (heuristic) \\
         \hline
    \end{tabular}
    \caption{Comparison of \Cref{th:1} to the certified randomness results of \cite{YK22,a-talk0}. See \Cref{sec:rcs-connections} for a discussion of the connections between our construction and RCS.}
    \label{tab:comparison}
\end{table}

\subsection{Connecting Fourier Sampling to Aaronson's protocol}
\label{sec:rcs-connections}
The intellectual history of RCS as a quantum advantage experiment has deep roots in Fourier Sampling.
In 2010 Aaronson used a relation version of the Fourier Sampling problem called Fourier Fishing to provide the first oracle separation between the relation variants of  \textsf{BQP} and \textsf{PH} \cite{aaronsonbqpph}. 
The idea was to take advantage of the fact that the Fourier spectra of random Boolean functions obeys a ``Porter-Thomas'', or exponential, distribution in which some outcomes are sampled with considerably heavier probabilities than others.
Aaronson showed in the black box model that no PH algorithm can sample from mostly heavy Fourier outcomes, whereas a quantum computer naturally does so by performing Fourier Sampling.

Subsequently, it was conjectured that no classical algorithm can output mostly heavy elements of the output distribution of random circuits \cite{boixo-characterizing,ac,ag}.
These works essentially conjecture that the output distribution of random quantum circuits is as ``unstructured'' as the Fourier spectrum of a random Boolean function. In other words, the classical description of the circuit does not leak information about which outputs are heavy vs light in any efficiently accessible manner.
In contrast, there are certain scenarios in which such a conjecture fails.
For example, outputting heavy elements of Boson Sampling experiments is computationally easy due to information leakage from the circuit description \cite{aaronson2013bosonsampling}.
More recently, it has been shown that sufficiently shallow RCS circuits have additional structure that allows for classical algorithms to output heavy items with nontrivial bias \cite{gao2021limitations}.
Therefore in this sense, the classical hardness of quantum advantage experiments rests on this ``unstructuredness'' conjecture of deep RCS circuits.

Viewed in this light, our results can be seen as the analog of Aaronson's Fourier Fishing lower bound \cite{aaronsonbqpph} in the certified randomness setting.
We are the first to demonstrate that certified randomness is possible in the black-box Fourier Sampling context, as we show that no quantum algorithm can output heavy Fourier elements with low entropy.
Aaronson's prior work is conjecturing that precisely this same property holds in the white-box RCS setting.
The security of his protocol can therefore be rested on the belief that RCS has highly ``unstructured'' output distributions -- even hidden to (pseudo-deterministic) quantum algorithms.
In this sense, RCS can be viewed as a heuristic implementation of our protocol.

\subsection{Our Results} \label{subsec:our-results}

Our first result concerns the problem of outputting sufficiently heavy Fourier coefficients, which we call Fourier HOG ($\mathsf{FHOG}$). Let $F = f_1, \ldots, f_m: \{0, 1\}^n\rightarrow \{\pm 1\}$ be a list of random Boolean functions, and $Z = z_1, \ldots, z_m \in \{0, 1\}^n$. Let $N = 2^n$. We call a Fourier coefficient heavy if $\frac{1}{N} < \hat{f}_i(z_i)^2 \leq \frac{4}{N}$, and light if $\hat{f}_i(z_i)^2 \leq \frac{1}{N}$.  Intuitively, $\mathsf{FHOG}$ captures the statistical properties of the simple quantum algorithm that samples according to the Fourier spectrum.
A simple calculation shows that on a random $f$ this simple quantum algorithm outputs a heavy element $\approx 54\%$ of the time, and outputs a light element $\approx 20\%$ of the time. 
Therefore, $\mathsf{FHOG}$ requires that the number of heavy elements among $z_1, \ldots, z_m$ is $\approx 0.54 m$ and the number of light elements is $\approx 0.2 m$.
Our main result proves that \emph{any} efficient quantum query algorithm that solves $\mathsf{FHOG}$ on a non-negligible fraction of inputs must generate linear min-entropy, and hence the simple Fourier sampling algorithm is essentially optimal in this sense. Starting from a \emph{random} function to generate randomness might seem circular; however, this bound is on min-entropy \emph{conditioned} on the choice of random function.

\begin{theorem}{(informal)} \label{th:1}
There is no sub-exponential-query quantum algorithm which, given black-box access to $F$, both (i) passes $\mathsf{FHOG}$ with non-negligible probability, and (ii) conditioned on passing $\mathsf{FHOG}$, generates only $o(n)$ bits of min-entropy, for a non-negligible fraction of random functions~$F$.
\end{theorem}
This result can directly be used to construct a \emph{single round} publicly verifiable proof of min-entropy protocol:
\begin{definition}
    Let $n, m \in \mathbb{N}$ and $m \geq n$. Let $F = f_1, \ldots, f_m$ be a random oracle representing $m$ random boolean functions $f_i: \{0, 1\}^n\rightarrow\{\pm 1\}$. Suppose $A^F$ is a polynomial time quantum algorithm that generates polynomial size classical strings $\pi$. Let $V^F$ be a randomized exponential time classical verifier that accepts or rejects $\pi$. $A$ and $V$ form a single round publicly verifiable proof of min-entropy protocol if and only if:
    \begin{itemize}
        \item Completeness: $V$ accepts string $\pi$ sampled according to the output distribution of $A$ with high probability:
        $$\Pr_{F, \pi \sim A^F} \left[V^F(\pi) \text{ accepts}\right] \geq 1-\negl(n).$$
        \item Soundness: With high probability over the choice of random oracle $F$, if $V$ accepts $\pi$ with non-negligible probability, then the output distribution of $A^F$ must have min-entropy at least $h$:
        $$\Pr_{F} \left[\Pr_{\pi \sim A^F}[V^F(\pi) \text{ accepts}] \geq \delta(1/n) \land \Pr_{Z \sim \chi}[q_Z \geq 2^{-h}] \geq \negl(n) \right] \leq \negl(n).$$
        where $\chi$ is the output distribution of $A^F$ conditioned on $V$ accepting, and $q_Z$ is the probability of sampling $Z$ according to $\chi$
    \end{itemize}
\end{definition}
Which immediately implies a single round publicly verifiable certified randomness protocol \cite{YK22}.
\begin{remark}
Certified randomness usually refers to \emph{private} randomness expansion, where the goal is to expand a truly small random seed to a larger string that is statistically close to uniformly random by interacting with a quantum device. This definition is widely used in protocols based on Bell experiments \cite{Pironio2010}, where the only assumption they make is that they have access to multiple devices that cannot communicate. Informally, they can show that a strategy that wins the CHSH game with high enough probability must also have high min-entropy, which can be used to extract uniformly random bits using randomness extractors. Furthermore, they can even show that these protocols are secure against quantum adversaries. In other words, even when the devices are manufactured adversarially, the conditional min-entropy (conditioned on the adversary's sub-system) must still be large. Motivated by Aaronson’s proposal \cite{a-talk}, in this work we are concerned with \emph{publicly} verifiable certified randomness relative to quantum random oracles, since the (adversarial) prover knows the string generated in the protocol. This makes our problem slightly easier to argue about since we do not need to worry about the adversary’s side information. However, similar to \cite{YK22}, we showed that the high min-entropy bound holds even conditioned on the random oracle.
\end{remark}
Crucially, our bounds hold not only for worst-case functions $F$ but also for \emph{average-case} $F$, whose Fourier spectra share many common features with RCS (e.g. a Porter-Thomas-like  distribution). We prove this theorem using a carefully constructed hybrid argument.
The worst-case, constant-min-entropy proof is relatively straightforward: Suppose one has an efficient quantum algorithm that pseudo-deterministically outputs a heavy item for all functions $f$. 
On one hand, we know that by changing a relatively small number of entries of $f$ (say $O(2^{n/2})$), we can change a heavy Fourier coefficient to a light Fourier coefficient.
On the other hand, any low-query quantum algorithm cannot notice the difference of switching these many entries. Simply because they represent an exponentially small fraction of the entries of $f$, and therefore this case follows from a standard hybrid argument.
However, turning this idea into one that certifies \emph{linear} min-entropy for \emph{average-case} functions is much more involved. This is essentially because the above perturbation of the function $f$ does not preserve the average case, nor does it apply to quantum algorithms with higher min-entropy. 
We tune the hybrid argument to extend our bound to the average case (for constant min-entropy) by applying a combinatorial accounting of how these perturbations of the functions affect the average case. We then apply a novel ``derandomization'' argument which reduces the linear min-entropy case to the constant min-entropy case.

We then show how to leverage this result to produce a cryptographic certified randomness protocol, which is analogous to Aaronson's protocol.
The basic idea is to ask the quantum algorithm to sample from the Fourier spectra of several different random functions and check that the number of sufficiently ``heavy'' and ``light'' outcomes is roughly what is expected.
This could be instantiated in a white-box setting as a cryptographic protocol for certified randomness, where the client sends several function descriptions (say of cryptographic hash functions), the honest server \emph{efficiently} responds with Fourier spectrum samples, and the client checks (post-mortem) that most of the samples were heavy using an exponential time computation. We emphasize that this is only plausible because we can bound the \emph{conditional} min-entropy (relative to the choice of random function).

Our second result concerns a certified randomness proposal due to Aaronson~\cite{a-talk}. Aaronson gave a reduction from the problem of certifying min-entropy to a certain decision problem called the Long List Quantum Supremacy Verification ($\mathsf{LLQSV}$) problem.
The $\mathsf{LLQSV}$ problem essentially evaluates the difficulty of distinguishing exponentially many RCS samples from uniformly random numbers.
Aaronson showed that if this problem lies outside of $\mathsf{QCAM}$ -- that is the class of problems solvable by $\mathsf{AM} $ protocols with classical messages and quantum verifiers -- then no efficient quantum algorithm can pass the HOG test with low min-entropy.
Analogously to \cite{ac}, Aaronson explored potential  algorithms for this problem and found they did not solve it, and subsequently conjectured such a $\mathsf{QCAM}$ algorithm does not exist.
This is a custom assumption, and therefore deserves scrutiny -- it is simultaneously conjecturing this problem lies outside of $\BQP$, outside of $\mathsf{AM}$, and beyond. 
In our second result, we give evidence towards its validity in the black-box setting.
Namely, we show that the black-box $\mathsf{LLQSV}$ problem lies outside of $\mathsf{BQP}$ and even outside of $\mathsf{PH}$ (and hence outside of $\mathsf{AM}$).  

\begin{theorem}{(informal)} \label{th:2}
No $\mathsf{BQP}$ or $\mathsf{PH}$ algorithm can solve black-box $\mathsf{LLQSV}$.  Furthermore, our lower-bound extends even to the case in which the quantum algorithm is allowed to make $O(2^{n/10})$ queries, and the relation in the $\mathsf{PH}$ algorithm can be computed in time $O(2^{n/c})$ for some constant $c$ that depends on the level of the polynomial hierarchy.
\end{theorem}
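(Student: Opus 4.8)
The plan is to isolate the combinatorial core of black-box $\mathsf{LLQSV}$ in a self-contained distinguishing problem and prove unconditional lower bounds there. Recall a black-box $\mathsf{LLQSV}$ instance gives query access to $f\colon\{0,1\}^n\to\{\pm1\}$ together with a list $Y=(y_1,\dots,y_T)$ of $T=2^{\Theta(n)}$ strings, and asks whether the $y_i$ are i.i.d.\ from the Fourier distribution $\Pr[y]=\hat f(y)^2$ or i.i.d.\ uniform. I would first reformulate this: since for a uniformly random $f$ the Fourier transform $\hat f$ behaves like a lightly correlated Gaussian vector of coordinate variance $2^{-n}$, the likelihood ratio between the two cases factors, up to negligible corrections, as $\prod_i\bigl(2^n\hat f(y_i)^2\bigr)=\prod_i\bigl(1+\xi_{y_i}(f)\bigr)$, where each $\xi_{y_i}$ is a mean-zero multilinear polynomial of degree $2$ in the bits of $f$ with $\E[\xi_{y_i}^2]=\Theta(1)$, and whose mixed moments are governed by autocorrelations $2^{-n}\sum_x f(x)f(x\oplus b)$. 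Routing through the auxiliary $\distbal$ and $\sqfor$ problems, one reduces to showing that this ``tilt'' is invisible to the relevant model; the key structural fact is that although the tilt has large total variance, almost all of it sits at high Fourier levels, since reaching a low level in a product of $k$ of the $\xi$'s forces $\Omega(k)$ ``collisions'' among the $\Theta(2^{2n})$ underlying index pairs, each contributing a $2^{-\Theta(n)}$ saving.

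For the $\mathsf{BQP}$ part I would use the polynomial method. For each fixed list $Y$ the acceptance probability of a $q$-query algorithm is a polynomial $p_Y$ of degree $\le 2q$ in the bits of $f$, bounded in $[0,1]$, so the distinguishing advantage is $\bigl|\E\bigl[\bigl(\prod_i(1+\xi_{y_i})-1\bigr)p_Y(f)\bigr]\bigr|$ over $f$ and $Y$ uniform. I would expand the tilt into monomials, match them against the level-$\le 2q$ Fourier support of $p_Y$, and bound the surviving contributions by the collision estimate above; the parameter $q\le 2^{n/10}$ is the slack that closes this accounting against the $\binom{T}{k}$ choices of which of the $T=2^{\Theta(n)}$ samples participate. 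I expect this estimate — controlling all collision orders $k$ simultaneously while the number of samples is itself exponential — to be the main obstacle: a naive Cauchy--Schwarz bound is swamped by $T$, so the argument must exploit the specific structure of the tilt (and possibly the non-uniformity of the $\mathsf{YES}$ marginal on $Y$) rather than treating $p_Y$ as an arbitrary bounded polynomial.

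For the $\mathsf{PH}$ part I would use the standard translation of an oracle $\mathsf{PH}$ (or $\Sigma_k^{\mathrm p}$) computation with relation computable in time $t=2^{n/c}$ into an $\mathsf{AC}^0$-type circuit of depth $O(k)$ whose bottom gates are decision trees of depth $t$ over the oracle bits and whose $\text{AND}/\text{OR}$ gates have fan-in $2^{\poly}$. I would then apply a random restriction together with H\aa{}stad's switching lemma (or the multi-switching lemma of H\aa{}stad and of Rossman--Servedio--Tan), restricting all but a $2^{-\Theta(kn/c)}$ fraction of the $f$-coordinates so that with high probability the circuit collapses to a decision tree of depth $\le q'$ over the surviving coordinates; the number of restriction rounds grows with the level $k$, which is precisely why $c$ must be taken to depend on $k$. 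On the restricted instance the surviving algorithm is a low-degree object, and I would close the argument by invoking the level-$\le 2q'$ analysis from the quantum part to show it still cannot distinguish the suitably restricted $\mathsf{YES}$ and $\mathsf{NO}$ distributions, after checking that the restriction preserves a genuine smaller copy of the hard distribution. A delicate point is that the $\mathsf{YES}$ distribution is not a product distribution, so the restriction must act on the $f$-coordinates (which are product) while the list $Y$ is handled separately, or one must invoke a form of the switching lemma robust to the mild correlations present.

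In short, the load-bearing ingredient is the collision/low-degree estimate for the likelihood ratio in the regime of exponentially many samples; once that is in hand, the $\mathsf{BQP}$ bound is a direct application of the polynomial method, and the $\mathsf{PH}$ bound follows by the switching-lemma reduction, with the hierarchy-level-dependent constant $c$ emerging from the number of restriction rounds.
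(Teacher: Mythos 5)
Your proposal diverges from the paper in both structure and completeness, and the divergence matters. First, a definitional mismatch: the paper's $\bllqsv$ does \emph{not} give one $f$ with $T$ samples from $\hat f^2$; it gives a list of independent pairs $(f_i, s_i)$, a fresh random function for each sample. This independence across $i$ is the load-bearing structural fact in the paper. It lets them observe that the distribution of $f_i\cdot\chi_{s_i}$ depends only on $|h(f_i\cdot\chi_{s_i})-N/2|$, and therefore write both the YES and NO distributions as mixtures of the simple distributions $\mathcal{U}_d^N$ (uniform over strings of weight $N/2\pm d$) with $d\le \poly(n)\sqrt N$ w.h.p.\ (a Chernoff bound on random Boolean functions). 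An averaging and hybrid argument then collapses the whole problem to ``$\distbal$'': distinguishing $\mathcal{U}_0^N$ from $\mathcal{U}_d^N$ for a single small $d$. For BQP the paper then runs a BBBV-style hybrid ($\sqrt N$-entry flips are undetectable after few queries), not the polynomial method you propose; for PH they do not run a switching lemma from scratch but reduce, via padding and an approximate-majority amplification trick, directly to H\aa stad's known size lower bound for weight-$N/2$ vs.\ weight-$N/2+1$. Your picture with a single $f$ loses this factorization entirely, which is exactly why you then find yourself fighting the product $\prod_i(1+\xi_{y_i})$ over exponentially many correlated samples.

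Second, and more importantly, the step you explicitly flag as the ``main obstacle'' — the collision/low-degree estimate controlling all orders of the tilt against $T=2^{\Theta(n)}$ samples without being swamped by $\binom{T}{k}$ — is not closed, and nothing in the proposal indicates how to close it. As written this is the entirety of the BQP argument; without it the polynomial method gives nothing. Similarly, for PH the switching-lemma plan requires (a) a restriction distribution that preserves a hard sub-instance of a non-product YES distribution, and (b) a version of the switching lemma robust to the residual correlations induced by the $s_i$'s — you acknowledge both issues but neither is resolved. The paper handles (a)/(b) by an independent, concrete mechanism: it shows the $s_i$ oracle grants no extra power because $s_i$ can be simulated by the circuit itself (random self-reducibility for the PH bound, and directly in the hybrid argument for BQP). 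The paper's second PH route (the Raz–Tal/$\sqfor$ analysis in Section 4) is also not what you propose: it works with multilinear expectations, a martingale decomposition of $Y^2-\varepsilon$, and the level-2 Fourier tail bound for $\mathsf{AC}^0$, rather than invoking the switching lemma directly. In short: you have the right genre of tools, but the specific combinatorial reduction that makes the problem tractable is missing, and the step you yourself identify as decisive is left as a conjecture.
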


We give two independent proofs of the $\mathsf{PH}$ lower bound and one of the $\mathsf{BQP}$ lower bound. The first $\mathsf{PH}$ bound and the $\mathsf{BQP}$ bound are based on extending lower bounds for the MAJORITY function.
The second $\mathsf{PH}$ bound is based on introducing a black-box variant of $\llqsv$ that we call $\sqfor$, which is also closely related to the $\mathsf{Forrelation}$ problem.
In the problem we introduce, black-box access is given to two functions $f, g: \{0, 1\}^n \rightarrow \{\pm 1\}$, and the task at hand is to distinguish whether $f, g$ are chosen uniformly at random or $g$ is correlated with the heavy Fourier coefficients of $f$.
It is simple to show that the $\sqfor$ problem is strictly easier than $\llqsv$, because it reduces to $\llqsv$ and it also admits a $\mathsf{BQP}$ algorithm. Intuitively, in $\sqfor$, having access to $g$ allows one to sample heavy elements of the Fourier spectrum and therefore recreate a long list of samples for $\llqsv$. Nevertheless, we prove that even the $\sqfor$ problem is not in $\PH$. This simultaneously proves a lower bound on black-box $\llqsv$ and gives another oracle separation of $\mathsf{BQP}$ vs $\mathsf{PH}$~\cite{DBLP:conf/stoc/RazT19}.  

\subsection{Open Problems}
We have shown that Fourier Sampling gives rise to certified randomness in the QROM, albeit without efficient verification. In contrast, Yamakawa-Zhandry protocol~\cite{YK22} gives certified randomness in the QROM with efficient verification, but relies on the Aaronson-Ambainis conjecture. 
It is natural to ask if one can combine the strengths of both results and obtain a certified randomness protocol with efficient verification and unconditional security.

Another natural direction is to construct certified randomness protocols without the QROM heuristic in the white-box model. Aaronson's protocol~\cite{a-talk} is such an attempt that relies on the $\llqsv$ conjecture, or alternatively, a belief that the output distribution of sufficiently deep random circuits is unstructured. It is an important open question whether or not one can base the security of these protocols on more natural hardness assumptions.

\subsection{Proof Outline}
\subsubsection{Min-Entropy Generation}
\label{subsec:th1-outline}
Our proof of \Cref{th:1} proceeds in three steps, where each step is a progressively stronger theorem.

\paragraph{Worst-case, constant min-entropy bound.} In the first step, we show that any quantum algorithm that outputs a heavy Fourier coefficient for \emph{any} function $f$ after $2^{o(n)}$ queries must generate a {\it constant} amount of min-entropy. A simple hybrid argument suffices for this case as sketched in \cref{subsec:our-results}.

\paragraph{Average-case, constant min-entropy bound.} Unfortunately, the above argument does not immediately say anything about quantum algorithms which only output a heavy Fourier coefficient for \emph{most} functions $F$. This is essentially because the hybrid argument only shows how to \emph{adversarially} construct ``Bad'' functions for which the quantum algorithm does not solve $\mathsf{FHOG}$ with non-negligible probability. It does not show that these Bad functions constitute a large fraction of the space. 

In the second step of our proof, we prove exactly this -- any low-query algorithm cannot pseudo-deterministically solve $\mathsf{FHOG}$ on a large fraction of inputs. In fact, we can show that the algorithm succeeds only on exponentially small choices of $F$.
We prove this using a double-counting argument.
Namely, given a quantum algorithm $A$ that solves $\mathsf{FHOG}$ with low min-entropy, label the functions on which the algorithm succeeds ``Good'', and those on which it \emph{noticeably} fails ``Bad.'' More formally, for a Bad $F$, approximately $28\%$ of outcomes are ``heavy'' and $46\%$ of outcomes are ``light''.
For each Good choice of $F$ on which the algorithm passes $\mathsf{FHOG}$, the hybrid argument constructs a nearby Bad function $F'$.
Consider the bipartite graph where one side contains Good functions $F$, the other side contains Bad functions $F'$, and we draw an edge $(F, F')$ if $F'$ can be obtained from $F$ by our earlier construction. The hybrid argument gives us a lower bound on the degree of any Good vertex in this graph. However, to show the Bad choices are a large fraction of the space, we instead want to upper bound the degree of any Bad vertex. 
We obtain such a bound by counting the number of adjacent functions $F$ for which the most likely output of our algorithm on $F'$ is a heavy Fourier coefficient.
In other words, we prove that the set of Bad functions is actually very large by showing that each Bad function $F'$ could only have come from a limited number of Good functions $F$. 
Therefore the Bad functions must constitute a large fraction of the space -- which concludes our average-case bound.

\paragraph{Average-case, linear min-entropy bound.} For the previous step, we crucially relied on the fact that we were only considering quantum algorithms with sub-constant min-entropy so that there was a unique majority output $Z$. This allowed us to consider shifting entries of $F$ to change $Z$ from a heavy to a light element and vice versa.
Once we are in the case of algorithms with higher min-entropy $H$, this type of argument becomes much more difficult since there could now be many different elements $Z_1,Z_2,\ldots Z_{2^H}$ which are frequently output by the algorithm.
The simplest approach to proving a hybrid bound is to try to shift around the heaviness of the Fourier coefficients of various subsets of these elements without affecting the others. But this becomes challenging, as the number of entries of $F$ one can change to alter certain Fourier coefficients but not others decays exponentially in the number of coefficients being considered. So our attempt to make a direct extension of our prior method fails in the average case.

Instead, we extend our theorem to the generation of a linear amount of min-entropy with a derandomization-like argument that uses the previous theorem as a black-box. 
While it's commonly said that one cannot ``pull the randomness'' out of quantum algorithms, in certain query settings it can be done.
In particular, whenever one has a quantum algorithm $A$ which has an output distribution of min-entropy $H$, there is a simple way to pull the quantum randomness out of the algorithm: Run the quantum algorithm many ($\approx 4^H$) times, store the empirical distribution, and then classically rejection sample on the ``heavy'' portion of the empirical distribution.
The empirical distribution will be close to the true distribution on heavy elements, so this new sampler $A'$ will faithfully reproduce the distribution of heavy outputs for a sufficiently large fraction of input choices.
Critically, the randomness in $A'$ is now almost entirely classical, stemming from rejection sampling. This randomness can be ``hard-wired'' externally (modulo some technical issues regarding the imprecision caused by the additive error in the estimated distribution from the samples). 
For any fixed external randomness, the algorithm becomes pseudo-deterministic, at the cost of more queries.

\begin{figure}
    \centering
    \begin{tikzpicture}[>=latex,rect/.style={draw=black, 
                       rectangle, 
                       fill=gray,
                       fill opacity = 0.1,
                       text opacity=1,
                       minimum width=75pt, 
                       minimum height = 75pt, 
                       align=center}]
      \node[rect] (a1) {$A$};
      \node[right=75pt of a1] (a2) {\includegraphics[scale=0.5]{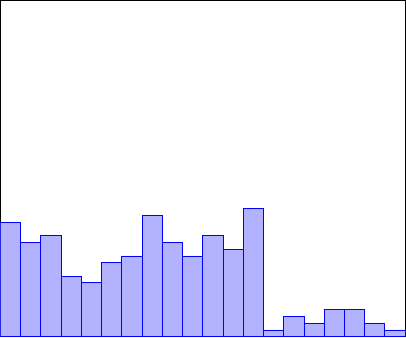}};
      \node[rect,below=50pt of a1] (b1) {$A'$};
      \node[right=75pt of b1] (b2) {\includegraphics[scale=0.5]{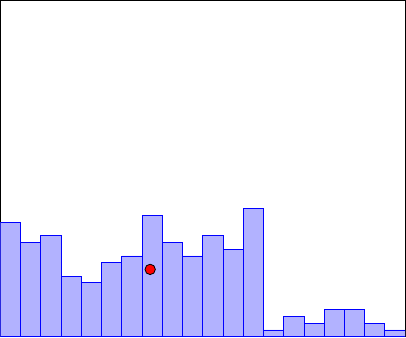}};
      \node[left=75pt of b1] (b0) {$r$};
      \draw[->] (a1)--(a2)node[midway,sloped, above] {};
      \draw[->] (b1)--(b2)node[midway,sloped, above] {};
      \draw[->] (b0)--(b1)node[midway,sloped, above] {};
    \end{tikzpicture}
    \caption{The algorithm $A'$ takes enough samples from $A$ to estimate the output distribution, then uses $r$ as a seed to rejection sample from that distribution. For random $r$, the output distributions of $A$ and $A'$ are identical. But for fixed $r$, the output of $A'$ is nearly-deterministic.}
    \label{fig:pulling-out-the-randomness}
\end{figure}

At first glance, the argument might appear overly expensive because it significantly increases the number of queries. Specifically, if $A$ makes $Q$ queries, then $A'$ makes $\approx Q \cdot 4^H$ queries, which in our case is exponential. 
Surprisingly, we show that it is still enough to reduce the constant min-entropy case to the linear min-entropy case.
The crucial observation is that, for the constant min-entropy case, our query lower bound holds against \emph{exponentially many} queries. Therefore, so long as $Q \cdot 4^H$ is less than this exponential, $A'$ does not make too many queries to $F$. We then show that one can hard-code the randomness for rejection sampling into our new algorithm, making it pseudo-deterministic and therefore ``derandomizing'' linear min-entropy algorithms.
In summary, we have traded our constant-min entropy, large-exponential query bound for a linear min-entropy, smaller-exponential query bound. 

\subsubsection{The \texorpdfstring{$\sqfor$}{Squared Forrelation} Problem and \texorpdfstring{$\bllqsv$}{Black box llqsv}} \label{subsec:intro-llqsv}

In this section, we discuss $\mathsf{PH}$ and $\mathsf{BQP}$ lower bounds for the $\bllqsv$ problem. First, we formally define the problem:
\begin{problem} [$\bllqsv$]
Given oracle access to $\mathcal{O}$, distinguish the following two cases:
 \begin{itemize}
  \item $\mathcal{U}$: For each $i$, $f_i$ and $s_i$ are chosen uniformly at random.
  \item $\mathcal{D}:$ For each $i$, $f_i$ is a random Boolean function, and $s_i$ is sampled according to $\hat{f}_i^2$.
 \end{itemize}
\end{problem}
 
We observe there is a close connection between $\bllqsv$ and the $\mathsf{MAJORITY}$ problem. To understand this, note that in either case the probability of sampling a pair $(f, s)$ only depends on the magnitude of the Fourier coefficient of $f$ at $s$, or equivalently $|h(f\cdot\chi_s)-N/2|$, where $h$ is the hamming weight, $\chi_s$ is the $s$-Fourier character (i.e., $\chi_s(x) = (-1)^{x\cdot s}$), and $f\cdot \chi_s$ is the entry-wise product of truth tables of $f$ and $\chi_s$. 
This allows us to write both distributions as a linear combination of simpler distributions $\mathcal{U}_d^N$ -- the uniform distribution over all strings of hamming weight $N/2+d$ or $N/2-d$. 

An averaging argument then implies that any algorithm for $\bllqsv$ should also distinguish a pair of distributions: $\mathcal{U}_d^N$ and $\mathcal{U}_{d'}^N$.  To prove the hardness of this task, we can focus on the distinguishability of $\mathcal{U}_0^N$ and $\mathcal{U}_d^N$, and since we are working with random Boolean functions, we can bound $d$ by $\poly(n)/\sqrt{N}$. Lastly, we use known lower bounds for $\mathsf{MAJORITY}$ to prove that these two distributions are indistinguishable for both $\mathsf{BQP}$ and $\mathsf{PH}$ algorithms. We discuss this in more detail in Section \ref{sec:maj}.

Perhaps more interestingly, we can show a reduction from $\sqfor$, to $\bllqsv$.
This enables us to get an alternative lower bound for $\bllqsv$, by proving a $\mathsf{PH}$ lower bound for $\sqfor$. Crucially, it is not hard to see that $\sqfor$ is in $\mathsf{BQP}$, which makes it a strictly easier problem than $\bllqsv$, and hence our lower bound is stronger. $\sqfor$ also gives an alternative oracle separation of $\mathsf{BQP}$ and $\mathsf{PH}$, that has distinct advantages from previous similar results.

One advantage of $\sqfor$ is that it can be solved by a quantum algorithm that only makes {\em classical} queries to $g$. This property plays a critical role in the reduction between $\sqfor$ and $\bllqsv$.
Furthermore, this property provides another benefit. It enables us to consider a more realistic instantiation of these oracles, especially when compared to \cite{DBLP:conf/stoc/RazT19}, as we discuss next.

Recall that in \cite{DBLP:conf/stoc/RazT19}, an oracle separation $\mathsf{BQP}^O \not\subseteq \PH^O$ was demonstrated using the Forrelation problem. Given oracle access to two functions $f$ and $g$, the problem asks to distinguish between two scenarios: $(i)$ $f$ and $g$ are completely random and independent or $(ii)$ the Fourier transform of $f$ correlates with $g$. The Forrelation problem can be solved via a simple efficient quantum algorithm that queries both $f$ and $g$ in superposition, but cannot be solved by a $\PH$ machine. One drawback of this problem is that there's seemingly no efficient implementation of the oracles under which the problem remains non-trivial. In particular, if $f$ is a random  small circuit, then it seems any small circuit $g$ doesn't correlate with $f$'s Fourier transform. If one settles for having small circuits only for $f$ but not $g$ (or vice versa), then the problem remains non-trivial, but even in this scenario, the problem seems far from practical as a quantum algorithm would need to query an exponentially large oracle $g$ in superposition. $\sqfor$ is defined as follows:
 \begin{problem} [simplified $\sqfor$] \label{prob:simplifiedsqfor}
   Given oracle access to Boolean functions $f,g: \{0, 1\}^n \to \{\pm 1\}$, distinguish the following two cases:
\begin{enumerate}
  \item $f$ and $g$ are both chosen uniformly at random.
  \item $f$ is chosen uniformly at random and $g$ is the indicator of the heavy Fourier coefficients of $f$.
\end{enumerate}
\end{problem}
A Fourier coefficient is heavy if the coefficient squared is larger than its expected value $\frac{1}{N}$.
The main difference between $\sqfor$ and $\mathsf{Forrelation}$ is that $g$ is correlated with \emph{squared} Fourier coefficients rather than the sign of Fourier coefficients. Clearly, a simple quantum algorithm that queries $f$ in superposition, measures in the Hadamard basis, and then queries $g$ classically solves the problem.
Notice that this algorithm is a combination of a quantum and classical algorithm. This means that in principle, if $f$ has a small circuit, then the quantum part of the algorithm can implemented efficiently, followed by a much longer classical ``post-mortem'' part that queries $g$. It is important to note that in both the $\mathsf{Forrelation}$ and $\sqfor$ problems, $g$ generally is not efficiently computable, even when $f$ can be succinctly represented.

In particular, following~\cite[Section~7.4]{ac}, this suggests choosing the oracle $f$, the only function that is queried in superposition, from a pseudorandom family of functions. While this manuscript does not delve into these aspects, it is worth noting a distinction: our problem is a decision problem, contrasting with the problems presented in~\cite[Section~7.4]{ac} -- known as Fourier Fishing and Fourier Sampling -- which are sampling problems.

\paragraph{Lower Bound for $\bllqsv$ via $\sqfor$}
 We summarize the second $\mathsf{PH}$ lower bound for $\bllqsv$ in two steps. First, we prove a $\mathsf{PH}$ lower bound for $\sqfor$, and then we show a reduction from $\sqfor$ to $\bllqsv$. 

The first step of the proof proceeds by carefully incorporating the \emph{squared} Fourier coefficients in the analysis of Raz and Tal~\cite{DBLP:conf/stoc/RazT19}.
Recall that the analysis of \cite{DBLP:conf/stoc/RazT19} proceeds by looking at a multivariate Gaussian as a stopped Brownian motion. This allows them to write the expected value of a multilinear function over the multivariate Gaussian as a telescopic sum, and bound each term in the sum independently. 
Finally, they show that it is possible to convert this continuous distribution to a discrete distribution $\mathcal{D}$ while preserving the expected value by truncating each coordinate to the interval $[-1,1]$ followed by randomized rounding.

A feature of Gaussians, which was heavily exploited in the analysis of \cite{DBLP:conf/stoc/RazT19}, is that the sum of $t$ independent Gaussians is again a Gaussian.
When dealing with squares of Gaussians, this is no longer true. Nevertheless, if $Z\sim \mathcal{N}(0,\sigma^2)$ then one can still write $Z^2 - \E[Z^2]$ as the result of a martingale composed of many small steps. To present $Z^2 - \E[Z^2]$ as a martingale, first observe that if $X$ (the ``past'') and $\Delta$ (the ``next step'') are two independent zero-mean Gaussians, then $$(X+ \Delta)^2 - \E[(X+\Delta)^2] = (X^2 - \E[X^2])   + \Delta^2 + 2\Delta \cdot X - \E[\Delta^2]$$
and thus $\E[\Delta^2 + 2\Delta \cdot X - \E[\Delta^2]\;|\;X] = 0$. Since a Gaussian $Z\sim \mathcal{N}(0,\sigma^2)$ can be written as a sum of $t$ independent Gaussians $Z_1, \ldots, Z_t\sim \mathcal{N}(0,\sigma^2/t)$,  we see that $Z^2-\E[Z^2]$ may be written as sum of $t$ small steps, each of the form $Z_i^2 + 2Z_i \cdot Z^{< i} - \E[Z_i^2]$, with expected value $0$ even conditioned on the past $Z^{<i} = Z_1 + \ldots +Z_{i-1}$.
The rest follows from the analysis of \cite{DBLP:conf/stoc/RazT19}, which crucially relies on representing the final distribution as a sum of many small steps, each with an expected value of $0$
As a consequence, we obtain a new oracle $O$ so that 
$\mathsf{BQP}^O\not\subseteq\mathsf{PH}^O$. 

In our next step, the reduction from $\sqfor$ to black-box $\mathsf{LLQSV}$, we make use of the fact that $g$ is not queried in superposition by the quantum algorithm. This observation follows by rejection sampling according to $g$; intuitively, this allows one to sample heavy elements of the Fourier spectrum and therefore recreate a long list of samples. So far, we have worked with one instance of the $\sqfor$ problem. However, we prove that our lower bound easily extends to the case that we have oracle access to an (exponentially) long list of samples in $\sqfor$. Consequently, this shows that the $\bllqsv$ problem is not in $\mathsf{PH}$.

In our previous discussion, we made two oversimplifications which obscure the technical challenges in our argument.
First, unlike in our simplified $\sqfor$ problem, the actual $\sqfor$ problem concerns a distribution over functions $f$ and $g$ that are obtained by taking multivariate Gaussians and randomized rounding them to have support over the Boolean cube.  Second, rejection sampling according to $g$ samples a uniformly random heavy squared Fourier coefficient of $f$, which differs from Fourier transform distribution itself; we address these differences in detail in Section \ref{subsec:llf}. Intuitively, to fix this issue, recall that the goal of these long list hardness results is to generate certifiable random numbers. We show that this task is achievable even when the quantum algorithm can sample heavy outcomes from the distribution generated by rejection sampling (i.e., we can think of this as a new benchmark test much like $\mathsf{HOG}$, but with respect to a new distribution). Accordingly, we show that the same quantum algorithm that samples according to the Fourier transform distribution also outputs heavy outcomes from the rejection sampling distribution on average. To prove this, we need to combine the analysis of quantum algorithm's success probability for $\sqfor$ with tail bounds on the number of ``heavy'' coefficients.

\section{Low-Entropy sampling bounds against \BQP}

In this section, our main goal is to prove that $\mathsf{FHOG}$ can be used as proof of min-entropy in the Fourier sampling setting, and the trivial Fourier sampling algorithm is almost optimal. Since we are checking a statistical property of the output distribution for our test, and the prover can be adversarial, we need to directly analyze the multiple sample version of the problem to deal with parallel repetition, and avoid making assumptions about the algorithm being memoryless. Let $F = f_1, \ldots, f_m$ be a list of Boolean functions $f_i: \{0, 1\}^n \rightarrow \{\pm 1\}$ and $\hat{f_i}(z) = \E_{x\in\{0,1\}^n}[ f_i(x) \cdot (-1)^{z \cdot x}]$. We define Fourier HOG as follows:
\begin{problem}[$\mathsf{FHOG}$] \label{prob:HOG}
    Given oracle access to $F$, output a list of outcomes $Z = z_1, \ldots, z_m$, $z_i \in \{0, 1\}^n$ such that:
    $$ |\mathcal{H}(F,Z) - T_\mathcal{H}\cdot m| \leq c\cdot m $$  
    $$  |\mathcal{L}(F,Z) - T_\mathcal{L} \cdot m| \leq c \cdot m $$  
\end{problem}
\noindent where $\mathcal{H}$ and $\mathcal{L}$ are defined by: 
$$\mathcal{H}(F,Z) = \left\{i \in [m] \;\middle|\; \tfrac{1}{N} <  \hat{f}_i(z_i)^2 \leq \tfrac{4}{N} \right\},
\quad  
\mathcal{L}(F,Z) = \left\{i \in [m] \;\middle|\; \hat{f}_i(z_i)^2 \leq \frac{1}{N} \right\},$$
and 
$T_\mathcal{H} = \E_{Y\sim \mathcal{N}(0,1)}[Y^2 \cdot \mathds{1}_{\{1<Y^2\le 4\}}] \approx 0.54$,  
$T_\mathcal{L} =  \E_{Y\sim \mathcal{N}(0,1)}[Y^2 \cdot \mathds{1}_{\{Y^2\le 1\}}] \approx 0.2$ 
and $c=0.02$. As we show in section \ref{subsec:he-algorithm}, these constants ($T_\mathcal{H}$, $T_\mathcal{L}$ and $c$) are chosen in a way that a Chernoff bound can show that the simple quantum algorithm that samples according to the Fourier spectrum generates a valid solution for $\mathsf{FHOG}$ with high probability over the choice of functions and outcomes. 
\subsection{Worst-case, constant min-entropy: A simple BBBV argument}
Let
$Z_F = z_{f_1} \ldots z_{f_m}$ ($z_{f_i} \in \{0, 1\}^n$) be the most-probable output of $A^F$ (say, the first in lexicographical order if there are multiple such outputs).
Then, let
$$P_{f_i} = \{x \in \{0,1\}^n : f_i(x) \cdot (-1)^{z_{f_i} \cdot x} = \sgn(\hat{f_i}(z_{f_i}))\}.$$ In words, $P_{f_i}$ contains the inputs $x$ that contribute to $\hat{f}_i(z_{f_i})^2$ being large; note that $P_{f_i}$ must have at least $N/2$ items. Second, let $E(f_i)$ be the set of functions that differ from $f_i$ in exactly $\sqrt{N}/2$ places that are all contained in $P_{f_i}$. Let $Q \subseteq [m]$, we can also define 
$$E_Q(F) = \{f'_1 \ldots f'_m \in \{\pm 1\}^N \mid \forall i \in Q: f'_i \in E(f_i) \land \forall i \not\in Q: f'_i = f_i\}$$
$E_Q(F)$ is the set of all list of functions that differ with $F$ on elements of $Q$, and for each $i$ the new function is chosen from $E(f_i)$.
\begin{lemma}
\label{lemma:hybrid}
Let $A^F$ be a quantum algorithm that makes $T$ queries to $F$. For a $1-N^{-1/4}$ fraction of $F' \in E_Q(F)$, we have $\delta(A^F, A^{F'}) \le 2 \cdot |Q| \cdot T N^{-1/8}$ 
(where $\delta$ denotes the statistical distance between two distributions).
\end{lemma}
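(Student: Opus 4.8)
The plan is to run a Bennett--Bernstein--Brassard--Vazirani (BBBV) hybrid argument, but instead of perturbing a single worst-case set of coordinates we average over the random choice of which $\sqrt N/2$ coordinates inside $P_f$ to flip. The large size of $P_f$ then guarantees that a random such perturbation has tiny total query magnitude, and the BBBV bound converts this into closeness of the output distributions.

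First I would set up the query-magnitude bookkeeping. For the execution of $A$ on the oracle $f$, let $|\phi_t\rangle$ be the state immediately before the $t$-th query (with $t = 1, \dots, T$ and $T$ the number of oracle queries $A$ makes), let $\Pi_x$ project the query register onto $|x\rangle$, and set $q_x := \sum_{t=1}^{T} \|\Pi_x|\phi_t\rangle\|^2$. Since each $|\phi_t\rangle$ is a unit vector and $\sum_x \Pi_x = I$, we have $\sum_{x} q_x = T$. The BBBV hybrid lemma then says that for any oracle $f'$ with $\Delta(f,f') = S$,
$$ \big\| \, |\psi^T_f\rangle - |\psi^T_{f'}\rangle \, \big\| \ \le\ 2\sqrt{\, T \cdot \textstyle\sum_{x\in S} q_x \,},$$
and since the statistical distance between the output distributions of $A^f$ and $A^{f'}$ is at most the trace distance of the final states, which is at most this Euclidean distance (after fixing the global phase), it suffices to show that $\sum_{x\in \Delta(f,f')} q_x \le T N^{-1/4}$ for a $1-N^{-1/4}$ fraction of $f'\in E(f)$.

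Next I would exploit that $P_f$ is large. Since $|P_f|\ge N/2$ and $\sum_x q_x = T$, the average of $q_x$ over $x\in P_f$ is at most $2T/N$. Choosing $f'\in E(f)$ uniformly at random is the same as choosing a uniformly random $\Delta\subseteq P_f$ with $|\Delta| = \sqrt N/2$, so by linearity of expectation
$$ \E_{\Delta}\Big[\textstyle\sum_{x\in\Delta} q_x\Big] \ =\ \frac{\sqrt N/2}{|P_f|}\sum_{x\in P_f} q_x \ \le\ \frac{\sqrt N}{2}\cdot\frac{2T}{N} \ =\ \frac{T}{\sqrt N}.$$
Markov's inequality gives $\Pr_\Delta[\sum_{x\in\Delta} q_x > T N^{-1/4}] \le (T N^{-1/2})/(T N^{-1/4}) = N^{-1/4}$. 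Hence for at least a $1-N^{-1/4}$ fraction of $f'\in E(f)$ we have $\sum_{x\in\Delta(f,f')} q_x \le T N^{-1/4}$, and for each such $f'$ the displayed BBBV bound yields $\delta(A^f, A^{f'}) \le 2\sqrt{T\cdot T N^{-1/4}} = 2T N^{-1/8}$, as claimed.

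The only point genuinely requiring care is the exact form of the BBBV estimate: it must be invoked with the query magnitudes $q_x$ measured along the run on a single fixed oracle (I use $f$), for the simultaneous flip of an entire coordinate set $S$ rather than one query position at a time, and with the factor of $2$ that arises because flipping a $\pm1$ value changes a phase-oracle amplitude by $2|\alpha|$ rather than annihilating it. This is precisely the hybrid argument of \cite{BBBV97}, so it is bookkeeping rather than a new idea; the large-set averaging and the Markov step are routine, and I do not expect any serious obstacle beyond getting these constants and exponents to line up.
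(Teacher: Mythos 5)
Your proof is correct and follows essentially the same route as the paper: bound the expected query magnitude on a random $\sqrt N/2$-subset of $P_f$ using $|P_f|\ge N/2$, apply Markov to get the $1-N^{-1/4}$ fraction, and then convert small query magnitude on the flipped set into closeness of output distributions via the BBBV hybrid bound. The only cosmetic difference is that you cite the BBBV estimate in a pre-packaged form $2\sqrt{T\sum_{x\in S}q_x}$, whereas the paper unfolds the Cauchy--Schwarz step explicitly before summing the hybrid errors; the computations and constants match.
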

\begin{proof}
Our proof closely mirrors the hybrid argument of \cite{BBBV97}. At the $t$-th query to a phase oracle for $F$, let $\alpha_{t,x,w}^i$ be the amplitude on function $f_i$, query $x$, and workspace register $w$.
The ``query magnitude" on $\bigcup_i P_{f_i}$ is
$$\sum_{t=1}^T \sum_{i \in Q} \sum_{\substack{w\\x \in P_{f_i}}} |\alpha_{t,x,w}^i|^2 \le T$$
where $t$ is the time step and $w$ is the workspace register. Then for a random set $S = \bigcup_{i \in Q} S_i$, where $S_i \in \binom{P_{f_i}}{\sqrt{N}/2}$ defining a random function $f_i'(x) = f_i(x) \cdot (-1)^{1_{S_i}(x)} \in E(f_i)$,
$$\E\limits_S\left[\sum_{t=1}^T \sum_{i \in Q}\sum_{\substack{w\\x \in S_i}} |\alpha_{t,x,w}^i|^2\right] \le \frac{|Q|\cdot T \sqrt{N}}{2 \min_i |P_{f_i}|} \le \frac{|Q|\cdot T}{\sqrt{N}}.$$
By Markov's inequality,
$$\Pr_S\left[\sum_{t=1}^T \sum_{i \in Q}\sum_{\substack{w\\x \in S_i}} |\alpha_{t,x,w}^i|^2 \ge \frac{|Q|\cdot T}{N^{1/4}}\right] \le \frac{1}{N^{1/4}},$$
So for the remainder of the proof, we assume that
$$\sum_{t=1}^T \sum_{i \in Q} \sum_{\substack{w\\x \in S_i}} |\alpha_{t,x,w}^i|^2 \le \frac{|Q|\cdot T}{N^{1/4}}.$$
By Cauchy-Schwartz,
$$\sum_{t=1}^T \sqrt{\sum_{i \in Q}\sum_{\substack{w\\x \in S_i}} |\alpha_{t,x,w}^i|^2} \le \frac{|Q| \cdot T}{N^{1/8}}.$$
Then by a hybrid argument, we can bound the total variation distance as
\begin{align*}
    \delta(A^F, A^{F'}) \le ||A^F \ket{0} - A^{F'} \ket{0}||_2 \le \frac{2\cdot|Q|\cdot T}{N^{1/8}}.\tag*{\qedhere}
\end{align*}
\end{proof}
In particular, this lemma implies that $\Pr[A^{F'}=z_F] \ge \Pr[A^F=z_F] - 2 \cdot |Q|\cdot T N^{-1/8}$ for a $1-N^{-1/4}$ fraction of $F' \in E_Q(F)$. From this it is already easy to see that, for any algorithm $A$ with very-low min-entropy (so that a decent majority of the time $A^F$ outputs one specific $Z$), there exist functions $F'$ for which $A^{F'}$ will fail to pass $\mathsf{FHOG}$ on $F'$. This follows since given a pair $(F, Z)$ such that $Z$ passes $\mathsf{FHOG}$, we can simply set $Q$ to contain the index $i$ of all heavy outcomes $z_{f_i}$, and convert all of them randomly to light elements by picking a random function from $E(f_i)$. 
Next, we show that a similar min-entropy bound holds even when the algorithm only works on \emph{average} over the choice of functions.

\subsection{Average-case, constant min-entropy: Vertex degree in a bipartite graph}
\noindent To extend this to an average-case statement about a list of random Boolean functions, first we notice that a direct application of the hybrid argument fails. Even though the worst-case argument can construct a large set of ``Bad'' functions (in which $A^F$ is not a solution for $\mathsf{FHOG}$) from any list of ``Good'' functions (where $A^F$ passes $\mathsf{FHOG}$), the size of all ``Bad'' lists might still be negligible compared to the size of ``Good'' functions. The concern is that many ``Good'' functions get pushed to the same small set of Bad ones. However, we show that under a careful double-counting argument, we can effectively bound the number of ``Good'' inputs that can produce any specific ``Bad'' input.

Let $\mathsf{Good}$ be the set of inputs that $A$ has constant min-entropy ($h = -\log(2/3)$) on, and the outcomes pass $\mathsf{FHOG}$:
\begin{align*}
\mathsf{Good}= \left\{F=f_1\ldots f_m \;\middle|\;  \begin{aligned} & |\mathcal{H}(F, Z_F)- T_\mathcal{H} \cdot m | \leq  c\cdot m \\ 
&\land |\mathcal{L}(F, Z_F) - T_\mathcal{L} \cdot m| \leq c\cdot m  \\
&\land \Pr[Z_F \sim A^F] \geq 2^{-h}\end{aligned}\right\}
\end{align*}
We let $\mathsf{Bad}$ be the set of inputs that the most probable outcome is ``pretty bad'', so most outcomes are light rather than heavy. Very specifically, we set $T^b_\mathcal{H} = T_\mathcal{H} - 0.26 \approx 0.28$ and $T^b_\mathcal{L} = T_\mathcal{L} + 0.26 \approx 0.46$ and define 
\begin{align*}
\mathsf{Bad}= \left\{F=f_1\ldots f_m \;\middle|\;  \begin{aligned} & |\mathcal{H}(F, Z_F) - T^b_\mathcal{H}\cdot m| \leq c\cdot m \\ 
&\land |\mathcal{L}(F, Z_F)- T^b_\mathcal{L} \cdot m| \leq c\cdot m  \\
&\land \Pr[Z_F \sim A^F] \geq 2^{-(h+0.01)}\end{aligned}\right\}
\end{align*}
We now show that $|\mathsf{Bad}|$ is exponentially larger than $|\mathsf{Good}|$, which implies that conditioned on $A$ being almost deterministic, $A$ passes $\mathsf{FHOG}$ with negligible probability (over the choice of random functions).
\newcommand{\Good}{\mathsf{Good}}
\newcommand{\Bad}{\mathsf{Bad}}
\begin{theorem}
\label{thm:entropy}
Suppose that $A$ makes at most $O(N^{1/9})$ queries to $F$. Then,
$$|\Bad| \ge \exp(\Omega(m))\cdot |\Good|$$
\end{theorem}
\begin{corollary}
\label{corollary:entropy}
Suppose that $A$ makes $O(N^{1/9})$ queries to $F$. Then, $\Pr_F \left[F \in \Good\right] \leq \exp(-\Omega(m))$.
\end{corollary}
The corollary simply follows since $\Pr_F\left[F \in \Good\right] + \Pr_F\left[F \in \Bad\right] \leq 1$. Note that we can choose $m$ to be any large $\poly(n)$, and make this probability as small as we need in next steps.
\begin{proof}[Proof of \Cref{thm:entropy}]
Consider the following bipartite graph on nodes $\Good$ and $\Bad$. 
Assume (without loss of generality) that $m$ is multiple of $50$ and hence $0.26 m$ is an integer. 
Draw an edge between $F \in \Good$ and $F' \in \Bad$ if $\exists Q, |Q| = 0.26m: F' \in E_Q(F)$ and $\delta(A^F, A^{F'}) \le 2 \cdot |Q| \cdot T N^{-1/8}$. 
By \Cref{lemma:hybrid}, there are at least
$$(1-N^{-1/4}) \cdot \sum_{F\in \Good, Q} |E_Q(F)| \ge (1-N^{-1/4}) \cdot \binom{0.52m}{0.26m} \binom{N/2+\sqrt{N}/2}{\sqrt{N}/2}^{0.26 \cdot m} |\Good|$$
edges incident on $\Good$
 since for $(F,F')$ with $H_\infty(A^F) \le h$ and $\delta(A^F, A^{F'}) \le 2 \cdot |Q| \cdot T N^{-1/8} = o(1)$ we must have $H_{\infty}(A^{F'}) \le h + o(1) \le h + 0.01$.
Moreover, if $(F, F')$ is an edge in the graph, since $\Pr[A^F = z_F] = 2^{-H_\infty(A^F)} \ge 2^{-h}\ge 2/3$ and $\delta(A^F, A^{F'})=o(1)$ then $A^F$ and $A^{F'}$ have the same majority output (i.e., $Z_F = Z_{F'}$). This allows us to bound the number of edges on $\Bad$ from above:
$$\binom{0.48m}{0.26m}\binom{N/2}{\sqrt{N}/2}^{0.26 m} |\Bad|$$
edges incident on $\Bad$. Combining these bounds, we find that
\begin{align*}
    \frac{|\Bad|}{|\Good|} &\ge (1-N^{-1/4}) \cdot \frac{
    \binom{0.52m}{0.26m}
    \cdot \binom{N/2+\sqrt{N}/2}{\sqrt{N}/2}^{0.26 m} }
    {\binom{0.48m}{0.26m} \cdot \binom{N/2}{\sqrt{N}/2}^{0.26 m}
    } 
\ge (1-o(1)) \cdot \left(\frac{0.52}{0.48}\right)^{0.26 \cdot m} \cdot e^{\frac{1}{2}0.26 \cdot m}\qedhere
\end{align*}
\end{proof}
Intuitively, this theorem implies that the constant min-entropy portion of any quantum algorithm does not solve $\mathsf{FHOG}$ with high probability over the choice of functions. In other words, there is no quantum algorithm that has constant min-entropy on $\gg \exp(-m)$ fraction of inputs, if the heavy outcomes are good solutions to Problem \ref{prob:HOG}.  Next, we use this fact to extend this lower bound even to quantum algorithms with linear min-entropy by giving a reduction to the constant min-entropy version.

\subsection{Average-case, linear min-entropy: A general reparameterization trick}
Notice that we have an exponential query bound for a constant min-entropy algorithm. Using exponentially many queries, we can approximate the output distribution of a linear min-entropy algorithm. This allows us to trade our exponential query lower bound against \emph{constant} min-entropy for a smaller-exponential query lower bound against \emph{linear} min-entropy.

\begin{theorem}
\label{thm:linear-entropy}
Suppose $A$ is any quantum algorithm making $O(N^{1/100})$ queries to an oracle, and let $F = f_1, \ldots, f_m : \{0,1\}^n \to \{\pm1\}$ be a list of random Boolean functions, with $m \ge n$. For any polynomial $\delta(1/n)$, there exists a negligible function $\negl(n)$ such that the following holds:
$$\Pr_F \left[\Pr_A[A^F \text{ passes } \mathsf{FHOG}] \geq \delta(1/n) \land \Pr_{Z \sim \chi}[q_Z \geq 2^{-n/40}] \geq \negl(n) \right] \leq \negl(n) $$
where $\chi$ is the output distribution of $A^F$ conditioned on passing $\mathsf{FHOG}$ and $q_Z$ is the probability to sample $Z$ according to $\chi$.
\end{theorem}
In words, the theorem states that it is very unlikely to draw a collection of functions $F$ such that $A^F$ passes $\mathsf{FHOG}$ with non-negligible probability and the output of $A^F$ has high enough probability mass on elements that appear with probability at least $2^{-n/40}$. Note that it is crucial for us to analyze $\chi$ -- the output distribution of $A$ conditioned on $F$ -- and not the output distribution of $A$ averaged over random oracles. This implies that the output distribution of $A$ must be statistically close to a high min-entropy distribution even when conditioned on the choice of the random oracle, and hence, the generated entropy is inherently from the quantum algorithm.
\begin{proof}
Fix $\negl(n) = \gamma$, and suppose that there exists a quantum algorithm $A$ that does not satisfy the condition. This implies that on $1/\poly(n)$ fraction of inputs $A^F$ passes the test with probability at least $1/\poly(n)$ and that  $\Pr_{Z \sim \chi}[q_{Z} \geq 2^{-n/40}] \geq \gamma$.
First select randomness $r$ (to be fixed later in the algorithm), which will be used to generate an algorithm $A'_r$ that satisfies the conditions of \Cref{corollary:entropy}. Intuitively, we use $r$ to extract the inherent randomness of the quantum algorithm, and by fixing it later in the process, we achieve the desired low min-entropy property.
This algorithm will be nearly deterministic on a large fraction of possible inputs and goes as follows.
\begin{enumerate}
\item \label{step:1} Take $O((mn)^2 \cdot 2^{n/10})$ samples from $A$ in order to get empirical estimates $\tilde{p}_Z$ of  all of the probabilities $p_Z = \Pr_A[A^F\text{~Samples $Z$}]$, for all $Z \in (\{0,1\}^n)^m$, to within additive error $2^{-n/20}$ with high probability.
\item Pick a random threshold $\tau$ which is an integer multiple of $2^{-n/20}$ in the range $[\eps, 2\eps]$ for $\eps = 2^{-n/30}$.
\item Denote by $\mathcal{P} = \{Z: \tilde{p}_Z \ge \tau\}$ the set of ``probable'' outcomes. Consider the renormalized distribution $\tilde{p}_{Z| \mathcal{P}}$ over only the probable elements, i.e., for any $Z \in \mathcal{P}$ let $\tilde{p}_{Z| \mathcal{P}} = \frac{\tilde{p}_{Z}}{ \sum_{Z\in \mathcal{P}} p_Z}.$

\item \label{step:2} Use $r$ to rejection sample according to the probability distribution $\tilde{p}_{Z| \mathcal{P}}$.
Concretely, view $r$ as a list of uniformly random points $(\vec{x}_1,y_1), (\vec{x}_2,y_2), \dots \in (\{0,1\}^{n})^m \times [0,1]$. Output the first $\vec{x}_i$ such that $y_i$ is less than the estimated probability for $\vec{x}_i$.
We denote the output of this procedure by $\RejSamp(\tilde{p}_{Z|\mathcal{P}},r)$. Notice that since we fix $r$ later in the algorithm, $r$ (and the precision of $y_i$) can be as large as we need.
\end{enumerate}
We need to show that for some fixed choices of $(\tau, r)$, $A_{\tau,r}'$ is pseudo-deterministic and  succeeds $\mathsf{FHOG}$ with probability $\gamma/\poly(n)$. Then we can invoke \Cref{corollary:entropy}.

Fix an $F$ where the condition is not satisfied. First we notice that if $q_Z = \Pr[Z \sim \chi] \geq 2^{-n/40}$, then $\Pr_A[Z \sim A^F \wedge A^F \text{~solves~}\mathsf{FHOG} ] \geq 2^{-n/40} \cdot 1/\poly(n) \ge  3\cdot 2^{-n/30}$. 
This is clear since $\Pr_A[A^F \text{ passes } \mathsf{FHOG}] \geq \frac{1}{\poly(n)}$.
Thus, 
\begin{equation}\label{eq:tau}
\sum_{\substack{Z:p_Z\geq 3\cdot 2^{-n/30},\\ Z \text{ solves } \mathsf{FHOG} }} p_Z \ge \sum_{Z:p_Z\geq 3\cdot 2^{-n/30}} q_Z \cdot \delta(1/n) \ge  \gamma/\poly(n).
\end{equation}
So for any such $F$, we have $\frac{\gamma}{\poly(n)}$ probability mass on $Z: p_Z\geq 3\cdot 2^{-n/30}$ that also solve $\mathsf{FHOG}$.

Then, we note that by Chernoff bound the following event $\mathcal{E}$ happens with probability at least $1-\exp(-mn)$: for all $Z$, $|\tilde{p}_Z - p_Z| \le 2^{-n/20}$. 

Consider the process of rejection sampling over the true distribution $p$ and over the estimated distribution $\tilde{p}$.
Our first claim is: 
\begin{claim}\label{claim:st_dist}
Conditioned on $\mathcal{E}$, with $1-\exp(-\Omega(n))$ probability over the choice of $\tau$, the distributions $p_{Z|\mathcal{P}}$ and $\tilde{p}_{Z|\mathcal{P}}$ are $\exp(-\Omega(n))$-close in statistical distance.
\end{claim}
We defer the proof to Appendix \ref{ap:st_dist}.
Fix a ``good'' $\tau$ that satisfies that $p_{Z|\mathcal{P}}$ and $\tilde{p}_{Z|\mathcal{P}}$ are $\exp(-\Omega(n))$-close in statistical distance.
 This means that the output of Rejection Sampling will be identical for most choices of $r$ by the following claim.
\begin{claim}
If $\mathcal{D}$, $\mathcal{D}'$ are two distributions over the same finite domain $\mathcal{X}$ with statistical distance $\varepsilon$ then  $\Pr_{r}[\RejSamp(\mathcal{D},r) \neq \RejSamp(\mathcal{D}',r)]  = 2\varepsilon/(1+\varepsilon)$
\end{claim}
\begin{proof}
The probability that $\RejSamp(\mathcal{D},r)\neq \RejSamp(\mathcal{D}',r)$ is exactly the probability that we land in one of the ``bad'' regions $(\min\{\mathcal{D}(x),  \mathcal{D'}(x)\}, \max\{\mathcal{D}(x),  \mathcal{D'}(x)\}]$ for $x\in \mathcal{X}$ before reaching one the ``good'' region $[0,\min\{\mathcal{D}(x),  \mathcal{D'}(x)\}]$ for $x\in \mathcal{X}$.
Indeed, the overall area of the bad regions equals twice the statistical distance between $\mathcal{D}$ and $\mathcal{D}'$.
The overall area of the good regions equals $1-\eps$.
Thus the probability of reaching a bad region before reaching a good region is $2\eps/(2\eps + 1-\eps) = 2\eps/(1+\eps)$.
\end{proof}

Note that for \emph{any} $\tau$ whatsoever,
$$\sum_{\substack{Z:p_Z\geq \tau,\\ Z \text{ solves } \mathsf{FHOG} }} p_Z \ge \gamma/\poly(n)$$ using Equation~\eqref{eq:tau} and the fact that $\tau < 3\cdot 2^{-n/20}$.
Furthermore, under the event $\mathcal{E}$, we have
$$ 
\sum_{\substack{Z:\tilde{p}_Z\geq \tau,\\ Z \text{ solves } \mathsf{FHOG} }} \tilde{p}_Z \ge
\sum_{\substack{Z:p_Z\geq 3\cdot 2^{-n/30},\\ Z \text{ solves } \mathsf{FHOG} }} (p_Z-2^{-n/20}) \ge 
\gamma/\poly(n) - 2^{-\Omega(n)} \ge \gamma/\poly(n), $$
as well.

Overall we get that for any ``good'' $F$, with high probability $\mathcal{E}$ and $\tau$ is ``good''. Conditioned on $\mathcal{E}$ and $\tau$ is ``good'', we  have that $\RejSamp(\tilde{p}_{Z|\mathcal{P}},r) = \RejSamp(p_{Z|\mathcal{P}},r)$ with high probability.
Furthermore, under such conditioning $\RejSamp(\tilde{p}_{Z|\mathcal{P}},r)$  samples an element that passes $\mathsf{FHOG}$ with probability at least $\gamma/\poly(n)$.
We say that $(\tau,r)$ are deterministically-good if 
$$\Pr_{F \in \text{``Good Fs''},A}[\RejSamp(p_{Z|\mathcal{P}},r) \neq \RejSamp(\tilde{p}_{Z|\mathcal{P}},r)] \le \negl(1/n)$$
Indeed we see that most choices of $(\tau, r)$ are deterministically-good.
Let $$\delta := \E_{\tau, r}[\Pr_{F \in \text{``Good Fs''},A}[\RejSamp(\tilde{p}_{Z|\mathcal{P}},r) \text{ passes } \mathsf{FHOG}]] \ge \gamma /\poly(n)$$
We say that $(\tau,r)$ are $\mathsf{FHOG}$-good if 
$$\Pr_{F \in \text{``Good Fs''},A}[\RejSamp(\tilde{p}_{Z|\mathcal{P}},r) \text{ passes }\mathsf{FHOG}] \ge \delta/2.$$
By a simple Markov's inequality, we see that at least $\delta/2$ fraction of the choices of $(\tau,r)$ are $\mathsf{FHOG}$-good.
Thus, there exists a choice of $(\tau, r)$ that is both $\mathsf{FHOG}$-good, and deterministically-good. However, this leads to a contradiction of Corollary \ref{corollary:entropy}.
Indeed, we see that for this specific $(\tau,r)$ we have that at least $1/\poly(n)$ fractions of $F$ are ``Good'' in the sense that $A'_r$ outputs the constant value $\RejSamp(p_{Z|\mathcal{P}},r)$ with probability close to $1$, and at the same time $A'_r$ solves $\mathsf{FHOG}$ with probability at least $\gamma/\poly(n)$.
This shows that $\Pr[F \in \Good] \ge \gamma/\poly(n)$ which is a contradiction to Corollary \ref{corollary:entropy} for any $\gamma = 2^{-o(m)}$.
\end{proof}

\subsection{Success of the high-entropy algorithm}\label{subsec:he-algorithm}
We had better check that we can solve $\mathsf{FHOG}$ with a quantum algorithm with no restrictions on the min-entropy. Otherwise, our test would be impossible even for an honest algorithm! The simple quantum algorithm first prepares the following state using one query to the oracle for each $f_i$:
$$\frac{1}{\sqrt{N}} \sum_{x \in \{0, 1\}^n} f_i(x) \ket{x}$$
And then applies a Hadamard gate to all qubits and measures them in the computational basis. Here we verify that this usual high min-entropy algorithm solves $\mathsf{FHOG}$. Similar calculations can also be found in Lemma 8 of \cite{aaronsonbqpph}.
\begin{fact}
\label{fact:calcs}
If $A$ is the usual quantum algorithm for sampling from $\hat{f}_i$, then with probability $1-\exp(\Omega(m))$ over the choice of functions and the internal randomness of $A$:
$$|\mathcal{H}(F, A^F) - T_\mathcal{H}\cdot m | \leq c\cdot m, \qquad|\mathcal{L}(F, A^F) - T_\mathcal{L} \cdot m| \leq c \cdot m$$
\end{fact}
\begin{proof}
For each $z$, $\hat{f}(z)$ follows a normal distribution with mean 0 and variance $1/N$. So averaged over the choice of $f$, the probability that $A$ outputs a ``light'' outcome is given by:
\begin{align*}
\Pr[\hat{f}(A_f)^2\le 1/N] &= \E_f\left[\sum_{z : \hat{f}(z)^2 \le 1/N}\hat{f}(z)^2\right] = \sum_{z\in\{0,1\}^n} \E_f\left[ \hat{f}(z)^2 \cdot \mathds{1}_{\{\hat{f}(z)^2 \le 1/N\}} \right] \\
&\approx N \cdot \E_{X\sim \mathcal{N}(0,1/N)}[X^2 \cdot \mathds{1}_{\{X^2\le 1/N\}}] 
= \E_{Y\sim \mathcal{N}(0,1)}[Y^2 \cdot \mathds{1}_{\{Y^2\le 1\}}]
= T_\mathcal{L}.
\end{align*}
Similarly,
\begin{align*}
\Pr[1/N<\hat{f}(A_f)^2\le 4/N] &\approx  \E_{Y\sim \mathcal{N}(0,1)}[Y^2 \cdot \mathds{1}_{\{1<Y^2\le 4\}}] = T_\mathcal{H}
\end{align*}
Since $A$ repeats this $m$ times, a Chernoff bound followed by a union bound proves the claim.
\end{proof}
\subsection{Proof of min-entropy}
To specify a certified randomness protocol we consider a setting in which a prover can certify to a verifier that a bit string has been sampled from a distribution that is statistically close to uniformly random. We demand that the protocol is complete -- there exists a prover that fails with non-negligible probability over the random $F$, and is sound -- no dishonest prover can force the verifier to accept a heavily biased output with non-negligible probability. This is easily achievable if the verifier can access a high min-entropy source, since it can always use classical randomness extractors \cite{guv} to extract uniformly random bits.
Theorem \ref{thm:linear-entropy} and Fact \ref{fact:calcs} give us the tools to construct a proof of min-entropy protocol. Let $m = \text{poly}(n)$, and consider the following protocol:
\begin{enumerate}
    \item Given $F$, the prover attempts to pass $\mathsf{FHOG}$, and sends $Z = z_1, \ldots, z_m$ as proof.
    \item The verifier computes $|\mathcal{H}(F, Z)|$ and $|\mathcal{L}(F, Z)|$ (using exponential number of queries), and accepts if they are within the expected range.
\end{enumerate}

To show completeness, we use Fact \ref{fact:calcs}. This effectively shows that there exists a quantum algorithm that runs in polynomial time and finds a good solution that the verifier rejects with negligible probability.

The soundness follows from Theorem \ref{thm:linear-entropy}, which states that with high probability over the choice of $F$, the output distribution of any $\poly(n)$ query quantum algorithm -- conditioned on the verifier accepting -- is $\negl(n)$ close in total variation distance to a $\Omega(n)$ min-entropy distribution.
\newline

Note that this is similar to the definition of the min-entropy protocol defined in \cite{YK22} with two main differences. First, the verification is not efficient and requires an exponential number of queries to the oracle. Second, the soundness argument in \cite{YK22} requires that the output distribution must have high min-entropy with high probability over the choice of $F$. Our soundness requirement is less stringent and allows the distribution to be statistically close to a high min-entropy distribution. Using randomness extractors on a distribution that is close in total variation distance to a high min-entropy distribution would only affect the closeness of the random bits to the uniform distribution. As long as this statistical distance is negligible this minor difference does not affect the protocol.

\section{Black Box \texorpdfstring{$\mathsf{LLQSV}$}{llqsv}} \label{sec:maj}
So far we focused on how Fourier Sampling can independently be seen as a tool for certified random number generation in the QROM. In \cite{a-talk1}, Aaronson considers certified randomness in the white-box setting and is able to specify a protocol assuming the $\llqsv$ conjecture. To give evidence for $\llqsv$ in the following two sections we consider black-box variants of the $\llqsv$ conjecture. 

First, we formally define the black-box variant of $\llqsv$:
\begin{problem} [$\bllqsv$]
Given oracle access to $\mathcal{O}$, a list of pairs of functions $f_i: \{0, 1\}^n \rightarrow \{\pm 1\}$ and outcomes $s_i \in \{0, 1\}^n$ of length $m$, distinguish whether $\mathcal{O}$ was sampled according to distribution $\mathcal{U}$ or $\mathcal{D}$, where:
 \begin{itemize}
  \item $\mathcal{U}$: For each $i \in [m]$, $f_i$ and $s_i$ are chosen uniformly at random.
  \item $\mathcal{D}:$ For each $i \in [m]$, $f_i$ is a random Boolean function, and $s_i$ is sampled according to $\hat{f}_i^2$.
 \end{itemize}
\end{problem}
 
We observe there is a close connection between $\bllqsv$ and the $\mathsf{MAJORITY}$ problem. To see this we notice that in either case the probability of sampling a pair $(f, s)$ only depends on the magnitude of the Fourier coefficient of $f$ at $s$, or equivalently $|h(f\cdot\chi_s)-N/2|$, where $h$ is the hamming weight and $\chi_s$ is the $s$-Fourier character, i.e., $\chi_s(x) = (-1)^{x\cdot s})$ (For simplicity we also refer to the $2^n$ bit string generated by the truth table of functions as $f$).
This allows us to write both distributions as a linear combination of simpler distributions $\mathcal{U}_d^N$ -- the uniform distribution over all strings of hamming weight $N/2+d$ or $N/2-d$. 
To provide evidence for $\mathsf{QCAM}$ hardness, we unconditionally prove lower bounds against $\mathsf{BQP}$ and $\mathsf{PH}$ (And hence, against $\mathsf{AM}$ \cite{BHZ87}). 

First, we notice an alternative way to sample from the distribution $\mathcal{D}$ that makes our analysis easier. Since $\forall z\in \{0, 1\}^n: \Pr_\mathcal{D}[s_i = z] = \frac{1}{N}$, we can sample from the joint distribution of a random function and a sample according to the Fourier distribution by first sampling a uniformly random outcome $s_i$, and then picking a random function $f_i$ weighted according to $\hat{f}_i(s_i)^2$. Both of our results in this section follow two observations. Let $\mathcal{U}_d^N$ be the uniform distribution over the following set:
 $$\{S \in \{0, 1\}^N | h(S) = N/2 - d \lor h(S) = N/2 + d\}$$
 First, for both distributions $\mathcal{U}$ and $\mathcal{D}$ the distribution over $f_i \cdot \chi_{s_i}$ is a linear combination of $\mathcal{U}_0^N, \mathcal{U}_1^N, \ldots, \mathcal{U}_{N/2}^N$. In the uniform case, this is clear since $f_i \cdot \chi_{s_i}$ is a uniformly random string. For $\mathcal{D}$, hamming weight of $f_i \cdot \chi_{s_i}$ specifies the Fourier coefficient $\hat{f}_i(s_i)$, so all strings with hamming weight $N/2 + d$ or $N/2 - d$ occur with equal probability. So, we can write both probability distributions as:
 $$\sum_{D\coloneqq (d_1, \ldots, d_m) \in [N/2+1]^m} p_D \mathcal{U}_{d_1} \times \ldots \times \mathcal{U}_{d_m} $$
 For some set of coefficients where $\sum_D p_D = 1$. Let us call these distributions $\chi_\mathcal{D}$ and $\chi_\mathcal{U}$ respectively. 
 
 Next, we use the fact that in both distributions $\mathcal{U}$ and $\mathcal{D}$ the marginal distribution over the functions is uniform. Thus, with high probability, the Fourier coefficients are bounded:

 \begin{theorem} \label{th:typical}
  Given a random Boolean function $f: \{\pm 1\}^n \rightarrow \{0, 1\}$, the probability that a squared Fourier coefficient of $f$ is larger than $\frac{p(n)^2}{N}$ is at most $2 \exp(\frac{-p(n)^2}{6} + \ln N)$.
 \end{theorem}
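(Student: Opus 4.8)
The plan is to prove this by concentration of a single Fourier coefficient, a union bound over the $N=2^n$ coefficients, and a short two-case estimate to bring the bound into exactly the stated form. Throughout, write $\hat{f}(z)=\frac1N\sum_{x\in\{0,1\}^n} f(x)(-1)^{z\cdot x}$ for a uniformly random $f:\{0,1\}^n\to\{\pm1\}$, so that a ``large squared Fourier coefficient'' means $\hat f(z)^2>p(n)^2/N$ for some $z$.

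\emph{Step 1 (a single coefficient).} Fix $z\in\{0,1\}^n$. Since the values $f(x)$ are i.i.d.\ uniform in $\{\pm1\}$ and multiplying each by the fixed sign $(-1)^{z\cdot x}$ leaves their joint distribution unchanged, $N\hat{f}(z)=\sum_x f(x)(-1)^{z\cdot x}$ is distributed exactly as a sum of $N$ independent Rademacher variables. A standard Chernoff--Hoeffding bound, applied with deviation $t=p(n)\sqrt{N}$, gives
\[
\Pr_{f}\!\left[\hat{f}(z)^2 > \tfrac{p(n)^2}{N}\right]
= \Pr_{f}\!\left[\,\bigl|N\hat{f}(z)\bigr| > p(n)\sqrt{N}\,\right]
\le 2\exp\!\left(-\tfrac{p(n)^2}{2}\right).
\]

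\emph{Step 2 (union bound and reshaping).} Taking a union bound over all $N$ frequencies $z$,
\[
\Pr_{f}\!\left[\exists z:\ \hat{f}(z)^2 > \tfrac{p(n)^2}{N}\right]
\le 2N\exp\!\left(-\tfrac{p(n)^2}{2}\right)
= 2\exp\!\left(\ln N - \tfrac{p(n)^2}{2}\right).
\]
It then remains to check this is at most $2\exp\!\left(-p(n)^2/(6\ln N)\right)$. If $p(n)^2\ge 4\ln N$, then since $\ln N\ge \ln 2>\tfrac23$ one verifies $\ln N-\tfrac{p(n)^2}{2}\le \tfrac{p(n)^2}{2}\bigl(\tfrac{1}{3\ln N}-1\bigr)\le -\tfrac{p(n)^2}{6\ln N}$, so the bound follows. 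If instead $p(n)^2< 4\ln N$, then $p(n)^2/(6\ln N)<\tfrac23$, so $2\exp\!\left(-p(n)^2/(6\ln N)\right)>2e^{-2/3}>1$ and the claimed inequality holds trivially.

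I do not expect a genuine obstacle: the whole argument is an elementary tail bound for one coefficient, a union bound over the $2^n$ coefficients, and the two-case computation above. The only mildly delicate point is the final reshaping --- verifying $\ln N-p^2/2\le -p^2/(6\ln N)$ in the regime $p^2\ge 4\ln N$ (using only $\ln N\ge\ln 2$) and noting that the complementary regime makes the target bound vacuous --- but this is routine arithmetic, and one could equally phrase it by carrying $\poly(n)/\sqrt N$ as a stand-in for $p(n)$, which is how the bound is used downstream.
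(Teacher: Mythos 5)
Your proposal takes essentially the same route as the paper: a per-coefficient concentration bound, a union bound over the $N$ frequencies, and a short arithmetic check that the resulting $\exp(\ln N - \Omega(p^2))$ expression is dominated by $\exp(-p^2/(6\ln N))$. The paper's version is terser: it bounds the hamming weight $h(f\cdot\chi_s)$ with a two-sided multiplicative Chernoff bound, getting $\exp(-p^2/4)+\exp(-p^2/6)\le 2\exp(-p^2/6)$ per coefficient, then says ``a union bound gives the claim'' without writing out the reshaping; your Hoeffding bound for the Rademacher sum gives the slightly sharper $2\exp(-p^2/2)$ per coefficient, and you spell out the case split explicitly, which is a mild improvement in rigor.

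One small slip in the arithmetic: your chain for the case $p(n)^2\ge 4\ln N$ reads
\[
\ln N-\tfrac{p^2}{2}\le \tfrac{p^2}{2}\Bigl(\tfrac{1}{3\ln N}-1\Bigr)\le -\tfrac{p^2}{6\ln N},
\]
but the first inequality rearranges to $\ln N\le p^2/(6\ln N)$, i.e.\ $p^2\ge 6(\ln N)^2$, which is strictly stronger than $p^2\ge 4\ln N$ once $\ln N>2/3$. The intended argument goes through with a different middle term: from $p^2\ge 4\ln N$ you get $\ln N - p^2/2 \le p^2/4 - p^2/2 = -p^2/4$, and $-p^2/4 \le -p^2/(6\ln N)$ is equivalent to $6\ln N\ge 4$, i.e.\ $\ln N\ge 2/3$, which holds for $N\ge 2$. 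With that substitution the case analysis is airtight, and the second (vacuous) case is fine as written.
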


 \begin{proof}
  This follows from a Chernoff bound. Let $X = h(f\cdot \chi_s)$. For every $s$, this is a uniformly random string, so we can write:
  $$\Pr\left[X \leq \left(1 - \frac{p(n)}{\sqrt{N}}\right) N/2\right] \leq \exp(-p(n)^2/4) $$
  $$\Pr\left[X \geq \left(1 + \frac{p(n)}{\sqrt{N}}\right) N/2\right] \leq \exp(-p(n)^2/6) $$
  The claim follows from union bound.
 \end{proof}

 Note that in both $\mathcal{D}$ and $\mathcal{U}$, $f_i$ is a random Boolean function. So in either case by a union bound we can prove that with probability at least $1 - 2\exp(\frac{-p(n)^2}{6} + \ln N + \ln m)$ none of these functions have a squared Fourier coefficient larger than $\frac{p(n)^2}{N}$. So as long as $\ln m$ is polynomial in $n$ we can choose $p(n)$ in a way that this event happens with an exponentially small probability. 

It is clear that distinguishing $\mathcal{D}$ and $\mathcal{U}$ is harder than distinguishing $\chi_\mathcal{U}$ and $\chi_\mathcal{D}$ since given $f_i, s_i$ we can compute $f_i \cdot \chi_{s_i}$. The other direction is not as clear since given a long list of strings $S_1, \ldots, S_m$, we need $s_1, \ldots, s_m$ to recover $f_1, \ldots, f_m$. However, given a sample from $\chi_\mathcal{D}$ or $\chi_\mathcal{U}$, we can easily get a sample from $\mathcal{D}$ or $\mathcal{U}$ respectively, by simply choosing uniformly random $s_1, \ldots, s_m$. So, assuming we have access to a long list of random bits, any distinguisher for $\chi_\mathcal{U}$ and $\chi_\mathcal{D}$ also distinguishes $\mathcal{D}$ and $\mathcal{U}$. As we will see in analysis having access to an extra random oracle does not give any extra computation power to the $\BQP$ or $\PH$ algorithm. Combining Theorem \ref{th:typical} with this fact implies that any distinguisher for $\mathcal{U}$ and $\mathcal{D}$ must also distinguish the following two distributions with high probability:
 \begin{itemize}
  \item $\chi_\mathcal{U}'$ = $\sum_{d_1 \ldots d_m \in [p(n)^2 \sqrt{N}]} p_D \mathcal{U}_{d_1}^N \times \ldots \times \mathcal{U}_{d_m}^N$
  \item $\chi_\mathcal{D}'$ = $\sum_{d_1 \ldots d_m \in [p(n)^2 \sqrt{N}]} q_D \mathcal{U}_{d_1}^N \times \ldots \times \mathcal{U}_{d_m}^N$
 \end{itemize}
 Since $\chi_\mathcal{U}$ and $\chi_\mathcal{D}$ are statistically close to $\chi_\mathcal{U}'$ and $\chi_\mathcal{D}'$ respectively. Furthermore, since $\sum_D p_D \leq 1$ and $\sum_D q_D \leq 1$, by an averaging argument this distinguisher can also distinguish two distributions in the sum with a $1/\poly(n)$ advantage. This reduces the problem to indistinguishability of  $\mathcal{U_D}$ and $\mathcal{U}_{D'}$, where $d_i$ and $d'_i$ are at most $p(n) \sqrt{N}$. Thus, to complete the proof we need to show $\mathsf{BQP}$ and $\mathsf{PH}$ lower bounds for the following problem:
 \begin{problem} [$\distbal$]\label{prob:dist-bal}
 Let $0 \leq d_i \leq p(n) \cdot \sqrt{N}$. Given oracle access to a list of strings $S = S_1, \ldots, S_m$, where $S_i \in \{0, 1\}^N$, distinguish whether the oracle was sampled according to distribution $\mathcal{U}_D$ or $\mathcal{U}_0$, where:
 \begin{enumerate}
     \item $\mathcal{U}_D$: $S_i$ is drawn from $\mathcal{U}_{d_i}^N$.
    \item $\mathcal{U}_0$: $S_i$ is drawn from $\mathcal{U}_{0}^N$.
 \end{enumerate}
 \end{problem}
Note that indistinguishability of $\mathcal{U}_D$ and $\mathcal{U}_0$ implies indistinguishability of $\mathcal{U}_D$ and $\mathcal{U}_{D'}$. We say algorithm $A$ solves $\distbal$ with advantage $\delta$, if:
 $$\Big|\Pr_{S \sim \mathcal{U}_D}[A \text{ accepts } S] - \Pr_{S \sim \mathcal{U}_0}[A \text{ accepts } S]\Big| \geq \delta $$
 
In our proofs, we use a hybrid argument to get the $\mathsf{BQP}$ lower bound. Furthermore, this reduction enables us to use an argument similar to what is given in \cite{f-beat, aaronsonbqpph} to prove a bound against $\mathsf{AC^0}$. Informally, since this simplified problem is random self-reducible, combined with a circuit lower bound due to Håstad we can beat the hybrid argument, and extend the lower bound to an exponentially long list. Notice that black-box $\llqsv$ is slightly different than $\distbal$, since it also has access to a random string for each $S_i$ (which is the outcome $s_i$). To complete our proof, we need to show that having access to these extra random strings do not give us any extra computation power. This is straightforward from the hybrid argument for the $\mathsf{BQP}$ lower bound, and we can again use random self-reducibility of $\distbal$ to show the same claim for the $\mathsf{PH}$ argument.

 \subsection{\texorpdfstring{$\mathsf{BQP}$}{bqp} Lower Bound for \texorpdfstring{black-box $\llqsv$}{black-box llqsv}}\label{subsec:bqp-lowerbound}

 \begin{theorem} \label{th:bqpll}
  Any $T$ query quantum algorithm for $\distbal$ has advantage at most $p(n)\cdot TN^{-1/8}$.
 \end{theorem}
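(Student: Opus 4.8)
The plan is to deduce this from the BBBV-style hybrid argument of \Cref{lemma:hybrid}, generalized to allow flipping up to $p(n)\sqrt N$ coordinates (rather than $\sqrt N/2$), together with a coupling between $\mathcal{U}_D$ and $\mathcal{U}_0$ that perturbs only a sparse subset of coordinates in each of the $m$ oracles. First I would set up the coupling: draw $S^0=(S^0_1,\dots,S^0_m)\sim\mathcal{U}_0$, so each $S^0_i$ has Hamming weight exactly $N/2$, and obtain a coupled $S^D\sim\mathcal{U}_D$ block by block --- independently for each $i$, flip a fair coin, and on heads set to $1$ a uniformly random size-$d_i$ subset of the $N/2$ zero-coordinates of $S^0_i$, or on tails set to $0$ a uniformly random size-$d_i$ subset of its $N/2$ one-coordinates. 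A one-line count shows the result is uniform over strings of weight $N/2\pm d_i$, so $S^D_i\sim\mathcal{U}^N_{d_i}$ (this only needs $d_i\le p(n)\sqrt N\le N/2$, true for large $n$). The structural point is that $S^D_i$ and $S^0_i$ differ on a uniformly random size-$d_i$ subset of a coordinate set $P_i$ of size $N/2$ (the zeros or ones of $S^0_i$, depending on the coin).

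Next I would run the hybrid argument against the combined oracle on domain $[m]\times[N]$, viewing $A$ as making $T$ queries in total with amplitude $\alpha_{t,(i,x),w}$ at the $t$-th query; then $\sum_t\sum_{i,x,w}|\alpha_{t,(i,x),w}|^2\le T$, and I write $T_i$ for block $i$'s share, so $\sum_i T_i\le T$. Fixing $S^0$ and averaging over the coins and flip-sets, a random size-$d_i$ subset of $P_i$ carries in expectation at most a $d_i/|P_i|\le 2p(n)/\sqrt N$ fraction of block $i$'s query magnitude, so, writing $\epsilon_t$ for the magnitude on flipped coordinates at step $t$, we get $\E[\sum_t\epsilon_t]\le\sum_i (2p(n)/\sqrt N)\,T_i\le 2p(n)T/\sqrt N$. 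A Markov step gives that, except with probability $N^{-1/4}$ over the flips, $\sum_t\epsilon_t\le 2p(n)T\,N^{-1/4}$; Cauchy--Schwarz over the $T$ steps then yields $\sum_t\sqrt{\epsilon_t}\le\sqrt{T\sum_t\epsilon_t}\le\sqrt{2p(n)}\,T\,N^{-1/8}$; and the hybrid inequality, exactly as in \Cref{lemma:hybrid}, bounds $\|A^{S^0}\ket{0}-A^{S^D}\ket{0}\|_2$, hence $\delta(A^{S^0},A^{S^D})$, by twice this quantity. Averaging over $S^0$ and the lower-order $N^{-1/4}$ failure event, and using that under the coupling the distinguishing advantage is at most $\E[\delta(A^{S^0},A^{S^D})]$, we get advantage $O(\sqrt{p(n)}\,TN^{-1/8})\le p(n)\,TN^{-1/8}$.

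The step I expect to be the main obstacle is making the bound independent of the list length $m$. This is exactly why one passes to the single combined oracle on $[m]\times[N]$ and bounds $d_i/|P_i|$ by $2p(n)/\sqrt N$ \emph{uniformly} in $i$: the algorithm may distribute its $T$ queries among the $m$ blocks arbitrarily, but each block leaks at most a $2p(n)/\sqrt N$ fraction of its own share into the perturbation, so the contributions sum to $2p(n)T/\sqrt N$ with no $m$-dependence. The only genuinely new element beyond \Cref{lemma:hybrid} is the larger flip count $p(n)\sqrt N$ in place of $\sqrt N/2$, which costs the extra $p(n)$ factor while leaving the $N^{-1/8}$ scaling --- dictated by the Markov/Cauchy--Schwarz split --- unchanged; one should also double-check that the coupling genuinely reproduces $\mathcal{U}^N_{d_i}$ (the fair coin is needed precisely because $\mathcal{U}^N_{d_i}$ mixes weights $N/2+d_i$ and $N/2-d_i$), and that $\delta(A^{S^0},A^{S^D})$ upper-bounds the per-instance difference in acceptance probabilities so that the coupling bound on advantage goes through.
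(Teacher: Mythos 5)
Your proposal is correct and takes essentially the same route as the paper's proof: a BBBV hybrid over a combined oracle on $[m]\times[N]$, splitting the total query magnitude $T$ into block shares $T_i$, using that a random size-$d_i$ flip set inside a size-$\ge N/2$ set $P_i$ captures at most an $O(p(n)/\sqrt N)$ fraction of each share, then Markov plus Cauchy--Schwarz to reach $N^{-1/8}$. The only cosmetic difference is the direction of the coupling --- you start from $\mathcal{U}_0$ and randomly unbalance, while the paper starts from $\mathcal{U}_D$ and randomly rebalances --- and your bookkeeping actually yields the slightly tighter $O(\sqrt{p(n)}\,TN^{-1/8})$, which the paper loosely rounds up to $O(p(n)\,TN^{-1/8})$.
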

 \begin{proof}
  This follows from a BBBV style argument. Suppose that $S$ is drawn from $\mathcal{U}_{d_1}^N \times \ldots \times \mathcal{U}_{d_m}^N$. We can modify each $S_i$ in at most $p(n) \cdot \sqrt{N}$ entries to make each $S_i$ balanced. Let this new string be $S'$. Let $P_i$ be the set of possible entries for each string $S_i$, and $P = \bigcup_i P_i$. Then given a quantum algorithm $A$, we can write the query magnitude of each $P_i$ as:
  $$\sum_{t=1}^T \sum_{\substack{w \\ x \in P_i}} |\alpha_{t, x, w}|^2 \leq T_i $$
  and $\sum_{i} T_i = T$. So if we choose a random subset $\Delta = \bigcup_i \Delta_i$ for the modified entries of each $S_i$, we can write the expected value of the query magnitude as:
  $$\E_\Delta \Big[ \sum_{t=1}^T \sum_i^m \sum_{\substack{w \\ x \in \Delta_i}} |\alpha_{t, x, w}|^2 \Big] = \sum_i^m \E_{\Delta_i} \Big[ \sum_{t=1}^T \sum_{\substack{w \\ x \in \Delta_i}} |\alpha_{t, x, w}|^2\Big] \leq \sum_i^m \frac{T_i p(n) \sqrt{N}}{2|P_i|} \leq \frac{p(n)T}{\sqrt{N}} $$
  And by Markov's inequality and Cauchy-Schwartz, we get that for $1-N^{-1/4}$ fraction of $S'$:
  $$\delta(A(S), A(S')) \leq 4\cdot p(n)\cdot TN^{-1/8}$$
  However, this random modification also preserves the distribution in this case. To sample according to $\mathcal{U}_0$, we can first sample from $\mathcal{U}_d$ and then randomly flip the excess bits to make the string balanced. So the advantage of $A$ is at most:
  $$N^{-1/4} + 4\cdot p(n)\cdot TN^{-1/8} \leq 8 \cdot p(n) \cdot TN^{-1/8} $$
  
 \end{proof}
 \begin{theorem} [$\mathsf{BQP}$ Bound]
Any quantum algorithm for $\bllqsv$ with advantage $\delta=1/\poly(n)$ must make at least $\Omega(N^{1/10})$ queries to the oracle.
 \end{theorem}
 \begin{proof}
 Suppose there exists a $O(N^{1/10})$ quantum query algorithm for black-box $\llqsv$. First we notice that the same hybrid argument works without any modification if in $\distbal$ we have access to $(S_i, s_i)$, where $s_i$ is a uniformly random string. Now we can use $s_i$ to recover $f$ by computing $S_i \cdot \chi_{s_i}$. From Theorem \ref{th:typical}, any distinguisher for black-box $\llqsv$ must also solve $\distbal$ (with the additional access to $s_i$) with advantage at least $\delta'=\delta-2\exp(\frac{-p(n)^2}{6}-\ln N - \ln m)$. We can choose an appropriate polynomial $p$ so that $\frac{p(n)^2}{6}-\ln N - \ln m \geq n$. Then for large enough $n$, this implies $\delta'\geq \delta/2 = 1/\poly(n)$. However, we know that:
 $$\delta' \leq 8\cdot p(n) \cdot cN^{1/10} \cdot N^{-1/8} = O(N^{-1/40}) $$
 Which gives us a contradiction.
 \end{proof}

\subsection{\texorpdfstring{$\mathsf{PH}$}{PH} Lower Bound for \texorpdfstring{black-box $\llqsv$}{llfs}}\label{subsec:secondph-lowerbound}  
We start from a well known $\mathsf{AC^0}$ lower bound~\cite{has}:
\begin{lemma} \label{lemma:has}
Given an $N$ bit string $S$, any depth $d$ circuit that accepts all $S$ such that $h(S)=N/2+1$ and rejects them if $h(S)=N/2$ has size: $$\exp\big(\Omega(N^{1/(d-1)}\big)$$
\end{lemma}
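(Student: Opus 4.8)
The function in Lemma~\ref{lemma:has} is exactly \emph{promise}-$\mathsf{PARITY}$ restricted to the two middle Hamming slices: on the promise $h(S)\in\{N/2,N/2+1\}$, a circuit that accepts weight $N/2+1$ and rejects weight $N/2$ is computing $\bigoplus_i S_i$ (up to the fixed global parity bit of $N/2$), equivalently it computes $\mathsf{MAJORITY}$ on a neighborhood of the threshold. Because this is a \emph{subproblem} of full $\mathsf{PARITY}$ on $N$ bits, the bound does not fall out of Håstad's $\mathsf{PARITY}$ lower bound by a reduction -- there is no cheap constant-depth way to encode arbitrary inputs into near-balanced strings, since knowing which of two adjacent slices a string lies in already \emph{is} a parity computation. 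So my plan is to re-run Håstad's switching-lemma argument for $\mathsf{PARITY}$ essentially verbatim, with the one extra piece of bookkeeping needed to keep the ``two middle slices'' promise alive under random restrictions.

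\textbf{Plan.} Suppose toward a contradiction that $C$ is a depth-$d$ circuit of size $s$ with $\log s=o(N^{1/(d-1)})$ that accepts all weight-$(N/2+1)$ strings and rejects all weight-$(N/2)$ strings. Apply $d-2$ rounds of random restrictions: in each round every surviving coordinate stays alive independently with probability $p=\Theta(N^{-1/(d-1)})$, and each killed coordinate is assigned an independent uniform bit. By (iterated applications of) Håstad's switching lemma, with high probability $C$ collapses after these rounds to a decision tree of depth $O(\log s)$, hence depending on only $O(\log s)=o(N^{1/(d-1)})$ surviving coordinates; meanwhile the number $k$ of surviving coordinates is $\Theta(N^{1/(d-1)})$ with high probability, so $k$ is much larger than the tree depth. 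Now condition on the set $T$ of surviving coordinates and on the weight $a$ of the killed part; since the killed bits are i.i.d.\ fair coins, $a$ concentrates around $(N-k)/2$, so with positive (indeed $1-o(1)$) probability $\ell := N/2 - a$ satisfies $1\le \ell\le k-2$ and $\ell = k/2\pm o(k)$. On this event the restricted circuit $C|_\rho$ must reject every weight-$\ell$ string on $T$ and accept every weight-$(\ell+1)$ string on $T$. But a decision tree of depth $D < \min(\ell+1,\,k-\ell)$ cannot do this: any root-to-leaf path fixes only $D$ coordinates, and since $D<\ell+1$ and $D<k-\ell$ the partial assignment can be completed both to a weight-$\ell$ string and to a weight-$(\ell+1)$ string, so the leaf's label is wrong on one of them. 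Since $D=O(\log s)=o(k)$ and $\ell\approx k/2$, this inequality holds, giving the contradiction and hence $s=\exp\big(\Omega(N^{1/(d-1)})\big)$.

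\textbf{Main obstacle.} The one nontrivial point -- and the only place this departs from the textbook $\mathsf{PARITY}$ argument -- is reconciling the switching lemma, which wants genuinely independent random values on the killed coordinates, with the promise, which wants the killed part to be (nearly) balanced so that the restricted problem is again ``two adjacent middle slices'' on the live coordinates. I would handle this either by conditioning on the typical-weight event above (a $1-o(1)$-probability event, so it is absorbed into the ``with high probability'' statements for free), or, more cleanly, by a two-stage restriction in which the killed coordinates are first forced to an exactly balanced pattern and only afterward a genuinely random sub-restriction is applied and the switching lemma invoked on the circuit over the surviving coordinates. A secondary check is that the final decision-tree depth beats \emph{both} $\ell+1$ and $k-\ell$ rather than just one of them, which is automatic because a random balanced kill forces $\ell\approx k/2$. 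Alternatively, since this is precisely the classical Håstad-type lower bound for $\mathsf{MAJORITY}$/middle slices, one may simply cite it as in the statement; I have sketched the argument only to indicate that the promise causes no essential difficulty.
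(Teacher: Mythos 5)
The paper does not prove this lemma; it cites Håstad's $\mathsf{AC}^0$ lower bound for the middle slices of $\mathsf{MAJORITY}$. Your plan (switching lemma $\Rightarrow$ small decision tree, then observe a shallow tree cannot separate two adjacent middle slices) is the standard route, and you have correctly located the one genuine difficulty: reconciling the i.i.d.\ random restriction with the balance promise. However, the way you discharge that difficulty contains a quantitative error that is large enough to break the argument as written.

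The problem is the concentration claim. If the killed coordinates (there are $N-k$ of them, with $k=\Theta(N^{1/(d-1)})$ survivors) are assigned i.i.d.\ fair coins, then the killed weight $a$ has mean $(N-k)/2$ but standard deviation $\Theta(\sqrt{N-k})=\Theta(\sqrt{N})$. Hence $\ell=N/2-a$ has mean $\approx k/2$ but is spread over a window of width $\Theta(\sqrt{N})\gg k$ (for every $d\ge 4$). The event $1\le \ell\le k-2$, which you need so that the restricted function is a nonconstant middle-slice problem on the live coordinates, therefore has probability only $\Theta(k/\sqrt{N})=N^{1/(d-1)-1/2}=o(1)$, not $1-o(1)$. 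With overwhelming probability the restriction kills the promise entirely (the restricted function becomes constant), so it is \emph{not} ``absorbed into the with-high-probability statements for free.'' Your alternative ``two-stage restriction'' suggestion does not obviously repair this either: if you force the killed block to be exactly balanced before invoking the switching lemma, the restriction is no longer a product distribution, which is what the switching lemma's proof requires, and the circuit remains full-depth on the $k$ live variables so there is nothing left to switch.

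The fix is standard but needs to be stated: push the switching-lemma failure probability well below the probability of the balance event and then condition. Concretely, with target tree depth $t=K\log s$ for a large enough constant $K$ (or with slightly smaller $p$ and an extra constant in the exponent), the union-bound failure probability over all gates can be made $\le N^{-100}$, which is $o\bigl(k/\sqrt{N}\bigr)$. Conditioning on the (rare, but not too rare) balance event then still leaves the collapse event with conditional probability $1-o(1)$, and the rest of your argument, including the decision-tree step at the end, goes through. Equivalently, one can choose the killed values by restricting a random string drawn from the two middle slices of $\{0,1\}^N$ to the killed coordinates; then $\ell$ concentrates in a window of width $\Theta(\sqrt{k})$ around $k/2$ with probability $1-o(1)$, and the switching lemma transfers to this distribution because its density ratio against the i.i.d.\ product measure is only $\mathrm{poly}(N)$. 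Either way, the missing ingredient is a quantitative comparison between the switching-lemma failure probability and the probability of the balance event; as stated, your sketch asserts the wrong direction of that comparison.
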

First we notice that the same bound holds for distinguishing $h(S) = N/2$ from $|h(S) - N/2| = 1$. The next step is to prove analog of Theorem \ref{th:bqpll} against $\mathsf{PH}$.

\begin{theorem} \label{th:distbal-ac}
Any depth $d$ circuit that solves $\distbal$ with $1/\poly(n)$ advantage has size:
$$\exp\big(\Omega(N^{1/(2d+4)})\big)$$
\end{theorem}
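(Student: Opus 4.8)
The plan is to reduce, in an almost depth-preserving way, the task of $\eps$-solving $\distbal$ (with $\eps=1/\poly(n)$) to the canonical $\mathsf{MAJORITY}$-type distinguishing problem to which \Cref{lemma:has} applies — telling the uniform distribution on weight-$\ell/2$ strings of length $\ell$ from the uniform distribution on length-$\ell$ strings $S$ with $|h(S)-\ell/2|=1$ — on strings of length $\ell=\Theta(\sqrt N)$. The reduction has two parts: first isolate a single list coordinate, then compress that single-string instance down to length $\Theta(\sqrt N)$.

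\emph{From one string to $\mathsf{MAJORITY}$.} Suppose we have a depth-$d'$, size-$s$ circuit $C$ distinguishing $\mathcal{U}_0^N$ from $\mathcal{U}_{d_j}^N$ with advantage $\Omega(\eps)$, for some list parameter $d_j\le p(n)\sqrt N$. Put $\ell=\lfloor N/d_j\rfloor=\Omega(\sqrt N/p(n))$ and consider the circuit that, on a length-$\ell$ input $T$, forms the length-$N$ string $\sigma(\mathrm{dup}_{d_j}(T),\mathrm{pad})$ — where $\mathrm{dup}_{d_j}$ copies each bit $d_j$ times, $\mathrm{pad}\in\{0,1\}^{N-d_j\ell}$ is a fixed balanced string, and $\sigma\in S_N$ is a fixed permutation — and then runs $C$. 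Duplication, padding, and applying a fixed $\sigma$ are pure wiring, so this circuit still has depth $d'$ and size $s$. If $h(T)=\ell/2$ the image has weight $N/2$, and if $|h(T)-\ell/2|=1$ it has weight $N/2\pm d_j$; moreover $\E_\sigma[C(\sigma(x))]$ equals the average of $C$ over the uniform distribution on strings of weight $h(x)$, for every fixed $x$. Hence, averaged over $\sigma$, a uniform weight-$\ell/2$ input is seen by $C$ exactly as a draw from $\mathcal{U}_0^N$ and a uniform $|h-\ell/2|=1$ input exactly as a draw from $\mathcal{U}_{d_j}^N$; an averaging argument fixes a permutation $\sigma^\ast$ preserving advantage $\Omega(\eps)$. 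Feeding this into \Cref{lemma:has}, in a routine robust form (a depth-$d'$ circuit distinguishing the two weight classes with advantage $1/\poly(n)$ still has size $\exp(\Omega(\ell^{1/(d'-1)}))$), gives $s\ge\exp(\Omega(\ell^{1/(d'-1)}))$.

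\emph{From a list to one string, and the main obstacle.} It remains to produce such a $C$ from a depth-$d$, size-$s$ circuit that $\eps$-solves $\distbal$ on $m$ strings. The naive route is a hybrid over the $m$ coordinates — hard-wire a sample of $\mathcal{U}_{d_i}^N$ into slots $1,\dots,j-1$ and of $\mathcal{U}_0^N$ into slots $j+1,\dots,m$ (these distributions are samplable and hard-wiring costs no depth or size), then feed the test string into slot $j$ — but this loses a factor of $m$, and since only $\ln m=\poly(n)$ is promised, $m$ may be $2^{\poly(n)}$ and $\eps/m$ drops below the threshold of \Cref{lemma:has}. This is the heart of the difficulty. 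The fix is that $\distbal$ is \emph{random self-reducible}: each $\mathcal{U}_{d_i}^N$ and $\mathcal{U}_0^N$ is invariant under permuting its $N$ coordinates, so every slot can be re-randomized independently without changing the instance's law. Running the ``beating the hybrid argument'' template of \cite{f-beat,aaronsonbqpph}, this self-reducibility — together with Håstad's bound itself — converts the $m$-fold distinguisher into a single-string distinguisher with advantage $\Omega(\eps)$ \emph{independent of $m$}, at the price of only $O(1)$ extra gate layers. Carrying that constant (which works out to $+3$) through the previous paragraph, with $\ell=\Omega(\sqrt N/p(n))$, $p(n)=\poly(n)$, and $d'=d+3$, yields $s\ge\exp(\Omega(N^{1/(2d+4)}))$. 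The remaining steps — the robust distributional form of \Cref{lemma:has} and the exact gate-layer accounting through the re-randomization gadget, the duplication, and the hard-wired fillers — are routine.
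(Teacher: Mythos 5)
Your ingredients — the duplicate/pad/permute embedding, random self-reducibility of each slot, repetition followed by an approximate-majority circuit to get a worst-case form of Håstad's bound, and the depth-$(d+3)$, length-$\Omega(\sqrt{N}/p(n))$ accounting — all match the paper's proof and give the same bound. But your planned decomposition, \emph{``first isolate a single list coordinate, then compress,''} has the steps in the wrong order, and the first step as stated cannot be carried out. Isolating coordinate $j$ asks for a length-$N$ circuit distinguishing $\mathcal{U}_0^N$ from $\mathcal{U}_{d_j}^N$, built from the list distinguisher by re-randomizing a single test string $S$ into every slot. When the $d_i$'s are not all equal this is impossible: for a slot $i$ with $d_i\neq d_j$ you would need a transformation taking $S\sim\mathcal{U}_0^N$ to $\mathcal{U}_0^N$ but $S\sim\mathcal{U}_{d_j}^N$ to $\mathcal{U}_{d_i}^N$, which would require knowing the weight class of $S$ — i.e.\ deciding the very promise you are lower-bounding. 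The permutation-plus-flip re-randomization you rely on preserves $|h(S)-N/2|$, so it cannot turn one $d_j$ into a different $d_i$.

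The paper sidesteps this by compressing first and embedding second, never passing through a single-slot length-$N$ intermediate. Fix one short string $t$ of length $\sqrt{N}/p(n)$. For each slot $i$, duplicate $t$ by the \emph{slot-specific} factor $d_i$, pad to length $N$ with a balanced string, and apply an independent random permutation and global flip to produce $S_i$. Then $t$ balanced forces $S_i\sim\mathcal{U}_0^N$ for every $i$, and $t$ off by one forces $S_i\sim\mathcal{U}_{d_i}^N$ for every $i$, so feeding $(S_1,\dots,S_m)$ into the $\distbal$ circuit gives a MAJORITY-type distinguisher on $t$ with the full advantage $\delta$ — no hybrid, no factor-$m$ loss, and no intermediate $C$. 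The ``robust form'' of \Cref{lemma:has} is then obtained exactly as you suggest: repeat with fresh randomness, take an approximate majority (the source of the $+3$ depth, cf.\ \cite{DBLP:conf/coco/Viola07}), and fix the randomness non-uniformly. Merge your two reduction steps into this single simultaneous embedding and the argument goes through.
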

\begin{proof}
Suppose there exists a circuit $C$ of size $s$ and depth $d$ that solves $\distbal$ with advantage $\delta = 1/\poly(n)$. Using $C$ we can construct a circuit $C'$ of size $s'=\poly(s, N)$ and depth $d+3$ that solves the problem in Lemma \ref{lemma:has} for length $\sqrt{N}/p(n)$. Given a $\sqrt{N}/p(n)$ bit string $t$, for each $i$, we copy the string $d_i$ times and then pad it with an equal number of zeros and ones to have length $N$. We then take a random permutation of this string and flip all bits with probability $1/2$ to get $S_i$. Note that if the original string is balanced $S_i \sim \mathcal{U}_0^N$, and otherwise $S_i \sim \mathcal{U}_D^N$. Then we know:
$$\Big|\Pr\big[A \text{ accepts } S | t \text{ balanced} \big]-\Pr\big[A \text{ accepts } S | t \text{ not balanced}]\Big| \geq \delta$$
We can repeat this $r=\poly(1/\delta, N)$ times. By Chernoff bound, we get a $1/\delta$ gap with arbitrary large probability $\exp\big(-\poly(N)\big)$, which is detectable by an approximate majority circuit~(see \cite{aaronsonbqpph, DBLP:conf/coco/Viola07} for more detail). Let us call this new circuit $C'$. We can then fix the randomness in a way that the new circuit $C'$ succeeds on every input simultaneously. This new circuit has depth $d+3$ and size $\poly(s, N)$. So $s$ must be at least:
$$\exp\big(\Omega(\sqrt{N}^{-1/(d+2)} \big) = \exp\big(\Omega(N^{1/(2d+4)} \big)$$
\end{proof}
\begin{corollary} \label{cor:2}
Any depth $d$ circuit that solves black-box $\llqsv$ with $1/\poly(n)$ advantage has size:
$$\exp\big(\Omega(N^{1/(2d+4)})\big)$$
\end{corollary}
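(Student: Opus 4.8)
The plan is to derive Corollary~\ref{cor:2} from Theorem~\ref{th:distbal-ac} by exhibiting a constant-depth, polynomial-size reduction from $\bllqsv$ to $\distbal$. Most of this reduction is already carried out in the discussion preceding Problem~\ref{prob:dist-bal}: by Theorem~\ref{th:typical} and a union bound, each of $\mathcal{U}$ and $\mathcal{D}$ is $2^{-\Omega(n)}$-close to its ``truncated'' version in which every $f_i$ has all squared Fourier coefficients at most $p(n)^2/N$; writing $S_i := f_i\cdot\chi_{s_i}$, these truncated distributions are convex combinations $\sum_D p'_D\cdot\big[(\prod_i\mathcal{U}^N_{d_i})\times\mathrm{Unif}\big]$ and $\sum_D p_D\cdot\big[(\prod_i\mathcal{U}^N_{d_i})\times\mathrm{Unif}\big]$ over tuples $D=(d_1,\dots,d_m)$ with all $d_i\le p(n)\sqrt N$, where $\mathrm{Unif}$ is the uniform distribution over the outcome strings $(s_1,\dots,s_m)$. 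A circuit distinguishing $\mathcal{U}$ from $\mathcal{D}$ with advantage $1/\poly(n)$ thus distinguishes the truncated versions with advantage $1/\poly(n)-2^{-\Omega(n)}$, and by an averaging argument (hard-wiring the optimal pair $D_1,D_2$, which is only $\poly(n)$ bits of nonuniform advice) it distinguishes $\prod_i\mathcal{U}^N_{d^{1}_i}$ from $\prod_i\mathcal{U}^N_{d^{2}_i}$ with advantage $1/\poly(n)$; by the triangle inequality this yields, with advantage $1/\poly(n)$, a circuit solving $\distbal$ for some valid $D$ — except that this circuit also reads the extra oracle supplying the uniform strings $s_i$.

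It remains to argue that this extra oracle gives no power. The key point is that in both $\mathcal{U}$ and $\mathcal{D}$ (hence in every distribution appearing above) the pair $(S_i,s_i)$ has $s_i$ uniform on $\{0,1\}^n$ and \emph{independent} of $S_i$: from the ``alternative sampling'' description of $\mathcal{D}$ together with the identity $\widehat{\sigma\cdot\chi_z}(z)=\E_x[\sigma(x)]=\hat\sigma(\mathbf 0)$, one gets $\Pr_{\mathcal D}[S_i=\sigma,\ s_i=z]=2^{-N}\hat\sigma(\mathbf 0)^2$, which is independent of $z$, and similarly $\Pr_{\mathcal U}[S_i=\sigma,\ s_i=z]=2^{-N}/N$. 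Consequently any circuit solving $\distbal$ while reading the $s_i$ can be simulated by a plain $\distbal$ circuit that generates the $s_i$ from its own random bits and then fixes those bits to preserve the advantage, at no cost in depth — this is the ``random self-reducibility'' step alluded to in the overview. Finally, passing between an oracle that presents $f_i$ and one that presents $S_i=f_i\cdot\chi_{s_i}$ costs only the evaluation of $\chi_{s_i}(x)=(-1)^{x\cdot s_i}$, a parity of at most $n=\log N$ bits, which is computable in depth $2$ and size $O(N)$ per output location. Composing these steps turns a depth-$d$, size-$s$ circuit for $\bllqsv$ with advantage $1/\poly(n)$ into a depth-$(d+O(1))$, size-$\poly(s,N)$ circuit for $\distbal$ with advantage $1/\poly(n)$, and Theorem~\ref{th:distbal-ac} then forces $s\ge\exp(\Omega(N^{1/(2d+O(1))}))$, which is the claimed bound once the additive constant in the depth is absorbed into the hidden constants.

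I expect the only real work here to be bookkeeping, since the mathematical substance is already contained in Theorems~\ref{th:typical} and~\ref{th:distbal-ac}. The remaining task is to check that (i) the truncation, the change of coordinates $(f_i,s_i)\leftrightarrow(S_i,s_i)$, the averaging over $D$, and the repetition-plus-approximate-majority derandomization borrowed from the proof of Theorem~\ref{th:distbal-ac} are each implementable by constant-depth polynomial-size circuitry, and (ii) each of these operations preserves both a $1/\poly(n)$ advantage and the product structure ``$S_i$-marginal $\times$ uniform independent $s_i$'', so that the outcome oracle can be discharged by supplying internal randomness. The main obstacle to watch for is the depth accounting: one must verify that the parity layers computing the $\chi_{s_i}$ merge with the adjacent layers of the $\bllqsv$ circuit, so that the overhead is a genuine additive constant in depth and the exponent of Corollary~\ref{cor:2} matches that of Theorem~\ref{th:distbal-ac} up to constants.
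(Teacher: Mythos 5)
Your proposal takes essentially the same route as the paper: truncate via Theorem~\ref{th:typical}, decompose into convex combinations of $\mathcal{U}_d^N$'s, average down to a fixed $D$, argue that the outcome strings $s_i$ are independent of $S_i$ in both worlds and can therefore be generated internally and hard-wired, and then invoke Theorem~\ref{th:distbal-ac}. The independence observation $\Pr_{\mathcal D}[S_i=\sigma,\,s_i=z]=2^{-N}\hat\sigma(\mathbf 0)^2$ is correct and is precisely the fact the paper uses (phrased there as ``random self-reducibility'').

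Two small points to tighten. First, the parenthetical ``only $\poly(n)$ bits of nonuniform advice'' is off: the tuple $D=(d_1,\dots,d_m)$ has $m=\exp(\Theta(n))$ coordinates, so the advice is exponentially long. This is immaterial — we are already in a nonuniform circuit model and the hard-wiring costs nothing in size or depth — but the parenthetical should be dropped. Second, and more substantively, the parity detour you describe for computing $\chi_{s_i}(x)=(-1)^{x\cdot s_i}$ is unnecessary and, if taken literally, would cost you the exponent: adding even two layers of depth for the parity before applying Theorem~\ref{th:distbal-ac} gives $\exp(\Omega(N^{1/(2d+8)}))$, not the claimed $\exp(\Omega(N^{1/(2d+4)}))$, and this cannot be ``absorbed into hidden constants'' since the depth enters the exponent. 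The fix is that once $s_i$ is fixed, $\chi_{s_i}(x)$ is a hard-wired $\pm1$ for every $x$, so $f_i(x)=S_i(x)\cdot\chi_{s_i}(x)$ is just $S_i(x)$ or its negation; since $\mathsf{AC^0}$ circuits have literals of both polarities for free, the oracle translation costs zero depth and zero size, and the exponent $1/(2d+4)$ is preserved exactly, matching the paper.
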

 \begin{proof}
Similar to the $\mathsf{BQP}$ bound, the first step is to prove the same $\mathsf{AC^0}$ bound for $\distbal$ even when the algorithm has additional access to a long list of random outcomes $s_1, \ldots, s_m$. This immediately follows from the proof idea of Theorem \ref{th:distbal-ac}, since we can choose uniformly random $s_i$ in the circuit for each instance separately (without accessing the oracle, while preserving the distribution), and later fix them in the circuit. Furthermore, from Theorem \ref{th:typical}, any $1/\poly(n)$ distinguisher for $\bllqsv$ would also work for $\distbal$ with the additional outcome oracle by choosing the polynomial $p$ to be large enough.
\end{proof}
Corollary \ref{cor:2} and the standard conversion between $\mathsf{PH}$ and $\mathsf{AC^0}$ proves a $\mathsf{PH}$ lower bound for solving $\bllqsv$.

\section{The \texorpdfstring{$\sqfor$}{Square Forrelation} Problem and \texorpdfstring{$\bllqsv$}{Black-Box LLQSV}} \label{sec:bqpph}
In this section, we first prove an alternative oracle separation between $\mathsf{BQP}$ and $\PH$, and then show a reduction from $\sqfor$ to a problem similar to $\bllqsv$. To prove the lower bound we build on the previous work of Raz and Tal~\cite{DBLP:conf/stoc/RazT19} to define a distribution $\mathcal{D}$ over pairs of Boolean functions $(f, g)$ that is indistinguishable from uniform in the polynomial hierarchy, and yet a simple quantum algorithm is able to distinguish them. 

This hardness result does not immediately follow from previous oracle separations and requires careful modifications of Raz and Tal analysis to make them work with squared Fourier coefficients for both the $\mathsf{PH}$ lower bound and the quantum algorithm. Furthermore, we extend these results to the ``long list'' setting and prove that having oracle access to an exponentially long list of samples does not make the problem any easier. 

Before getting into the formal results, let us start with the simplified \cref{prob:simplifiedsqfor}. One can choose the threshold parameter to be the median of squared Fourier coefficients of a random Boolean function, so that almost half of the coefficients are indicated as heavy on average, i.e\ $\Pr_{x, f}[g(x)=1] \approx 1/2$.

Then one can easily find random heavy Fourier coefficients by finding a random $x$ such that $g(x) = 1$. The distribution $\mathcal{D}$ that we define next is similar to simplified $\sqfor$ in the sense that we use randomized rounding on squared Fourier coefficients and we can prove that on average, heavy Fourier coefficients have a slightly higher chance of having $g(x) = 1$. We leverage this to reduce $\sqfor$ to a black-box variant of $\mathsf{LLQSV}$. 

Let us first summarize the separation result that we are going to use to give evidence for $\mathsf{LLQSV}$. Similar to \cite{DBLP:conf/stoc/RazT19}, to show an oracle separation of $\mathsf{BQP}$ and $\mathsf{PH}$ it suffices to find a distribution $\mathcal{D}$ that is pseudorandom for $\mathsf{AC^0}$ circuits, but not for efficient quantum algorithms making few queries. First, we start by defining distribution $\Dd$ using truncated multivariate Gaussians. Let $n \in \mathbb{N}$, $N = 2^n$. Let $\varepsilon = 1/(C \ln N)$ for a constant $C\ge 20$ to be chosen later. Define $\mathcal{G}$ to be a multivariate Gaussian distribution over $\mathbb{R}^N \times \mathbb{R}^N$ with mean $0$ and covariance matrix:
\begin{equation*}
\varepsilon \cdot \begin{pmatrix}
I_N  & H_N  \\
H_N & I_N
\end{pmatrix}
\end{equation*}
Where $H_N$ is the Hadamard transform. To take samples from $\mathcal{G}$, we can first sample $X = x_1, \ldots, x_N \sim \Nn(0, \varepsilon)$, and let $Y = H_N \cdot X$. Define $\mathcal{G}'$ to be the distribution over $Z = (X, Y^2 - \varepsilon)$. Let $\trunc(a) = \min(1, \max(-1, a))$. The distribution $\mathcal{D}$ over $\{\pm 1\}^{2N}$, first draws $Z \sim \mathcal{G}'$. Then for each $i \in [2N]$ draws $z'_i = 1$ with probability $\frac{1+\trunc(z_i)}{2}$ and $z'_i=-1$ with probability $\frac{1-\trunc(z_i)}{2}$. Now we can define the following promise problem:
\begin{problem} [$\sqfor$]
Given oracle access to Boolean functions $f, g: \{0, 1\}^n \rightarrow \{\pm 1\}$, distinguish whether they are sampled according to $\mathcal{D}$ or uniformly at random.
\end{problem}
Recall that we can write the multilinear expansion of a Boolean function $F: \mathbb{R}^{2N} \rightarrow \mathbb{R}$ as:
$$F(z) = \sum_{S \subseteq [2N]} \hat{F}(S) \prod_{i \in S}z_i $$
Similar to the original proof, it is not hard to see that the randomized rounding step does not change the expected value of the outcome for multilinear functions:
\begin{align*} 
  \E_{z \sim \mathcal{G'}} [F(\trnc(z))] = \E_{z' \sim \mathcal{D}}[F(z')]
\end{align*}
Hence we can ignore step 3 in our analysis. More importantly, because of the choice of $\varepsilon$, truncations happen with negligible probability, and in the event that they happen, it is possible to bound the difference. 
$$  \E_{z \sim \mathcal{G'}} [F(\trnc(z))]  \approx \E_{z \sim \mathcal{G'}}[F(z)]$$
The rest of the analysis is to prove indistinguishability of $(X, Y^2 - \varepsilon)$ from the uniform distribution. More specifically, to use the tail bound from~\cite{Tal17} for the Fourier coefficients of bounded depth circuits to bound:
$$\left|\E[F(X, Y^2-\varepsilon)] - \E[F(\Uu_{2N})]\right| = \left|\E[F(X, Y^2 - \varepsilon) - F(0,0)]\right| $$
The idea is to use the same telescopic sum as Raz and Tal for analyzing the expected value of $F$ and think of $\mathcal{G'}$ as a Brownian motion, but also consider the squared inputs and the $\varepsilon$ difference. One important step is to also break the $\varepsilon$ bias in the telescopic sum. 
Let $t \in \mathbb{N}$, and for $i \in [t]$, let $X^{(i)} = X^{(i-1)} +\Delta_X^{(i)}$ and $Y^{(i)} = H_N \cdot X^{(i)}$, where $\Delta_X^{(i)} \sim \mathcal{N}(0, \varepsilon/t)$. Now we can write:
\begin{align*}
  F(X^{(t)}, (Y^{(t)})^2 - \varepsilon) - F(0,0) = \sum_{i=1}^t \Bigg\{F\left(X^{(i)}, (Y^{(i)})^2 - \tfrac{\varepsilon\cdot i}{t}\right) - F\left(X^{(i-1)}, (Y^{(i-1)})^2 - \tfrac{\varepsilon\cdot (i-1)}{t}\right)\Bigg\}
\end{align*}
Then we can use known bounds for each term in the sum and the triangle inequality to bound the sum. In the limit of $t \rightarrow \infty$, we are dealing with a stochastic integral so one can also prove this result using stochastic calculus.
However, we decided not to present the proof using the language of stochastic calculus and instead picked $t$ to be a large enough polynomial in $N$ (similar to \cite{DBLP:conf/stoc/RazT19}).

\subsection{Truncated Gaussians} \label{sec::trunc}
It is not hard to see that any multilinear function $F: \mathbb{R}^{2N} \rightarrow \mathbb{R}$ still has similar expectation under $\mathcal{D}$ and $\mathcal{G}'$, where:
$$F(z)=\sum_{S \subseteq[2 N]} \widehat{F}(S) \cdot \prod_{i \in S} z_{i}$$
First, we need to show that:
$$\E_{z^{\prime} \sim \mathcal{D}}\left[F\left(z^{\prime}\right)\right]=\E_{z \sim \mathcal{G}^{\prime}}[F(\operatorname{trnc}(z))]$$
And since we can still prove that the truncations happen with negligible probability, we only need to adapt the rest of the proof to work with squared Fourier coefficients.
The proof of the first part works without any modifications from the definition of distributions $\mathcal{D}$ and $\mathcal{G}'$:
\begin{align} 
\E\left[F\left(z^{\prime}\right) \mid z\right] &=\E\left[\sum_{S \subseteq[2 N]} \widehat{F}(S) \cdot \prod_{i \in S} z_{i}^{\prime} \;\middle|\; z\right]=\sum_{S \subseteq[2 N]} \widehat{F}(S) \cdot \prod_{i \in S} \E\left[z_{i}^{\prime} \;\middle|\; z\right] \nonumber \\
&=\sum_{S \subseteq[2 N]} \widehat{F}(S) \cdot \prod_{i \in S} \operatorname{trnc}\left(z_{i}\right)=F(\operatorname{trnc}(z)) \label{eq1}
\end{align}
The next fact is stated as Claim 5.1 in \cite{DBLP:conf/stoc/RazT19}, and is unaffected by our modification.
\begin{fact}
  Let $F: \mathbb{R}^{2N} \rightarrow \mathbb{R}$ be a multilinear function that maps $\{\pm 1\}^{2N}$ to $[-1, 1]$. Let $z = (z_1, \ldots, z_{2N}) \in \mathbb{R}^{2N}$. Then, $|F(z)| \leq \prod_{i=1}^{2 N} \max \left(1,\left|z_{i}\right|\right)$.
\end{fact}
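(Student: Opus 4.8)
The plan is to prove this by induction on the number of variables $m := 2N$, with the engine being a one-dimensional estimate for affine functions. First I would establish the lemma: if $\phi(s) = a + bs$ is affine with $|\phi(1)|,|\phi(-1)| \le M$, then $|\phi(s)| \le M\cdot\max(1,|s|)$ for all $s \in \mathbb{R}$. For $|s| \le 1$ this is immediate, since $s = \frac{1+s}{2}\cdot 1 + \frac{1-s}{2}\cdot(-1)$ expresses $s$ as a convex combination of $\pm 1$, so $\phi(s)$ is the corresponding convex combination of $\phi(1),\phi(-1)$ and hence bounded by $M$. For $s \ge 1$ one uses the identity $a+bs = \frac{s+1}{2}(a+b) - \frac{s-1}{2}(a-b)$ and bounds each term by $M$ to get $|\phi(s)| \le sM$; the case $s \le -1$ follows from the symmetric identity (swapping the roles of $a+b$ and $a-b$).

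For the induction, the base case $m=0$ is just $|F| \le 1$ for a constant function into $[-1,1]$, with the empty product equal to $1$. For the inductive step I would fix the point $z = (z_1,\dots,z_m)$ and write $F$ as affine in its last variable, $F(w) = F_0(w_1,\dots,w_{m-1}) + w_m F_1(w_1,\dots,w_{m-1})$. For every Boolean $\xi \in \{\pm1\}^{m-1}$, the map $w_m \mapsto F(\xi,w_m)$ is affine and takes values in $[-1,1]$ at $w_m = \pm 1$, so the one-variable lemma (with $M=1$) gives $|F(\xi,z_m)| \le \max(1,|z_m|)$. Hence $\tilde F(w_1,\dots,w_{m-1}) := F(w_1,\dots,w_{m-1},z_m)$ is a multilinear function on $m-1$ variables that maps $\{\pm1\}^{m-1}$ into $[-\max(1,|z_m|),\,\max(1,|z_m|)]$, so $\tilde F/\max(1,|z_m|)$ meets the hypotheses of the statement in $m-1$ variables. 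Invoking the inductive hypothesis at $(z_1,\dots,z_{m-1})$ gives $|\tilde F(z_1,\dots,z_{m-1})| \le \max(1,|z_m|)\cdot\prod_{i=1}^{m-1}\max(1,|z_i|)$, which is precisely $|F(z)| \le \prod_{i=1}^{m}\max(1,|z_i|)$.

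There is no real obstacle here — this is the routine fact reproduced from \cite{DBLP:conf/stoc/RazT19}; the only point requiring care is in the inductive step, where one must restrict the remaining coordinates to Boolean values (so the one-variable lemma applies with blow-up factor $\max(1,|z_m|)$ relative to the bound $1$) \emph{before} rescaling and applying the induction hypothesis, rather than trying to relax all coordinates simultaneously. Equivalently, one can phrase the argument as peeling coordinates off one at a time, each step contributing a multiplicative factor $\max(1,|z_i|)$, so that $\prod_i \max(1,|z_i|)$ is exactly the accumulated blow-up.
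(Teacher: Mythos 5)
Your proof is correct. The paper itself states this as Claim 5.1 of Raz--Tal and gives no proof, but your induction with the one-variable affine lemma (bounding $|a+bs|$ by $\max(1,|s|)\cdot\max(|a+b|,|a-b|)$, then peeling off one coordinate at a time and rescaling before invoking the inductive hypothesis) is precisely the standard argument there, so there is nothing to reconcile.
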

The next two theorems are equivalent to Claim 5.2 and Claim 5.3 of \cite{DBLP:conf/stoc/RazT19} respectively, where they are proven over distribution $\mathcal{G}$. One can verify that they also hold over distribution $\mathcal{G'}$. In fact, by choosing $\varepsilon$ to be small enough, they even hold over $\mathcal{G'}^{\poly(N)}$, where we are given oracle access to an exponentially long list of samples from $\mathcal{G'}$. This is crucial for extending this $\mathsf{PH}$ bound to the long list problem. We provide their proofs in Appendix \ref{app:trunc} for completeness.

\begin{theorem}\label{th::5.2}
For any constant $c\ge 2$ there exists a choice of a constant $C$ in the definition of $\mathcal{G}^{\prime}$ such that:
  $\E_{(x, y) \sim \mathcal{G}^{\prime}}\left[\prod_{i=1}^{N} \max \left(1,\left|x_{i}\right|\right) \cdot \prod_{i=1}^{N} \max \left(1,\left|y_{i}\right|\right) \cdot \mathbbm{1}_{(x, y) \neq \operatorname{trnc}(x, y)}\right] \leq 4 \cdot N^{-c}$.
\end{theorem}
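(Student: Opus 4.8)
The plan is to follow the proof of Claims~5.1--5.2 in \cite{DBLP:conf/stoc/RazT19}, inserting one step that absorbs the squaring in the second block of coordinates. Write $(X,Y)\sim\mathcal{G}$, so that a sample from $\mathcal{G}'$ is $(x,y)=(X,\,Y^2-\varepsilon)$ with the square taken coordinatewise. The first move is to remove the square: since $0<\varepsilon<1$ we have $|Y_i^2-\varepsilon|\le\max(Y_i^2,\varepsilon)\le\max(1,|Y_i|)^2$, hence $\max(1,|Y_i^2-\varepsilon|)\le\max(1,|Y_i|)^2$; and the truncation event $\{(x,y)\neq\trunc(x,y)\}$ --- that some coordinate of $(X,Y^2-\varepsilon)$ leaves $[-1,1]$ --- forces either $|X_i|>1$ for some $i$ or $|Y_i|>1$ for some $i$. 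Calling this last event $E$, it therefore suffices to bound $\E_{(X,Y)\sim\mathcal{G}}[\,W\cdot\mathbbm{1}_E\,]$, where $W:=\prod_{i=1}^N\max(1,|X_i|)\cdot\prod_{i=1}^N\max(1,|Y_i|)^2\ge0$.

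By Cauchy--Schwarz, $\E[W\,\mathbbm{1}_E]\le\sqrt{\E[W^2]}\cdot\sqrt{\Pr[E]}$, and I would bound the two factors separately. For $\Pr[E]$: since $H_N$ is orthogonal, $Y=H_NX\sim\Nn(0,\varepsilon I_N)$, so each of the $2N$ coordinates $X_i,Y_i$ is marginally $\Nn(0,\varepsilon)$; a union bound with the Gaussian tail estimate $\Pr[|\Nn(0,\varepsilon)|>1]\le e^{-1/(2\varepsilon)}=N^{-C/2}$ gives $\Pr[E]\le 2N\cdot N^{-C/2}=2N^{1-C/2}$. For $\E[W^2]$: the coordinates $X_1,\dots,X_N$ are mutually independent and so are $Y_1,\dots,Y_N$ (the two blocks are correlated with each other but not internally), so a second Cauchy--Schwarz decouples the blocks and reduces everything to one-dimensional integrals:
\[ \E[W^2]\ \le\ \Big(\E\big[\textstyle\prod_i\max(1,|X_i|)^4\big]\cdot\E\big[\textstyle\prod_i\max(1,|Y_i|)^8\big]\Big)^{1/2}\ =\ \big(\mu_4^{\,N}\,\mu_8^{\,N}\big)^{1/2}, \]
with $\mu_{2k}:=\E_{g\sim\Nn(0,\varepsilon)}[\max(1,|g|)^{2k}]$. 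A routine Gaussian computation gives $\mu_{2k}=1+\E[(|g|^{2k}-1)\mathbbm{1}_{|g|>1}]\le 1+O_k(\varepsilon^{1/2}e^{-1/(2\varepsilon)})\le 1+N^{-C/3}$ for $N$ large, so $\mu_4^N\mu_8^N\le\exp(2N^{1-C/3})\le 2$ once $C>3$. Putting the pieces together, $\E[W\,\mathbbm{1}_E]\le 2^{3/4}\,N^{1/2-C/4}\le 4N^{-c}$ whenever $C\ge\max(20,\,4c+2)$, which is the claimed bound.

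The one genuinely new ingredient compared with \cite{DBLP:conf/stoc/RazT19} --- and the place I expect to need the most care --- is verifying that replacing the factors $\max(1,|Y_i|)$ by $\max(1,|Y_i|)^2$, as the squared coordinate forces, does not spoil the second-moment estimate; this is harmless precisely because $\mu_{2k}=1+e^{-\Theta(1/\varepsilon)}\cdot\poly(1/\varepsilon)$ for every fixed $k$, so \emph{any} constant power of the $Y$-block still contributes only a benign $(1+o(1/N))^{N}$. Finally, to obtain the long-list statement over $(\mathcal{G}')^{\poly(N)}$ one runs the same argument with the union bound taken over $\poly(N)$ coordinates and the exponent $N$ in $\mu_{2k}^{N}$ replaced by $\poly(N)$: since $\poly(N)\cdot N^{-C/3}\to 0$ and $\poly(N)\cdot N^{-C/2}\le N^{-c}$ once $C$ is a large enough constant, the same conclusion survives after enlarging $C$.
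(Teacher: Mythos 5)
Your proof is correct, and it takes a genuinely different route from the one in the paper. You apply Cauchy--Schwarz directly to $\E[W\mathbbm{1}_E]$, splitting the task into bounding $\E[W^2]$ (which stays close to $1$ because each one-dimensional moment $\mu_{2k}$ exceeds $1$ only by an $e^{-\Theta(1/\varepsilon)}$ amount) and bounding $\Pr[E]$ (a union of $2N$ Gaussian tail events). The paper instead uses the algebraic identity
$\E\bigl[\textstyle\prod_i\max(1,|z_i|)\cdot\mathbbm{1}_{z\neq\trunc(z)}\bigr]=\E\bigl[\textstyle\prod_i\max(1,|z_i|)\bigr]-\Pr[z=\trunc(z)]$,
which holds because the product collapses to exactly $1$ on the event $\{z=\trunc(z)\}$; both of these terms are then shown to be within $N^{-\Omega(1)}$ of $1$ (the first by Cauchy--Schwarz across the two blocks followed by explicit one-dimensional Gaussian integrals, the second by a union bound over the coordinates), so the difference is small. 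Your version is a bit more modular and avoids the exact integral computations, at the cost of paying an extra factor in the exponent of $C$ (your final estimate is $\sim N^{1/2-C/4}$ rather than $\sim N^{1-C/2}$, so you need a slightly larger constant $C$ for the same conclusion); both versions absorb the long-list extension in the same way, by re-running the union bound over $\mathrm{poly}(N)$ coordinates and enlarging $C$ accordingly. Your preliminary reduction $\max(1,|Y_i^2-\varepsilon|)\le\max(1,|Y_i|)^2$ is correct, and so is the observation that the truncation event is contained in $E=\{\exists i:|X_i|>1\}\cup\{\exists i:|Y_i|>1\}$.
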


\begin{theorem}\label{th::5.3}
For any constant $c \ge 2$ there exists a choice of a constant $C$ in the definition of $\mathcal{G}^{\prime}$ such that the following holds. Let $0 \leq p, p_0$ such that $p+p_0 \leq 1$. Let $F: \mathbb{R}^{2N} \rightarrow \mathbb{R}$ be a multilinear function that maps $\{\pm 1\}^{2N}$ to $[-1, 1]$. Let $z_0 \in [-p_0, p_0]^{2N}$. Then,
  $$\E_{z \sim \mathcal{G}^{\prime}}\left[\left|F\left(\operatorname{trnc}\left(z_{0}+p \cdot z\right)\right)-F\left(z_{0}+p \cdot z\right)\right|\right] \leq 8 \cdot N^{-c}$$
\end{theorem}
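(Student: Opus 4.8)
The plan is to follow the proof of Claim~5.3 in \cite{DBLP:conf/stoc/RazT19} essentially verbatim, reducing the statement to \Cref{th::5.2} via a pointwise domination argument. The useful observation is that all of the analytic content — the tail bounds controlling the (now squared) Gaussian coordinates $|Y_i^2 - \varepsilon|$ — has already been packaged into \Cref{th::5.2}, so what remains is purely algebraic. Accordingly I would not redo any Gaussian/chi-squared estimates here.

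First I would record the pointwise estimate. For any $w \in \mathbb{R}^{2N}$: if $w = \trnc(w)$ then $F(\trnc(w)) - F(w) = 0$; otherwise, using the Fact stated above (restating Claim~5.1 of \cite{DBLP:conf/stoc/RazT19}, namely $|F(v)| \le \prod_{i=1}^{2N}\max(1,|v_i|)$ for every multilinear $F$ mapping $\{\pm1\}^{2N}$ into $[-1,1]$), we get $|F(\trnc(w))| \le 1$ since every coordinate of $\trnc(w)$ lies in $[-1,1]$, and $|F(w)| \le \prod_i \max(1,|w_i|)$, hence
\[
|F(\trnc(w)) - F(w)| \;\le\; \Big(1 + \prod_{i=1}^{2N}\max(1,|w_i|)\Big)\cdot \mathbbm{1}_{w \neq \trnc(w)} \;\le\; 2\prod_{i=1}^{2N}\max(1,|w_i|)\cdot \mathbbm{1}_{w \neq \trnc(w)},
\]
where the last inequality uses that $w \neq \trnc(w)$ forces some $|w_i| > 1$ and hence $\prod_i \max(1,|w_i|) \ge 1$.

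Next I would substitute $w = z_0 + p\cdot z$ for $z \sim \mathcal{G}'$ and prove the coordinate-wise domination — this is the only place the hypotheses $|z_{0,i}| \le p_0$ and $p + p_0 \le 1$ are used. If $|z_i| \le 1$ then $|z_{0,i} + pz_i| \le p_0 + p \le 1$, so coordinate $i$ is untruncated and $\max(1,|z_{0,i}+pz_i|) = 1 = \max(1,|z_i|)$; if $|z_i| > 1$ then $|z_{0,i}+pz_i| \le p_0 + p|z_i| \le (p_0+p)|z_i| \le |z_i|$, so $\max(1,|z_{0,i}+pz_i|) \le \max(1,|z_i|)$. In particular $z_0 + pz \neq \trnc(z_0+pz)$ implies $z \neq \trnc(z)$, and multiplying the per-coordinate bounds gives $\prod_i \max(1,|z_{0,i}+pz_i|)\,\mathbbm{1}_{z_0+pz \neq \trnc(z_0+pz)} \le \prod_i \max(1,|z_i|)\,\mathbbm{1}_{z \neq \trnc(z)}$. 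Taking expectations over $z \sim \mathcal{G}'$ and combining with the previous display yields
\[
\E_{z\sim\mathcal{G}'}\big[\,|F(\trnc(z_0+p z)) - F(z_0+pz)|\,\big] \;\le\; 2\,\E_{z\sim\mathcal{G}'}\Big[\prod_{i=1}^{2N}\max(1,|z_i|)\cdot\mathbbm{1}_{z \neq \trnc(z)}\Big] \;\le\; 8\,N^{-c},
\]
where the final inequality is \Cref{th::5.2} (with the same constant $C$), after observing that a sample $z\sim\mathcal{G}'$ is $(X, Y^2-\varepsilon)$ with $X\sim\mathcal{N}(0,\varepsilon I_N)$, $Y=H_N X$, so that $\prod_{i=1}^{2N}\max(1,|z_i|) = \prod_{i=1}^{N}\max(1,|x_i|)\prod_{i=1}^{N}\max(1,|y_i|)$ in the notation of \Cref{th::5.2}.

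I do not expect any genuine obstacle in this lemma. The only thing requiring care is bookkeeping: reconciling the notation of \Cref{th::5.2} — whose ``$y$'' denotes the second block $Y^2-\varepsilon$ of a $\mathcal{G}'$-sample rather than $Y$ itself — with the factored product above, and double-checking that both pointwise dominations hold for every admissible $z_0$ and $p$. All the real work (the tail behavior of $|Y_i^2-\varepsilon|$ needed to get the $N^{-c}$ decay, including the ``long list'' strengthening over $\mathcal{G}'^{\,\poly(N)}$) is confined to \Cref{th::5.2}, which adapts Claim~5.2 of \cite{DBLP:conf/stoc/RazT19} to squared Gaussian coordinates.
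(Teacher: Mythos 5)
Your proof is correct and follows essentially the same route as the paper's: bound the truncation discrepancy pointwise by $2\prod_i\max(1,|w_i|)\cdot\mathbbm{1}_{w\neq\trunc(w)}$ via the multilinear-growth fact, establish the coordinate-wise domination $\max(1,|z_{0,i}+p z_i|)\le\max(1,|z_i|)$ together with the implication $z_0+pz\neq\trunc(z_0+pz)\Rightarrow z\neq\trunc(z)$ using $p+p_0\le1$, and then invoke \Cref{th::5.2}. The only (cosmetic) difference is that you make the implication between the two truncation indicators explicit via a case split, whereas the paper's appendix proof leaves that step implicit.
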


\subsection{Quantum Algorithm for Distinguishing \texorpdfstring{$\mathcal{D}$}{D} and \texorpdfstring{$U_{2N}$}{U2n}} \label{sec::algorithm}
Here, for completeness, we provide a quantum algorithm for distinguishing $U_{2N}$ and $\mathcal{D}$. Note that this quantum algorithm does not extend to solve $\bllqsv$, and does not contradict the $\mathsf{LLQSV}$ conjecture. As we will show, a problem similar to $\bllqsv$ is harder than $\sqfor$. However, we use these calculations in the next sections.
Given to Boolean functions $f,g: \{\pm 1\}^N$ we use the following algorithm $A$ to distinguish between functions sampled from $\mathcal{D}$ and $U_{2N}$:
\begin{enumerate}
  \item Apply Fourier transform on $f$ and sample $x$ from the distribution induced by $\hat{f}$.
  \item Accept if $g(x) = 1$, and reject otherwise.
\end{enumerate}
We can see that the success probability of this algorithm is given by $\sum_{x, g(x) = 1} \hat{f}(x)^2 = \frac{1+\varphi(f, g)}{2}$, where $\varphi(f, g) = \varphi(X, Y)$ is defined as follows:
$$\varphi(X, Y) = \frac{1}{N}\sum_i (\sum_j H_{ij}X_i)^2 Y_i = \Pr[A \text{ Accepts}] - \Pr[A \text{ Rejects}]$$
We first analyze this quantity for when the samples are taken from $\mathcal{G'}$. In fact, we can show that the multilinear part of $\varphi(X, Y)$, denoted by $\varphi_{i \neq j}(X, Y)$ is large on average. Lastly, we can use Theorem \ref{th::5.3} to show that a similar bound holds for samples from $\mathcal{D}$ while the quadratic part of $\varphi(X, Y)$ is $0$ on average. To summarize, we show that:
$$  \E_{(X, Y) \sim \mathcal{D}}[\varphi(X, Y)] = \E_{(X, Y) \sim \mathcal{D}}[\varphi_{i \neq j}(X, Y)]\approx \E_{(X, Y) \sim \mathcal{G}'}[\varphi_{i \neq j}(X, Y)] = O(\varepsilon^2)$$

\begin{theorem}
  $\E_{(X, Y) \sim U_{2N}}[\varphi(X, Y)] = 0$.
\end{theorem}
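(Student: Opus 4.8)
The plan is to use the fact that, in the uniform case $U_{2N}$, the two Boolean functions $f$ and $g$ — equivalently the sign vectors $X=(f(x))_x$ and $Y=(g(x))_x$ in $\{\pm1\}^N$ — are drawn \emph{independently}, each uniformly. Since $\varphi(X,Y)=\frac1N\sum_i\big(\sum_j H_{ij}X_j\big)^2 Y_i$ is, for every fixed $i$, the product of a function of $X$ alone with the single coordinate $Y_i$, this independence lets the expectation factor coordinatewise.

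Concretely, I would first condition on $X$ and average over $Y$. By linearity of expectation,
\[
\E_{Y}\!\left[\varphi(X,Y)\ \middle|\ X\right]=\frac1N\sum_i\Big(\sum_j H_{ij}X_j\Big)^2\cdot\E_Y[Y_i],
\]
and since each $Y_i$ is a uniform $\pm1$ bit independent of $X$ we have $\E_Y[Y_i]=0$, so the conditional expectation is identically zero. Taking the outer expectation over $X$ then yields $\E_{(X,Y)\sim U_{2N}}[\varphi(X,Y)]=0$.

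There is no real obstacle here: the claim is a one-line consequence of the independence of $f$ and $g$ under $U_{2N}$ together with the fact that $\varphi$ has degree one in $Y$. An equivalent Fourier-analytic phrasing is that in the multilinear expansion of $\varphi$ in the variables $(X,Y)$ every monomial contains exactly one $Y$-variable, and the uniform distribution annihilates every monomial of positive degree; in particular even the diagonal ($j=k$) contribution $\frac1N\sum_i\big(\sum_j H_{ij}^2\big)Y_i$ still carries a factor $Y_i$ and therefore vanishes in expectation. This should be contrasted with the $\mathcal{D}$ case analyzed in the subsequent theorems, where it is precisely the correlation between $X$ and $Y$ that makes $\E[\varphi]$ nonzero (of order $\varepsilon^2$).
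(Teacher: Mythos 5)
Your proof is correct and matches the paper's argument: both rely on the independence of $X$ and $Y$ under $U_{2N}$ together with $\E[Y_i]=0$ and linearity of expectation, with you conditioning on $X$ and the paper expanding into monomials $X_jX_kY_i$ — a cosmetic difference only.
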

\begin{proof}
  For every $i,j,k \in [N]$, $\E_{(X,Y)\sim U_{2N}}[X_j X_k Y_i] = 0$ because $Y_i$ is independent of $X$ and has mean $0$. By linearity of expectation $\E_{(X, Y) \sim U_{2N} }[\varphi(X, Y)] = 0$.
\end{proof}
\begin{theorem}
  $\E_{(X, Y') \sim \mathcal{G'}}[\varphi_{i \neq j}(X, Y')] = \varepsilon^2\cdot (2-2/N)$.
\end{theorem}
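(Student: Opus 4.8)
The plan is a short Gaussian fourth-moment computation. Recall that $\varphi_{i\ne j}$ is the part of $\varphi$ that is multilinear in the $X$-variables: expanding the square in $\varphi(X,Y)=\tfrac1N\sum_i\big(\sum_j H_{ij}X_j\big)^2 Y_i$ produces a diagonal piece $\tfrac1N\sum_i\big(\sum_j H_{ij}^2 X_j^2\big)Y_i=\tfrac1N\sum_i\big(\tfrac1N\|X\|^2\big)Y_i$ (using that every Hadamard entry satisfies $H_{ij}^2=1/N$) together with the off-diagonal piece, so that by definition $\varphi_{i\ne j}(X,Y)=\tfrac1N\sum_i\big(\sum_{j\ne k}H_{ij}H_{ik}X_jX_k\big)Y_i$. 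Under $\mathcal{G}'$ the second block of the sample is $Y'_i=Y_i^2-\varepsilon$ with $Y=H_N X$ and $X_1,\dots,X_N$ i.i.d.\ $\mathcal N(0,\varepsilon)$, so by linearity of expectation it suffices to evaluate $\E[X_jX_k\,(Y_i^2-\varepsilon)]$ for every triple $(i,j,k)$ with $j\ne k$.

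First I would note that $\E[X_jX_k]=\varepsilon\,\delta_{jk}=0$ since $j\ne k$, hence $\E[X_jX_k(Y_i^2-\varepsilon)]=\E[X_jX_kY_i^2]$. Since $(X_j,X_k,Y_i)$ is jointly centred Gaussian, Isserlis'/Wick's formula gives $\E[X_jX_kY_i^2]=\E[X_jX_k]\,\E[Y_i^2]+2\,\E[X_jY_i]\,\E[X_kY_i]$, and $\E[X_jY_i]=\sum_\ell H_{i\ell}\,\E[X_jX_\ell]=\varepsilon H_{ij}$, so the whole expression equals $2\varepsilon^2 H_{ij}H_{ik}$. Substituting back,
\begin{align*}
\E_{(X,Y')\sim\mathcal{G}'}[\varphi_{i\ne j}(X,Y')]
 &= \frac1N\sum_i\sum_{j\ne k}H_{ij}H_{ik}\cdot 2\varepsilon^2 H_{ij}H_{ik}
  = \frac{2\varepsilon^2}{N}\sum_i\sum_{j\ne k}H_{ij}^2H_{ik}^2 \\
 &= \frac{2\varepsilon^2}{N}\cdot N\cdot\frac{N(N-1)}{N^2}
  = \varepsilon^2\Big(2-\frac2N\Big),
\end{align*}
using once more that $H_{ij}^2=1/N$ for all $i,j$ and that there are $N$ choices of $i$ and $N(N-1)$ ordered pairs $(j,k)$ with $j\ne k$.

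There is essentially no obstacle here — the statement is a one-line moment calculation — but two bookkeeping points deserve care. The first is the $j\ne k$ restriction inside Isserlis' formula: it is exactly this restriction that annihilates the $\E[X_jX_k]\E[Y_i^2]$ pairing, which would otherwise contribute a non-negligible $\Theta(\varepsilon^2)$-per-$i$ (hence $\Theta(\varepsilon^2 N)$) term; isolating $\varphi_{i\ne j}$ is precisely what removes it. The second is to confirm that it is $\varphi_{i\ne j}$, not the full $\varphi$, that the later argument needs: the dropped diagonal term contributes $\E_{\mathcal{G}'}\big[\tfrac1{N^2}\,\|X\|^2\sum_i(Y_i^2-\varepsilon)\big]=\tfrac1{N^2}\E\big[\|X\|^2(\|X\|^2-N\varepsilon)\big]=\tfrac{2\varepsilon^2}{N}$, which accounts exactly for the gap between $\varepsilon^2(2-2/N)$ and the value $\E_{\mathcal{G}'}[\varphi]=\tfrac1N\sum_i\E[Y_i^2(Y_i^2-\varepsilon)]=2\varepsilon^2$, and matches the ``$\approx$'' appearing in the proof outline.
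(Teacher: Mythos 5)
Your proposal is correct and follows essentially the same route as the paper: reduce to $\E[X_jX_kY_i^2]$ using $j\ne k$, expand by Isserlis'/Wick's theorem (the paper writes this as $\sigma'_{ii}\sigma'_{jk}+2\sigma_{ij}\sigma_{ik}$), and use $\E[X_jY_i]=\varepsilon H_{ij}$ and $H_{ij}^2=1/N$ to finish the count. The closing sanity check reconciling $\varphi_{i\ne j}$ with the full $\varphi$ is a nice addition but not part of the paper's argument.
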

\begin{proof}
By the definition of $\mathcal{G'}$ and $\varphi_{i \neq j}(X, Y')$ we can write:
\begin{align*}
  \E_{(X, Y') \sim \mathcal{G'}}[\varphi_{i\neq j}(X, Y')] &= \frac{1}{N} \sum_i \sum_{j\neq k} H_{ij} H_{ik} \E[X_j X_k Y'_i] \\
  &= \frac{1}{N} \sum_i \sum_{j\neq k} H_{ij} H_{ik} \E[X_j X_k Y_i^2] - \varepsilon \E[X_j X_k] \\
  &= \frac{1}{N} \sum_i \sum_{j \neq k} H_{ij} H_{ik} (\sigma'_{ii} \sigma'_{jk} + 2 \sigma_{ij} \sigma_{ik}) \\
  &= \frac{1}{N} \sum_i \sum_{j \neq k} H_{ij}^2 H_{ik}^2 2 \varepsilon^2 = \frac{2(N-1)}{N} \varepsilon^2  
\end{align*}
The third line follows from the fact that $(X, Y)$ is a multivariate normal distribution, and the last inequality holds for $N\geq 2$. Furthermore, $\sigma'_{ij}$ is the covariance of $(X_i, X_j)$ and $(Y_i, Y_j)$, and $\sigma_{ij}$ is the covariance of $(Y_i, X_j)$.
\end{proof}
\begin{theorem} \label{th::qa}
  $\E_{(X, Y) \sim \mathcal{D}}[\varphi(X, Y)] \geq \varepsilon^2$.
\end{theorem}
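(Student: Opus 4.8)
The plan is to reduce the expectation over $\mathcal{D}$ to a Gaussian computation via the randomized-rounding identity and the Raz--Tal truncation bound, and then substitute the two moment computations just established. The first point to address is that $\varphi$ is \emph{not} multilinear: its diagonal ($j=k$) terms contribute $\tfrac1N\sum_i H_{ij}^2 X_j^2 Y_i$, which is quadratic in the $X_j$. So I would begin by replacing $\varphi$ with the multilinear polynomial that agrees with it on the cube. Splitting $\varphi(X,Y)=\tfrac1N\sum_i\big(\sum_j H_{ij}X_j\big)^2 Y_i$ into its $j=k$ and $j\neq k$ parts and using $X_j^2=1$ and $H_{ij}^2=1/N$ on $\{\pm1\}^{2N}$ gives
\[
\varphi(X,Y)=\varphi_{i\neq j}(X,Y)+\tfrac1N\sum_i Y_i=:F(X,Y)\qquad\text{on }\{\pm1\}^{2N},
\]
where $F$ is multilinear of degree $3$. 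Since on the cube $F=\varphi=\Pr[A\text{ accepts}]-\Pr[A\text{ rejects}]\in[-1,1]$, the function $F$ satisfies the hypotheses of \Cref{th::5.2,th::5.3}.

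Next I would apply the randomized-rounding identity \eqref{eq1} to $F$, obtaining $\E_{(X,Y)\sim\mathcal{D}}[\varphi(X,Y)]=\E_{z'\sim\mathcal{D}}[F(z')]=\E_{z\sim\mathcal{G}'}[F(\trnc(z))]$, and then invoke \Cref{th::5.3} with $z_0=0$, $p_0=0$, $p=1$, and a constant $c\ge2$ (choosing the constant $C$ in the definition of $\mathcal{G}'$ accordingly), which yields $\bigl|\E_{z\sim\mathcal{G}'}[F(\trnc(z))]-\E_{z\sim\mathcal{G}'}[F(z)]\bigr|\le 8N^{-c}$. It then remains to evaluate $\E_{z\sim\mathcal{G}'}[F(z)]$ exactly. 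By linearity, $\E_{z\sim\mathcal{G}'}[F(z)]=\E_{(X,Y')\sim\mathcal{G}'}[\varphi_{i\neq j}(X,Y')]+\tfrac1N\sum_i\E[Y'_i]$; the first term is $\varepsilon^2(2-2/N)$ by the immediately preceding theorem, and the second vanishes since $Y'_i=(H_N X)_i^2-\varepsilon$ and $\E[(H_N X)_i^2]=\varepsilon$ (this is precisely where the $\varepsilon$-bias is ``broken''). Combining, $\E_{(X,Y)\sim\mathcal{D}}[\varphi(X,Y)]\ge\varepsilon^2(2-2/N)-8N^{-c}$, and since $\varepsilon=1/(C\ln N)$ makes $\varepsilon^2(1-2/N)=\Theta(1/\ln^2 N)$ dominate $8N^{-c}$ for $c\ge2$ and $N$ large, this is $\ge\varepsilon^2$.

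The step I expect to require the most care is the first one. One has to (i) recognize that $\varphi$ itself is not multilinear and pass to the multilinear $F$ agreeing with it on $\{\pm1\}^{2N}$; (ii) verify that the diagonal correction $\tfrac1N\sum_i Y_i$ picked up in this passage has \emph{exactly} zero mean under $\mathcal{G}'$, so it does not erode the $\Theta(\varepsilon^2)$ signal; and (iii) check that $F$ maps the cube into $[-1,1]$, which is what makes \Cref{th::5.2,th::5.3} applicable. After that the argument is pure bookkeeping: the rounding identity, one invocation of \Cref{th::5.3}, and the two Gaussian moment computations already carried out above.
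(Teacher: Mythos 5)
Your proof is correct and follows essentially the same route as the paper: pass from $\mathcal{D}$ to $\mathcal{G}'$ via the rounding identity \eqref{eq1} and \Cref{th::5.3}, then use the Gaussian moment computation $\E_{\mathcal{G}'}[\varphi_{i\neq j}] = \varepsilon^2(2-2/N)$. The only difference is the order in which the diagonal terms are discarded. The paper drops them first, under $\mathcal{D}$, invoking $\E_{\mathcal{D}}[Y_i]=0$ (which is only approximate, since truncation biases the rounding slightly — the paper silently absorbs this into the same $O(N^{-2})$ error) and then applies \Cref{th::5.3} to $\varphi_{i\neq j}$, which maps the cube to $[-2,2]$, paying a factor of $2$. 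You instead keep the multilinear replacement $F=\varphi_{i\neq j}+\tfrac1N\sum_i Y_i$, which agrees with $\varphi$ on the cube and hence maps to $[-1,1]$, apply \Cref{th::5.3} to $F$ directly, and only then drop the diagonal on the Gaussian side using the exact identity $\E_{\mathcal{G}'}[Y'_i]=0$. Your ordering is marginally cleaner (the mean-zero fact is exact where you use it, and you don't need the $[-2,2]$ adjustment), but the two arguments are otherwise identical.
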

\begin{proof}
  Since $\E[Y_i] = 0$, by linearity of expectation we can write:
  \begin{align*} 
   \E_{(X, Y) \sim \mathcal{D}}[\varphi(X, Y)]  &= \frac{1}{N} \sum_i \sum_{j \neq k} H_{ij} H_{ik} \E[X_j X_k Y_i] + \frac{1}{N} \sum_i \sum_{j} H_{ij}^2 \E[X_j^2 Y_i] \\
   & =  \frac{1}{N} \sum_i \sum_{j \neq k} H_{ij} H_{ik} \E[X_j X_k Y_i] + \frac{1}{N} \sum_i \sum_{j} H_{ij}^2 \E[Y_i] \\
   & = \E_{(X, Y) \sim \mathcal{D}}[\varphi_{i \neq j}(X, Y)] 
  \end{align*} 
  Note that $\varphi_{i \neq j}(X, Y)$ is a multilinear function that takes $\{\pm 1\}^{2N}$ to $[-2, 2]$ since the subtracted quadratic part maps $\{\pm\}^{2N}$ to $[-1, 1]$ and $\varphi(X, Y) \in [-1, 1]$. So we can use Theorem \ref{th::5.3} with $p_0 = 0, p = 1$ and Equation \ref{eq1} to write:
  \begin{align*} 
    \left|\E_{(X, Y) \sim \mathcal{D}}[\varphi_{i \neq j}(X, Y)] - \E_{(X, Y) \sim \mathcal{G'}} [\varphi_{i \neq j}(X, Y)]\right| \leq 16 \cdot N^{-2}
  \end{align*}
  Then we can get a lower bound on $\E_{(X, Y) \sim \mathcal{D}}[\varphi(X, Y)]$:
  \begin{equation*}
 \E_{(X, Y) \sim \mathcal{D}}[\varphi_{i \neq j}(X, Y)] \geq \E_{(X, Y) \sim \mathcal{G'}}[\varphi_{i \neq j}(X, Y)] - 16 \cdot N^{-2} \geq \varepsilon^2 \qedhere
  \end{equation*}
\end{proof}

\subsection{Classical Hardness} \label{sec::hardness}
Let $L_{1,2}(F) = \sum_{S \subseteq [2N], |S| = 2} |\hat{F}(S)|$. The results in this section combined with the tail bound from~\cite{Tal17}, suffices to prove a lower bound on the size of any $\mathsf{AC^0}$ circuit that distinguishes $\mathcal{D}$ from uniform.

\begin{theorem}[\protect{\cite[Claim ~A.5]{DBLP:conf/innovations/ChattopadhyayHL19}}]\label{th::res}
  Let $f$ be a multi-linear function on $\mathbb{R}^N$ and $x \in [-1/2, 1/2]^N$. There exists a distribution over random restrictions $\mathcal{R}_x$ such that for any $y \in \mathbb{R}^N$:
  $$f(x+y) - f(x) = \E_{\rho \sim \mathcal{R}_x}[f_\rho(2\cdot y) - f_\rho(0)] $$
\end{theorem}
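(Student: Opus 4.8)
The plan is to prove \Cref{th::res} by writing down the distribution $\mathcal{R}_x$ explicitly and then verifying the claimed identity term-by-term in the multilinear expansion $f(z)=\sum_{S\subseteq[N]}\hat f(S)\prod_{i\in S}z_i$. Concretely, I would let $\mathcal{R}_x$ be the (product) distribution over restrictions that, independently for each coordinate $i\in[N]$, leaves $i$ free with probability $1/2$ and otherwise fixes $z_i=2x_i$. Since $x\in[-1/2,1/2]^N$ we have $2x_i\in[-1,1]$, so this is a legitimate restriction; if one insists on fixing coordinates to $\pm1$ rather than to values in $[-1,1]$, an additional independent randomized rounding of each fixed coordinate leaves every expectation of a multilinear function unchanged, so it is harmless.

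For the left-hand side I would substitute $z_i=x_i+y_i$ and expand each monomial as $\prod_{i\in S}(x_i+y_i)=\sum_{T\subseteq S}\prod_{i\in T}y_i\prod_{i\in S\setminus T}x_i$; the $T=\emptyset$ terms sum to exactly $f(x)$, so $f(x+y)-f(x)=\sum_{S}\hat f(S)\sum_{\emptyset\neq T\subseteq S}\prod_{i\in T}y_i\prod_{i\in S\setminus T}x_i$. For the right-hand side, for a fixed restriction $\rho$ with free set $F\subseteq[N]$ one has $f_\rho(w)=\sum_S\hat f(S)\prod_{i\in S\cap F}w_i\prod_{i\in S\setminus F}(2x_i)$, and since the monomials with $S\cap F=\emptyset$ contribute identically to $f_\rho(2y)$ and $f_\rho(0)$, we get $f_\rho(2y)-f_\rho(0)=\sum_{S:\,S\cap F\neq\emptyset}\hat f(S)\,2^{|S|}\prod_{i\in S\cap F}y_i\prod_{i\in S\setminus F}x_i$. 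Taking $\E_{\rho\sim\mathcal{R}_x}$ and using that $S\cap F$ equals any prescribed subset $T\subseteq S$ with probability exactly $2^{-|S|}$, the coefficient of $\hat f(S)$ becomes $2^{|S|}\cdot 2^{-|S|}\sum_{\emptyset\neq T\subseteq S}\prod_{i\in T}y_i\prod_{i\in S\setminus T}x_i=\sum_{\emptyset\neq T\subseteq S}\prod_{i\in T}y_i\prod_{i\in S\setminus T}x_i$, which agrees with the left-hand side term for term.

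I do not expect a genuine obstacle here: the argument is elementary bookkeeping. The only point that must be gotten exactly right is the choice of the fixed value $2x_i$ (rather than $x_i$): the factor $2^{|S|}$ produced by evaluating the free variables at $2y$ together with fixing the remaining coordinates of $S$ to $2x_i$ is exactly what cancels the probability $2^{-|S|}$ that $S\cap F=T$, and the hypothesis $x\in[-1/2,1/2]^N$ is precisely what keeps the fixed values inside $[-1,1]$. Downstream this lemma is used only as a ``derivative-as-an-average-of-restrictions'' identity feeding the random-restriction and $L_1$-Fourier-tail estimates of Raz--Tal and Tal in the classical-hardness analysis, so no quantitative refinement is needed.
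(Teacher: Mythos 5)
Your proof is correct. The paper does not reprove \Cref{th::res}; it cites the statement directly as Claim~A.5 of~\cite{DBLP:conf/innovations/ChattopadhyayHL19}, and your argument is essentially the standard one from that source: define $\mathcal{R}_x$ by independently leaving each coordinate free with probability $1/2$ and otherwise fixing it (via randomized rounding to $\pm1$ with mean $2x_i$, valid since $x_i\in[-1/2,1/2]$), then match coefficients monomial by monomial using that $\Pr[S\cap F=T]=2^{-|S|}$ exactly cancels the $2^{|S|}$ coming from evaluating at $2y$ on the free part and fixing to mean $2x_i$ on the rest. One small remark: the randomized-rounding version you mention in passing is not merely ``harmless'' but is the version one actually wants, because the downstream hypothesis in \Cref{th::ch} (that $L_{1,2}(F_\rho)\le\ell$ for all restrictions $\rho$) is a statement about $\pm1$-valued restrictions of a Boolean function, so it is cleanest to state $\mathcal{R}_x$ as a distribution over such restrictions from the outset.
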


\begin{theorem} \label{th::ch}
  Let $t = N^{c-1}$ for a sufficiently large universal constant $c$ (e.g., $c = 40$).
  Let $(X, Y) \in [-1/2, 1/2]^{2N}$, $\Delta X \sim \Nn(0, \varepsilon/t)$ and $\Delta Y = H \cdot \Delta X$. Let $F: \mathbb{R}^{2N} \rightarrow \mathbb{R}$ be a multilinear function where for all random restrictions $L_{1,2}(F_\rho) \leq \ell$:  
  $$(*) = \Big|\E_{\Delta X}\Big[F\big(X + \Delta X, (Y+\Delta Y)^2 - \tfrac{i\cdot \varepsilon}{t}\big)\Big] - \E\Big[F\big(X, Y^2 - \tfrac{(i-1)\cdot \varepsilon}{t}\big)\Big]\Big| \leq \frac{O(\varepsilon \cdot \ell)}{t \sqrt{N}}$$ 
\end{theorem}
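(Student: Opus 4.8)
The plan is to carry out the single Brownian-increment estimate à la Raz and Tal~\cite{DBLP:conf/stoc/RazT19}, modified to accommodate the \emph{squared} second block of coordinates. The first move is to expose the martingale structure behind the telescoping step. Write $W=(Y_i^2-\tfrac{(i-1)\varepsilon}{t})_{i\in[N]}$ for the ``past'' value of the squared block and
\[
\Delta W_i \;=\; (Y_i+\Delta Y_i)^2-\tfrac{i\varepsilon}{t}-W_i \;=\; 2Y_i\Delta Y_i+\Delta Y_i^2-\tfrac{\varepsilon}{t},
\]
so that $(*)=\big|\E_{\Delta X}[F(X+\Delta X,\,W+\Delta W)]-F(X,W)\big|$. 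Because $H$ is orthogonal, $\Var(\Delta Y_i)=\tfrac{\varepsilon}{t}\sum_k H_{ik}^2=\tfrac{\varepsilon}{t}$, hence $\E_{\Delta X}[\Delta W_i]=2Y_i\cdot 0+\tfrac{\varepsilon}{t}-\tfrac{\varepsilon}{t}=0$; this is exactly the identity from the introduction that turns $Z^2-\E[Z^2]$ into a martingale, and it is the reason the drift $\varepsilon i/t$ must be split across the telescoping sum. I would also record at the outset that the center $(X,W)$ lies in $[-1/2,1/2]^{2N}$: the block $X$ is assumed there, and $W_i=Y_i^2-\tfrac{(i-1)\varepsilon}{t}\in[-\varepsilon,\tfrac14]\subseteq[-\tfrac12,\tfrac12]$.

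Next I would invoke the random-restriction identity (Theorem~\ref{th::res}) at the point $(X,W)$, giving $F(X+\Delta X,W+\Delta W)-F(X,W)=\E_{\rho\sim\mathcal{R}_{(X,W)}}[F_\rho(2\Delta X,2\Delta W)-F_\rho(0)]$; since $\mathcal{R}_{(X,W)}$ does not depend on $\Delta X$, we may swap the two expectations and bound $(*)\le\E_\rho\big|\E_{\Delta X}[F_\rho(2\Delta X,2\Delta W)-F_\rho(0)]\big|$. Fix $\rho$, expand $F_\rho(u)-F_\rho(0)=\sum_{\emptyset\ne S}\hat{F}_\rho(S)\prod_{a\in S}u_a$ with $u=(2\Delta X,2\Delta W)$, and take $\E_{\Delta X}$ level by level. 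The degree-$1$ terms vanish since $\E[2\Delta X_j]=0$ and $\E[2\Delta W_i]=0$ (the martingale property — this is where the drift split pays off). Among degree-$2$ terms, within-block pairs vanish: $\E[\Delta X_j\Delta X_k]=0$ for $j\ne k$, and for $i\ne i'$ all of $\E[\Delta Y_i\Delta Y_{i'}]$, $\E[\Delta Y_i\Delta Y_{i'}^2]$, and $\Cov(\Delta Y_i^2,\Delta Y_{i'}^2)=2\Cov(\Delta Y_i,\Delta Y_{i'})^2$ vanish because $(HH^\top)_{ii'}=0$, so $\E[\Delta W_i\Delta W_{i'}]=0$. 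The only surviving contribution is from cross pairs $\{X_j,W_i\}$: using $\E[\Delta X_j\Delta Y_i]=\tfrac{\varepsilon}{t}H_{ij}$ and $\E[\Delta X_j\Delta Y_i^2]=0$ (an odd moment of a centered Gaussian vector), one gets $\E[(2\Delta X_j)(2\Delta W_i)]=\tfrac{8\varepsilon}{t}Y_iH_{ij}$, of magnitude at most $\tfrac{4\varepsilon}{t\sqrt N}$ since $|Y_i|\le\tfrac12$ and $|H_{ij}|=\tfrac1{\sqrt N}$. Hence the degree-$2$ part is at most $\tfrac{4\varepsilon}{t\sqrt N}\sum_{i,j}|\hat{F}_\rho(\{X_j,W_i\})|\le\tfrac{4\varepsilon}{t\sqrt N}L_{1,2}(F_\rho)\le\tfrac{4\varepsilon\ell}{t\sqrt N}$, the claimed bound.

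It then remains to argue the degree-$\ge 3$ terms are negligible. Unlike in \cite{DBLP:conf/stoc/RazT19}, odd-degree terms do \emph{not} vanish by parity here, since each $\Delta W_i$ is quadratic in the underlying Gaussian; instead I would control them crudely using smallness of the increments. Each $2\Delta X_j$ is $\Nn(0,4\varepsilon/t)$ and each $2\Delta W_i$ is a degree-$\le2$ polynomial of $\Delta X$ with $\|2\Delta W_i\|_{L^p}\le(Cp)^{O(1)}\sqrt{\varepsilon/t}$ by Gaussian hypercontractivity; by generalized Hölder $|\E[\prod_{a\in S}u_a]|\le((C|S|)^{O(1)}\sqrt{\varepsilon/t})^{|S|}$, while $\sum_{|S|=k}|\hat{F}_\rho(S)|\le\binom{2N}{k}^{1/2}\|F_\rho\|_2\le(2N)^{k/2}$ (recall $F$, hence $F_\rho$, maps the cube to $[-1,1]$, so $\|F_\rho\|_2\le1$). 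Thus the degree-$k$ part is at most $\big((Ck)^{O(1)}\sqrt{2N\varepsilon/t}\big)^k$, and with $t=N^{c-1}$ and $c$ a sufficiently large constant this is $N^{-\Omega(ck)}$, so summing over $k\ge3$ contributes $N^{-\omega(1)}$, which is absorbed into the $O(\cdot)$. Averaging the per-$\rho$ bound over $\rho\sim\mathcal{R}_{(X,W)}$ and using $L_{1,2}(F_\rho)\le\ell$ for every $\rho$ yields $(*)=O(\varepsilon\ell/(t\sqrt N))$.

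The main obstacle is precisely the passage to the squared coordinates: one must (i) find the right presentation of the step as a mean-zero martingale increment — handled by splitting the drift $\varepsilon i/t$ evenly across the $t$ steps and using orthogonality of $H$ for $\Var(\Delta Y_i)=\varepsilon/t$; (ii) check that the favourable $1/\sqrt N$ cancellation at level $2$ survives — it does, at the cost of a harmless extra factor $Y_i$ with $|Y_i|\le\tfrac12$; and (iii) absorb the now non-vanishing odd-degree terms, which is where we exploit the freedom to take $t=N^{c-1}$ with $c$ large (the choice $c=40$ leaves ample room). All remaining ingredients — the equality of expectations under $\Dd$ and $\mathcal{G}'$ after randomized rounding (Equation~\eqref{eq1}) and the truncation estimates of Theorems~\ref{th::5.2} and~\ref{th::5.3} — are inherited essentially verbatim from the stated results, so once Theorem~\ref{th::ch} is in hand the telescoping argument of Section~\ref{sec::hardness} proceeds as in \cite{DBLP:conf/stoc/RazT19}.
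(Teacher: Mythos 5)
Your proof is correct and follows essentially the same route as the paper's: rewrite the step as $\E_{\Delta X,\rho}[F_\rho(2\Delta X,2\Delta W)-F_\rho(0)]$ via Theorem~\ref{th::res} applied at the center $(X,W)\in[-1/2,1/2]^{2N}$, kill level $1$ by the martingale property (the split of the drift $\varepsilon i/t$), extract the $O(\varepsilon\ell/(t\sqrt N))$ bound from the level-$2$ cross-covariances $\E[(2\Delta X_j)(2\Delta W_i)]=\tfrac{8\varepsilon}{t}Y_iH_{ij}$ together with $L_{1,2}(F_\rho)\le\ell$, and discard levels $\ge 3$ using that $t=N^{c-1}$ is a large power of $N$. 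The only cosmetic differences from the paper are that you observe the within-block level-$2$ covariances vanish \emph{exactly} (by $(HH^\top)_{ii'}=0$ and parity of centered Gaussian moments, where the paper merely bounds them by $O(1/t^{3/2})$), and that you control the level-$\ge 3$ terms via Gaussian hypercontractivity plus Cauchy--Schwarz/Parseval, where the paper uses an Isserlis-type moment count with the trivial bound of at most $N^k$ monomials; both give the same $N^{-\Omega(ck)}$ decay once $c$ is large enough.
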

\begin{proof}
 Let $Y' = Y^2 - \frac{\varepsilon \cdot (i-1)}{t}$. We can rewrite $(*)$ as follows:
 $$(*) = \E_{\Delta X}\Big[F(X + \Delta X, Y' + 2Y\cdot \Delta Y + \Delta Y^2 - \tfrac{\varepsilon}{t}) - F(X, Y')\Big] $$
 Note that $Y' \in [-1/2, 1/2]^N$. Using Theorem \ref{th::res} we can write this value using random restrictions of $F$:
 $$(*) = \E_{\Delta X, \rho}\Big[F_\rho \big (2 \cdot \Delta X, 2 \cdot \Delta Y'\big) - F_\rho(0, 0)\Big] $$
 Where $\Delta Y' = 2Y\cdot \Delta Y + \Delta Y^2 - \tfrac{\varepsilon}{t}$. Let $Z = (2\Delta X, 2\Delta Y')$. Next we can expand $(*)$ in terms of multilinear expansion of $F_\rho$. From the Isserlis' theorem~\cite{isserlis1918formula} we can see that monomials of degree $k>2$ in the variables $Z$ scale like $O(\frac{k^{k}}{t^{k/2}})$, and there are at most $N^{k}$ such monomials.
 Since we can choose $t$ to be an arbitrarily large power of $N$, their contribution would be negligible, i.e., $o(\eps/t\sqrt{N})$. 
 Furthermore, since for every input variable $X_i$ of $F_\rho$, $\E[Z_i] = 0$ (using the fact that $\E[\Delta Y^2-\tfrac{\eps}{t}] = 0$) we can also ignore monomials of degree $1$. Note that this is precisely why we also break $\varepsilon$ in the telescopic sum. The remaining step is to bound monomials of degree $2$.
 
To account for monomials of degree~$2$, we upper bound the covariances of all pairs of variables from $Z = (2\Delta X, 2\Delta Y')$ by $O(\frac{\eps}{t\sqrt{N}})$.
First, for $i\neq j$ we have that $(\Delta X)_i$ and $(\Delta X)_j$ are independent and thus have covariance $0$.
Second, we bound $\left|\Cov\left(2\left(\Delta X\right)_i, 2\left(\Delta Y'\right)_j \right)\right|$ for any $(i,j)$ by:
 $$ |4 \cdot 2Y_j \Cov ((\Delta X)_i , (\Delta Y)_j) + 4\cdot \Cov((\Delta X)_i, (\Delta Y)_j^2)| \leq \frac{4\varepsilon}{t \sqrt{N}} + O\left(\tfrac{1}{t^{3/2}}\right) = O\left( \frac{\varepsilon}{t\sqrt{N}}\right) $$
 Similarly, we can bound $\left|\Cov \left( 2(\Delta Y')_i, 2(\Delta Y')_j \right) \right|$ for $i\neq j$
 $$4 \cdot 4Y_iY_j \Cov \left((\Delta Y)_i, (\Delta Y)_j \right) + O\left(\frac{1}{t^{3/2}}\right) \leq O\left(\frac{\varepsilon}{t \sqrt{N}}\right)$$
 Finally, we can bound $(*)$:
 \begin{align*}
\left|\E_{\Delta X, \rho}[F_\rho(Z) - F_{\rho}(0^{2N})]\right|&\leq \sum_{i,j} \left|\E_{\Delta X, \rho}[\hat{F_\rho}(\{i,j\}) \cdot  Z_i Z_j]\right| + o\left(\tfrac{\eps \ell}{t\sqrt{N}}\right)\\
&\le \max_{i,j} \left|\Cov(Z_i, Z_j)\right| \cdot \max_\rho  \sum_{\substack{S \subseteq [2N],\\ |S| = 2}} |\hat{F_{\rho}}(S) | + o\left(\tfrac{\eps \ell}{t\sqrt{N}}\right)\\
&\le O\left(\frac{\eps \ell}{t\sqrt{N}}\right).\qedhere
\end{align*}
\end{proof}

\begin{theorem} \label{th::hardness} Let $\ell\ge 1$.
  Let $F: \{\pm1\}^{2N} \rightarrow \{\pm1 \}$ be a multilinear function with bounded spectral norm $L_{1,2}(F) \leq \ell$. Then:
  $$(*) = \E_{(X, Y') \sim \Dd}\Big[F(X, Y') - F(0, 0) \Big] \leq O\left(\frac{\varepsilon \cdot \ell}{\sqrt{N}}\right) $$
\end{theorem}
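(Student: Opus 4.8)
The plan is to mirror the Brownian‑motion/telescoping argument of Raz and Tal, with \Cref{th::ch} supplying the per‑step estimate. First I would reduce from $\mathcal{D}$ to the untruncated Gaussian $\mathcal{G}'$: since $\trnc(0)=0$ leaves $F(0,0)$ unchanged, Equation~\eqref{eq1} gives $\E_{z'\sim\mathcal{D}}[F(z')]=\E_{z\sim\mathcal{G}'}[F(\trnc(z))]$, and \Cref{th::5.3} with $z_0=0$ and $p=1$ gives $|\E_{z\sim\mathcal{G}'}[F(\trnc(z))-F(z)]|\le 8N^{-c}$. Because $\ell\ge 1$ and $\varepsilon=\Theta(1/\ln N)$ we have $N^{-c}=o(\varepsilon\ell/\sqrt N)$, so it suffices to bound $\E_{(X,Y)\sim\mathcal{G}'}[F(X,Y^2-\varepsilon)]-F(0,0)$.

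Next I would set up the random walk exactly as in \Cref{th::ch}: put $X^{(0)}=0$, $X^{(i)}=X^{(i-1)}+\Delta_X^{(i)}$ with $\Delta_X^{(i)}\sim\mathcal{N}(0,\varepsilon/t)$ independent, $t=N^{c-1}$, and $Y^{(i)}=H_NX^{(i)}$, so that $(X^{(t)},(Y^{(t)})^2-\varepsilon)\sim\mathcal{G}'$. The key structural fact is that each coordinate of $(Y^{(i)})^2-\varepsilon i/t$ is a martingale in $i$ (the rows of $H_N$ are unit vectors, so $(\Delta_Y^{(i)})_j\sim\mathcal{N}(0,\varepsilon/t)$ and the cross term has zero conditional mean), which lets me write the quantity of interest as the telescoping sum
$$\sum_{i=1}^{t}\E\Bigl[F\bigl(X^{(i)},(Y^{(i)})^2-\tfrac{\varepsilon i}{t}\bigr)-F\bigl(X^{(i-1)},(Y^{(i-1)})^2-\tfrac{\varepsilon(i-1)}{t}\bigr)\Bigr].$$

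To bound the $i$‑th term I would condition on $(X^{(i-1)},Y^{(i-1)})$. On the event $\mathcal{E}_i$ that $X^{(i-1)}\in[-1/2,1/2]^N$ and $Y^{(i-1)}=H_NX^{(i-1)}\in[-1/2,1/2]^N$, \Cref{th::ch} (invoked with the bound $L_{1,2}\le\ell$, understood to apply also to the restrictions of $F$ produced by \Cref{th::res}) bounds the conditional expectation of the increment by $O(\varepsilon\ell/(t\sqrt N))$. The complementary event has probability at most $4N\cdot N^{-\Omega(C)}$ per step, since each of the $2N$ coordinates of $X^{(i-1)},Y^{(i-1)}$ is a Gaussian of variance at most $\varepsilon=1/(C\ln N)$, so a Gaussian tail bound (or \Cref{th::5.2}) applies; there I bound the increment trivially by $2$. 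Summing over the $t=N^{c-1}$ steps yields $t\cdot O(\varepsilon\ell/(t\sqrt N))+t\cdot 8N\cdot N^{-\Omega(C)}=O(\varepsilon\ell/\sqrt N)+N^{c}\cdot N^{-\Omega(C)}$; taking the constant $C$ large enough relative to $c$ (as \Cref{th::5.2,th::5.3} permit) makes the second term $o(\varepsilon\ell/\sqrt N)$, and adding back the $8N^{-c}$ truncation error gives the claimed $O(\varepsilon\ell/\sqrt N)$.

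The genuinely hard step — controlling a single Brownian increment in the presence of the squared coordinates and the $\varepsilon$‑shift, where the martingale structure (the ``broken'' $\varepsilon$) kills the degree‑$1$ monomials, Isserlis' theorem kills the degree‑$\ge 3$ monomials, and one is left estimating the pairwise covariances of $(\Delta X,\Delta Y')$ against $L_{1,2}(F_\rho)$ — is exactly \Cref{th::ch}, which is already established. So within this proof the only real care needed is the bookkeeping around truncation and around the rare event that the Brownian path exits $[-1/2,1/2]^{2N}$; both are dispatched by \Cref{th::5.2,th::5.3} together with the fact that $\varepsilon$ is only polylogarithmically small, so $C$ can be chosen to dominate every error term by the main $O(\varepsilon\ell/\sqrt N)$ contribution.
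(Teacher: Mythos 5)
Your decomposition has a genuine gap in the handling of the bad event. After you strip the truncation once up front (applying \Cref{th::5.3} with $z_0=0$, $p=1$), the telescoping sum involves $F$ evaluated at the \emph{untruncated} points $\bigl(X^{(i)},(Y^{(i)})^2-\varepsilon i/t\bigr)$. On the complement of $\mathcal{E}_i$ you claim to ``bound the increment trivially by $2$,'' but this is exactly where the argument breaks: $F$ is a multilinear polynomial that is only guaranteed bounded by $1$ on $[-1,1]^{2N}$, and on $\lnot\mathcal{E}_i$ the arguments have escaped that cube, so $|F|$ can be as large as $\prod_i\max(1,|z_i|)$ (Fact~5.1 in the paper). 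The contribution $\E\bigl[\text{increment}\cdot\mathbbm{1}_{\lnot\mathcal{E}_i}\bigr]$ therefore needs a genuine moment estimate (in the spirit of \Cref{th::5.2}) rather than $2\Pr[\lnot\mathcal{E}_i]$, and with $t=N^{c-1}$ such terms to sum, the slack is not forgiving.

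The paper's proof sidesteps this by \emph{keeping} the truncation inside the telescoping sum: each $S_i = \E\bigl[F(\trunc(X^{(i)}),\trunc(\cdot)) - F(\trunc(X^{(i-1)}),\trunc(\cdot))\bigr]$ has both arguments in $[-1,1]^{2N}$, so $|S_i|\le 2$ unconditionally, making the bad-event bound $2\Pr[\lnot E_{i-1}]$ legitimate. Then, conditioned on the good event $E_{i-1}$ (where the time-$(i-1)$ truncation is vacuous), the paper splits $S_i$ into the untruncated increment, which is handled by \Cref{th::ch}, plus the per-step truncation error $\E[F(\trunc(X^{(i)}),\trunc(\cdot))-F(X^{(i)},\cdot)\mid E_{i-1}]$, which is handled by \Cref{th::5.3}. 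That reordering --- remove truncation per step after conditioning, rather than all at once --- is precisely what makes the trivial $|F|\le 1$ bound available when you need it. To repair your proof you would either adopt this ordering or replace ``bound by $2$'' with an explicit moment bound $\E\bigl[\prod_j\max(1,|z^{(i)}_j|)\cdot\mathbbm{1}_{\lnot\mathcal{E}_i}\bigr]$ controlled via Gaussian tails; the rest of your outline (the martingale structure, the role of \Cref{th::ch}, the choice of $t$ and $C$) matches the paper.
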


\begin{proof}
  From Equation \ref{eq1}, it suffices to show:
  $$\E_{(X, Y) \sim \mathcal{G}}\Big[F\big(\trunc(X), \trunc(Y^2 - \varepsilon)\big) - F(0, 0) \Big] \leq O\left(\frac{\varepsilon \cdot \ell}{\sqrt{N}}\right) $$
  We start by writing this value as a telescoping sum. Let $t  = N^{c-1}$ for $c$ the constant guaranteed by \Cref{th::ch}. Set $C$ in the definition of $\eps$ to guarantee that the error in \Cref{th::5.3} is at most $8\cdot N^{-c}$ and that $e^{-1/(8\eps)} \le N^{-(c+1)}$.
  Then we can write:
  $$(*) = \sum_{i = 1}^t \E\Big[F(\trunc(X^{(i)}), \trunc(Y^{(i)^2} - \tfrac{\varepsilon \cdot i}{t})) - F( \trunc(X^{(i-1)}), \trunc(Y^{(i-1)^2} - \tfrac{\varepsilon \cdot (i-1)}{t})) \Big] $$
  where $X^{(i)} = X^{(i-1)} + \Delta X$, $ Y^{(i)} = Y^{(i-1)} + \Delta Y$ and $\Delta X \sim \Nn(0, \varepsilon / t)$ and $\Delta Y = H \cdot \Delta X$. We are going to use Theorem 8 to bound each term in the sum to get the final result.
  Let $E_{i-1}$ be the event that $(X^{(i-1)}, Y^{(i-1)} - \tfrac{\varepsilon \cdot (i-1)}{t}) \in [-1/2, 1/2]^{2N}$. We can prove that $E_{i-1}$ happens with high probability:
  $$\Pr[ (X^{(i-1)}, Y^{(i-1)}) \not\in [-1/2, 1/2]^{(2N)}] \leq 2N\cdot\Pr[|\Nn(0, \eps)| \geq 1/2] \leq 2N \cdot 2 e^{-1/(8\varepsilon)} \leq 4N^{-c} $$
  It is also clear that if $ (X^{(i-1)}, Y^{(i-1)}) \in [-1/2, 1/2]^{2N}$ then $ (X^{(i-1)}, Y^{(i-1)^2} - \tfrac{\varepsilon \cdot (i-1)}{t}) \in [-1/2, 1/2]^{2N}$. So $\Pr[E_{i-1}] \geq 1-4N^{-c}$. Next, we just have to use \Cref{th::5.3,th::ch} combined with triangle inequality to prove the desired claim. We can bound the $i$-th term in the sum, $S_i$, by:
  \begin{align*}  
    S_i &\leq \E\Big[F(X^{(i)}, Y^{(i)^2} - \tfrac{\varepsilon \cdot i}{t}) - F( X^{(i-1)}, Y^{(i-1)^2} - \tfrac{\varepsilon \cdot (i-1)}{t}) \Big| E_{i-1}\Big] \\
    &+ \E\Big[F(\trunc(X^{(i)}), \trunc(Y^{(i)^2} - \tfrac{\varepsilon \cdot i}{t})) - F(X^{(i)}, Y^{(i)^2} - \tfrac{\varepsilon \cdot i}{t}) \Big| E_{i-1}\Big] \\
    &+ 2 \Pr[\lnot E_{i-1}] \leq O\left(\frac{\varepsilon \cdot \ell}{t \sqrt{N}}\right) + O(N^{-c})
  \end{align*}
  So we can bound the sum $(*) \leq O(\frac{\varepsilon \cdot \ell}{\sqrt{N}}) + O(t N^{-c}) \leq O(\frac{\varepsilon \cdot \ell}{\sqrt{N}})$.
\end{proof}

\begin{remark} \label{rem:llh}
Note that this proof can easily be extended to the ``long list'' version of the problem, where we are given oracle access to a list of pairs of functions $(f_i, g_i)$ rather than a single pair that has been shown. We have already shown this for Theorem \ref{th::5.2} and \ref{th::5.3} in Appendix \ref{app:trunc}, which directly proves Theorem \ref{th::ch}. The only remaining step that we need to verify in Theorem \ref{th::hardness} is the probability of the event $E_i$. Similar to what has been shown in Appendix \ref{app:trunc}, we can choose $\varepsilon$ to be small enough so that $p(N) \cdot e^{-1/(8\varepsilon)}$ remains exponentially small, and the rest of the proof follows immediately. Later we will use this result to give evidence for $\mathsf{LLQSV}$.
\end{remark}

\subsection{More details on the relation between \texorpdfstring{$\sqfor$}{Squared Forrelation} and Long List} \label{subsec:llf}
As mentioned before, our goal is to use samples $(f, g) \sim \mathcal{D}$, and then use $g$ to get samples according to the Fourier spectrum. However, if we choose a uniformly random sample such that $g(x)=1$, we would not get a sample \emph{exactly} according to the Fourier distribution of $f$. Intuitively, this is a ``flattened version'' of the Fourier transform distribution of $f$, but we can still prove a $\mathsf{PH}$ lower bound for this variant of $\bllqsv$, where the samples are taken according to this flattened distribution. We show that this distinction does not matter, by first noting that we can define a modified $\mathsf{HOG}$ score -- Rejection Sampling $\mathsf{HOG}$ ($\mathsf{RHOG}$) -- so that $\mathsf{RHOG}$ and the flattened distribution have the same relation as $\mathsf{HOG}$ and $\bllqsv$. Furthermore, we prove that a quantum algorithm can still solve both $\mathsf{HOG}$ and $\mathsf{RHOG}$ by sampling according to the Fourier transform distribution of $f$.
Let $\mathcal{D}$ be the distribution defined in the previous section, and let $\mathcal{D}_f$ be the marginal distribution over $g$ for a fixed $f$. Given a random Boolean function $f$, $\mathcal{R}_f$ samples $s \in \{0, 1\}^n$ as follows:
  \begin{itemize}
  \item Sample $g \sim \mathcal{D}_f$.
    \item Sample $x \in \{0, 1\}^n$ uniformly at random and output $x$ if $g(x) = 1$.
    \item Output a random element after $4n^2$ unsuccessful attempts.
  \end{itemize}
Suppose we are given access to oracle $\mathcal{O}$ which is a long (exponentially large) list of random Boolean functions $f_1, f_2, \ldots, f_T: \{\pm 1\}^n \rightarrow \{0, 1\}$ paired with strings $s_1, s_2, \ldots, s_T$. Here we define $\bllqsv$ to be the problem of distinguishing samples from $\mathcal{R}_f$ from uniform. One can hope that since $g$ is correlated with squared Fourier coefficients of $f$, these samples are also close to the ``Fourier distribution''. In fact, we later show that on average (over the random Boolean function and the samples), $\mathcal{R}_f$ generates heavy Fourier coefficients. The same argument also works to prove that an honest quantum sampler, that samples according to the Fourier coefficients, passes the following score:

\begin{problem} $(\mathsf{RHOG})$\label{p::rejection}
  Let $b=1+\poly(1/n)$. Given a random Boolean function $f: \{0,1\}^N \rightarrow \{\pm1\}$, output an outcome $s \in \{0, 1\}^N$ such that:
  $$\E_{f,s} \left[\Pr\left[\mathcal{R}_f \text{~samples~} s\right]\right] \geq \frac{b}{N}$$
\end{problem}
One can easily verify that the existence of a deterministic quantum algorithm for solving $\mathsf{RHOG}$ also gives a $\mathsf{QCAM}$ algorithm for (modified) $\bllqsv$. This follows from a collision finding algorithm similar to the original $\mathsf{LLQSV}$ argument (see Appendix \ref{app:a-certified}). Here we prove that first, the usual quantum algorithm that samples according to the Fourier distribution passes solves $\mathsf{RHOG}$. Next, we show that $\sqfor$ can be reduced to $\bllqsv$, which implies that $\bllqsv$ is not in $\mathsf{PH}$.

We prove the first result by using two properties of the distribution $\mathcal{G'}$. The first directly follows from a tail bound of Chi-square variables:
\begin{lemma}[\cite{lm2000}]\label{lemma::lm}
Let $(Y_1, \ldots, Y_N) \sim \mathcal{N}(0, 1)$ and let $a_1, \ldots, a_N \geq 0$. Let 
$$Z = \sum_i a_i (Y_i^2 - 1) $$
Then for any $\Delta > 0$ we have:
\begin{itemize}
\item $\Pr\big[Z \geq 2 |a|_2 \sqrt{\Delta} + 2|a|_\infty \Delta \big] \leq \exp(-\Delta)$
\item $\Pr\big[Z \leq -2 |a|_2 \sqrt{\Delta} \big] \leq \exp(-\Delta)$
\end{itemize}
\end{lemma}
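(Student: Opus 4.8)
The statement is the classical Laurent--Massart chi-squared tail bound, and the plan is to prove it by the exponential-moment (Chernoff) method, exploiting the fact that a chi-squared random variable has a closed-form moment generating function. The case $a=0$ is trivial, so assume $a\neq 0$.

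First I would record the one-variable computation: for $Y\sim\mathcal{N}(0,1)$ and $s<1/2$, a Gaussian integral gives $\E[e^{s(Y^2-1)}]=e^{-s}/\sqrt{1-2s}$. Since the $Y_i$ are independent, for $0<u<1/(2|a|_\infty)$ this yields $\log\E[e^{uZ}]=\sum_{i=1}^N\bigl(-a_iu-\tfrac12\log(1-2a_iu)\bigr)$, and for every $u>0$ (with no constraint now, since the $a_i$ are nonnegative) $\log\E[e^{-uZ}]=\sum_{i=1}^N\bigl(a_iu-\tfrac12\log(1+2a_iu)\bigr)$. Next I would bound these cumulant generating functions using the elementary inequalities $-\log(1-v)-v\le\frac{v^2}{2(1-v)}$ for $0\le v<1$ and $v-\log(1+v)\le\frac{v^2}{2}$ for $v\ge 0$ (each proved by comparing power series coefficient by coefficient, or by a one-line derivative check). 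Substituting $v=2a_iu$ and summing over $i$ gives $\log\E[e^{uZ}]\le\frac{u^2|a|_2^2}{1-2u|a|_\infty}$ and $\log\E[e^{-uZ}]\le u^2|a|_2^2$.

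Finally I would apply Markov's inequality to $e^{uZ}$ and to $e^{-uZ}$ and plug in a well-chosen $u$. For the lower tail, taking $t=2|a|_2\sqrt\Delta$ and $u=\sqrt\Delta/|a|_2$ gives $ut-u^2|a|_2^2=2\Delta-\Delta=\Delta$, so $\Pr[Z\le -t]=\Pr[-Z\ge t]\le e^{-ut}\E[e^{-uZ}]\le e^{-\Delta}$. For the upper tail, taking $t=2|a|_2\sqrt\Delta+2|a|_\infty\Delta$ and $u=\frac{\sqrt\Delta}{|a|_2+2|a|_\infty\sqrt\Delta}$ (which is strictly below $1/(2|a|_\infty)$, so the bound above applies), a direct substitution shows $ut-\frac{u^2|a|_2^2}{1-2u|a|_\infty}=\Delta$, hence $\Pr[Z\ge t]\le e^{-\Delta}$.

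There is no deep obstacle here; the only points requiring care are verifying the two logarithm inequalities and checking that the chosen $u$ stays inside the region where $\E[e^{uZ}]$ is finite. That constraint is precisely why the threshold has the sub-exponential shape $2|a|_2\sqrt\Delta+2|a|_\infty\Delta$ rather than a purely sub-Gaussian $2|a|_2\sqrt\Delta$: the pole of $\E[e^{uZ}]$ at $u=1/(2|a|_\infty)$ forces the extra term linear in $\Delta$.
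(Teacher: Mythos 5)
Your proof is correct and is exactly the standard Laurent--Massart Chernoff argument; the paper itself does not prove this lemma but simply cites it from \cite{lm2000}. In particular, your one-variable MGF computation, the two logarithm inequalities (which bound $\log\E[e^{\pm uZ}]$ by $u^2|a|_2^2/(1-2u|a|_\infty)$ and $u^2|a|_2^2$ respectively), and your choices of $u$ for the two tails all check out, including the verification that the upper-tail $u$ stays strictly below the MGF pole at $1/(2|a|_\infty)$.
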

\begin{theorem} \label{th::lm}
Let $(X, Y) \sim \mathcal{G'}$, then $\Pr \big[ \big|\sum_i Y_i \big| \geq 3 \sqrt{N} \big] \leq 2\exp(-1/\varepsilon)  $.
\end{theorem}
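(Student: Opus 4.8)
The plan is to reduce $\sum_i Y_i$ --- where, for a sample $(X,Y)\sim\mathcal{G}'$, $Y$ denotes the second $\mathbb{R}^N$ block, i.e.\ $Y_i = (H_N X)_i^2 - \varepsilon$ --- to a centered weighted sum of i.i.d.\ $\chi^2$ variables, and then apply the Laurent--Massart tail bound of \Cref{lemma::lm}. First I would unfold the definition of $\mathcal{G}'$ and use that $H_N$ is orthogonal, hence norm-preserving, so
$$\sum_{i=1}^N Y_i \;=\; \sum_{i=1}^N\big((H_N X)_i^2 - \varepsilon\big) \;=\; \|H_N X\|_2^2 - \varepsilon N \;=\; \|X\|_2^2 - \varepsilon N.$$
Writing $X_i = \sqrt{\varepsilon}\, G_i$ with $G_1,\dots,G_N$ i.i.d.\ $\mathcal{N}(0,1)$, this equals $\varepsilon\sum_{i=1}^N(G_i^2-1)$, which is the random variable $Z$ of \Cref{lemma::lm} applied with the standard normals $G_i$ and weights $a_i=\varepsilon$ for all $i$, so $|a|_2 = \varepsilon\sqrt{N}$ and $|a|_\infty = \varepsilon$.

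Next I would invoke \Cref{lemma::lm} with $\Delta = 1/\varepsilon$. The upper-tail bound gives $\Pr\big[\sum_i Y_i \ge 2\varepsilon\sqrt{N}\cdot\sqrt{1/\varepsilon} + 2\varepsilon\cdot(1/\varepsilon)\big] = \Pr\big[\sum_i Y_i \ge 2\sqrt{\varepsilon N} + 2\big] \le e^{-1/\varepsilon}$, and the lower-tail bound gives $\Pr\big[\sum_i Y_i \le -2\sqrt{\varepsilon N}\big] \le e^{-1/\varepsilon}$; a union bound combines these into $\Pr\big[\,|\sum_i Y_i| \ge 2\sqrt{\varepsilon N}+2\,\big] \le 2e^{-1/\varepsilon}$. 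Finally, since $\varepsilon = 1/(C\ln N) \le 1/4$ we have $2\sqrt{\varepsilon N} + 2 \le \sqrt{N} + 2 \le 3\sqrt{N}$, so the event $\{\,|\sum_i Y_i| \ge 3\sqrt{N}\,\}$ is contained in $\{\,|\sum_i Y_i| \ge 2\sqrt{\varepsilon N}+2\,\}$, and the claimed bound follows.

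I do not expect a genuine obstacle here; the only points that need care are (i) noticing that the Hadamard factor is irrelevant because $H_N$ is orthogonal, so the sum collapses to $\|X\|_2^2 - \varepsilon N$ whose summands are already independent, and (ii) taking $\Delta = 1/\varepsilon$ in \Cref{lemma::lm}, which is precisely the choice producing the stated $e^{-1/\varepsilon}$ rate rather than a weaker one. This estimate is a self-contained building block whose role is to control the number of inputs $x$ with $g(x)=1$ in the subsequent analysis of the rejection-sampling distribution $\mathcal{R}_f$.
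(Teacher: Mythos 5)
Your proof is correct and follows essentially the same route as the paper: apply the Laurent--Massart bound (\Cref{lemma::lm}) with $a=(\varepsilon,\dots,\varepsilon)$ and $\Delta=1/\varepsilon$, then union bound the two tails. One small point in your favor: the paper's displayed thresholds read $2\sqrt{N}+2$ and $-2\sqrt{N}$, but the lemma with those parameters actually yields $2\sqrt{\varepsilon N}+2$ and $-2\sqrt{\varepsilon N}$ as you compute; your version is the accurate one (either form suffices to conclude $3\sqrt{N}$, but yours is the tighter and correctly derived statement, and you also make explicit the orthogonality of $H_N$ that both arguments implicitly rely on).
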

\begin{proof}
Using Lemma \ref{lemma::lm} by letting $a = (\varepsilon, \ldots, \varepsilon)$ and $\Delta = \frac{1}{\varepsilon}$ we get:
\begin{itemize}
\item $\Pr\big[Z \geq  2 \sqrt{N} + 2 \big] \leq \exp(-\frac{1}{\varepsilon})$
\item $\Pr\big[Z \leq -2 \sqrt{N} \big] \leq \exp(-\frac{1}{\varepsilon})$
\end{itemize}
And a union bound gives us the claim.
\end{proof}
Recall that for $(X, Y) \in \mathcal{G'}$ with high probability we have $\trunc(X, Y) = (X, Y)$:
\begin{theorem} \label{th::trnc}
Let $(X, Y) \sim \mathcal{G}'$, then $\Pr\big[\trunc(X, Y) \neq (X, Y)\big] \leq 2N^{-2}$.
\end{theorem}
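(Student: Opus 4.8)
The plan is to reduce the failure event $\trunc(X,Y)\neq(X,Y)$ to a small collection of one-dimensional Gaussian tail bounds and then cash in the quantitative choice $\varepsilon = 1/(C\ln N)$ with $C\ge 20$ from the definition of $\mathcal{G}'$. Recall that a sample from $\mathcal{G}'$ is the pair $Z=(X,\,Y^2-\varepsilon)$ with $X\sim\mathcal{N}(0,\varepsilon)^N$ and $Y=H_N X$, and that $\trunc$ acts coordinate-wise, so $\trunc(Z)\neq Z$ exactly when some coordinate of $Z$ falls outside $[-1,1]$. A union bound over the $2N$ coordinates then gives
\[
\Pr\left[\trunc(Z)\neq Z\right]\;\le\;\sum_{i=1}^{N}\Pr\left[|X_i|>1\right]\;+\;\sum_{i=1}^{N}\Pr\left[\,|Y_i^2-\varepsilon|>1\,\right].
\]

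First I would bound the $X$-block: since $X_i\sim\mathcal{N}(0,\varepsilon)$, the standard Gaussian tail bound gives $\Pr[|X_i|>1]=\Pr[|\mathcal{N}(0,1)|>1/\sqrt{\varepsilon}]\le 2e^{-1/(2\varepsilon)}$. For the second block, the one observation to make (already used implicitly elsewhere in this section) is that $H_N$ is orthogonal, so $Y=H_N X\sim\mathcal{N}(0,\varepsilon)^N$ as well, and in particular each $Y_i\sim\mathcal{N}(0,\varepsilon)$. Moreover, because $\varepsilon<1$ we always have $Y_i^2-\varepsilon\ge-\varepsilon>-1$, so the event $\{|Y_i^2-\varepsilon|>1\}$ equals $\{Y_i^2>1+\varepsilon\}\subseteq\{|Y_i|>1\}$, which again has probability at most $2e^{-1/(2\varepsilon)}$ by the same estimate. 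Combining the two blocks yields $\Pr[\trunc(Z)\neq Z]\le 4N e^{-1/(2\varepsilon)}$.

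Finally I would plug in $\varepsilon=1/(C\ln N)$ with $C\ge 20$, so that $e^{-1/(2\varepsilon)}=N^{-C/2}\le N^{-10}$ and hence $\Pr[\trunc(Z)\neq Z]\le 4N^{-9}\le 2N^{-2}$ for all $N\ge 2$, as claimed. There is no genuine obstacle in this proof; the only places that call for a line of care are noting that $H_N X$ has i.i.d.\ $\mathcal{N}(0,\varepsilon)$ coordinates (so the squared second block inherits the same one-dimensional tails as the first block) and that $\varepsilon<1$ rules out the lower-tail event for $Y_i^2-\varepsilon$. The same argument, enlarging $C$ so that $\poly(N)\cdot e^{-1/(2\varepsilon)}$ stays polynomially small, also handles the $\mathcal{G}'^{\poly(N)}$ version needed for the long-list setting.
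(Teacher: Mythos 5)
Your proof is correct and is the natural argument: union bound over the $2N$ coordinates, together with the Gaussian tail $\Pr[|\mathcal{N}(0,\varepsilon)|>1]\le 2e^{-1/(2\varepsilon)}$ and the observation that $|Y_i^2-\varepsilon|>1$ forces $|Y_i|>1$ (since $Y_i^2-\varepsilon\ge-\varepsilon>-1$). The paper does not spell out a separate proof of this theorem, but the identical reasoning appears in its proof of Theorem \ref{th::hardness}, where it bounds $\Pr[(X^{(i-1)},Y^{(i-1)})\notin[-1/2,1/2]^{2N}]\le 2N\cdot 2e^{-1/(8\varepsilon)}$ and then notes that this event controls the squared coordinates as well; you have simply applied the same idea with threshold $1$ in place of $1/2$, so this matches the paper's intended approach.
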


\begin{corollary}
Let $(X, Y) \sim \mathcal{D}$ and $\delta = N^{-1/3}$, then with probability at least $1 - 5N^{-2}$ we have:
$$ (1-\delta)\frac{N}{2} \leq \big|\{i | Y_i = 1\} \big| \leq (1+\delta) \frac{N}{2} $$
\end{corollary}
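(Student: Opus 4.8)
I would unpack a draw $(X,Y)\sim\Dd$ as follows: first sample the continuous $Z=(\tilde X,\tilde Y^{2}-\eps)\sim\Gg'$, where $\tilde X$ has i.i.d.\ $\Nn(0,\eps)$ coordinates and $\tilde Y=H_N\tilde X$; then independently round, so the $i$-th coordinate $Y_i$ of the second (Boolean) block equals $1$ with probability $\tfrac12\bigl(1+\trunc(\tilde Y_i^{2}-\eps)\bigr)$. Since $H_N$ is orthogonal, $\tilde Y=H_N\tilde X$ again has i.i.d.\ $\Nn(0,\eps)$ coordinates, so in particular $\E[\tilde Y_i^{2}-\eps]=0$. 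Writing $M=|\{i:Y_i=1\}|$, the strategy is a two–stage concentration: first control the fluctuation of the conditional mean $\E[M\mid Z]$ about $N/2$ (Gaussian randomness), then control the fluctuation of $M$ about $\E[M\mid Z]$ (rounding randomness), and finish with a union bound.

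\emph{Stage 1.} Let $\mathcal{E}_{\mathrm{tr}}$ be the event $\{\trunc(Z)=Z\}$; by \Cref{th::trnc}, $\Pr[\lnot\mathcal{E}_{\mathrm{tr}}]\le 2N^{-2}$. On $\mathcal{E}_{\mathrm{tr}}$ the truncation is the identity on the second block, so $\E[M\mid Z]=\tfrac N2+\tfrac12\sum_{i=1}^{N}(\tilde Y_i^{2}-\eps)$. Since the $\tilde Y_i$ are i.i.d.\ $\Nn(0,\eps)$, the sum $\sum_i(\tilde Y_i^{2}-\eps)=\eps\sum_i\bigl((\tilde Y_i/\sqrt\eps)^{2}-1\bigr)$ is exactly the quantity controlled by \Cref{lemma::lm} with $a=(\eps,\dots,\eps)$; taking the deviation parameter $\Delta=2\ln N$ there yields, outside an event of probability $\le 2N^{-2}$ (call its complement $\mathcal{E}_\chi$),
\[
\Bigl|\sum_{i=1}^{N}(\tilde Y_i^{2}-\eps)\Bigr|\ \le\ 2\eps\sqrt{2N\ln N}+4\eps\ln N\ =\ O\!\bigl(\sqrt{N/\ln N}\bigr),
\]
using $\eps=1/(C\ln N)$. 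Because $\sqrt{N/\ln N}=o(\delta N)$ for $\delta=N^{-1/3}$, on $\mathcal{E}_{\mathrm{tr}}\cap\mathcal{E}_\chi$ we get $\bigl|\E[M\mid Z]-N/2\bigr|\le \delta N/4$ once $N$ is large.

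\emph{Stage 2 and conclusion.} Conditioned on any $Z$, $M$ is a sum of $N$ independent $\{0,1\}$-indicators, so Hoeffding's inequality gives $\Pr[\,|M-\E[M\mid Z]|>\delta N/4\mid Z\,]\le 2\exp(-\delta^{2}N/8)=2\exp(-N^{1/3}/8)$, which is $o(N^{-2})$. A union bound over $\lnot\mathcal{E}_{\mathrm{tr}}$, $\lnot\mathcal{E}_\chi$, and the rounding failure then shows that, for $N$ large, with probability at least $1-\bigl(2N^{-2}+2N^{-2}+2e^{-N^{1/3}/8}\bigr)\ge 1-5N^{-2}$ we have $|M-N/2|\le \delta N/4+\delta N/4=\delta N/2$, i.e.\ $(1-\delta)\tfrac N2\le M\le(1+\delta)\tfrac N2$. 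I do not expect a genuine obstacle here; the only two points that need care are legitimately discarding the truncation via \Cref{th::trnc} and recording that the orthogonality of $H_N$ makes the $\tilde Y_i$ i.i.d., so that the sharp $\chi^{2}$ tail of \Cref{lemma::lm} applies. The rest is routine, with the slack between $\delta N=N^{2/3}$ and the $O(\sqrt{N/\ln N})$ size of the conditional-mean fluctuation giving plenty of room to land inside the claimed $5N^{-2}$.
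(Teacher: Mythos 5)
Your proof is correct and follows essentially the same route as the paper: discard the truncation via \Cref{th::trnc}, control the conditional mean $\E[M\mid Z]$ by the $\chi^2$ tail (\Cref{lemma::lm}), then use a Chernoff/Hoeffding bound on the independent rounding and union-bound the three failure events. The only cosmetic difference is that you invoke \Cref{lemma::lm} directly with $\Delta=2\ln N$, getting an $O(\sqrt{N/\ln N})$ bound on the mean fluctuation, whereas the paper uses its pre-packaged \Cref{th::lm} (i.e.\ $\Delta=1/\eps$) to get the looser but still adequate $3\sqrt N$ bound.
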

\begin{proof}
This follows from a Chernoff bound applied to Theorem \ref{th::lm} and \ref{th::trnc}. Given $(X', Y') \sim \mathcal{G}'$, by a union bound we get that the probability that both requirements $\trunc(X', Y') = (X', Y')$ and $|\sum Y'_i| < 3\sqrt{N}$ are satisfied is at least $1 - 4N^{-2}$.
In such a case, recall that $Y$ is attained from $Y'$ by randomized rounding (without truncations since $Y' = \trunc(Y')$.
Let $\delta' = \delta/2$. Then by Chernoff bound, with probability $1 - \exp(\Omega(\delta^2 N))\ge 1-N^{-2}$:
  $$\left|\{i | Y_i = 1 \}\right| \leq (1+\delta') \left(\frac{N}{2} + 3\sqrt{N}\right) \leq (1+\delta)\frac{N}{2} $$
  and,
  $$ \left|\{i | Y_i = 1 \}\right| \geq (1-\delta') \left(\frac{N}{2} - 3\sqrt{N}\right) \geq  (1-\delta) \frac{N}{2}$$
   for large enough $N$. This means that the claim holds over $\mathcal{D}$ with probability at least $1 - 5N^{-2}$.
\end{proof}
Another direct consequence of this property is that we can bound the success probability of getting a non-uniform sample after $4n^2$ steps in rejection sampling. Note that the probability of the rejection sampling failing is bounded by $\big((1+\delta)/2\big)^{4n^2}$. For large enough $N$ and $\delta = 1/N^{1/3}$, we know $((1+\delta)/2)^2 \leq 1/2$, and thus
$$((1+\delta)/2)^{4n^2} \leq 2^{-2n^2} \ll N^{-2} $$
Now we can prove that given random Boolean function $f: \{\pm 1\}^N \rightarrow \{0, 1\}$, sampling according to the Fourier coefficients of $f$ solves $\mathsf{RHOG}$. Define $\mathcal{C}_f$ to be the Fourier transform distribution over outcomes according to $f$, i.e., $\mathcal{C}_f$ samples $x$ with probability $\hat{f}(x)^2$. 
\begin{claim}
  $$\E_{\substack{f,\\x\sim \mathcal{C}_f}}\left[\Pr\left[\mathcal{R}_f \text{~samples~}x\right]\right] 
  \geq \frac{1+\poly(\varepsilon)}{N} $$
\end{claim}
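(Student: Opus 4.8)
The plan is to reduce the claim to Theorem~\ref{th::qa} (which gives $\E_{(f,g)\sim\mathcal{D}}[\varphi(f,g)]\ge\varepsilon^2$) after controlling the two sources of slack in the procedure $\mathcal{R}_f$: the $4n^2$-step cutoff, and the fluctuation of $|g^{-1}(1)|:=|\{x:g(x)=1\}|$ around $N/2$. The first thing I would record is that the $f$-marginal of $\mathcal{D}$ is exactly the uniform distribution on $\{\pm1\}^N$: each coordinate $f(i)$ is the randomized rounding of a symmetric Gaussian coordinate $x_i$, so $\Pr[f(i)=1]=\tfrac{1+\E[\trunc(x_i)]}{2}=\tfrac12$, and distinct coordinates are independent. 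Hence, writing $\mathcal{D}_f$ for the conditional law of $g$ given $f$, we have $\E_{f}\E_{g\sim\mathcal{D}_f}[\cdot]=\E_{(f,g)\sim\mathcal{D}}[\cdot]$, which is exactly the passage between ``uniform Boolean $f$, then $g\sim\mathcal{D}_f$'' used in the definitions of $\mathcal{R}_f$ and of $\mathcal{D}$.

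Next I would condition on the single draw $g\sim\mathcal{D}_f$ made inside $\mathcal{R}_f$ and bound $\Pr[\mathcal{R}_f\text{ samples }x\mid g]$ for a fixed string $x$. Ignoring the cutoff, $\mathcal{R}_f$ returns a uniformly random element of $g^{-1}(1)$, and the cutoff only moves probability mass onto the uniform distribution; thus $\Pr[\mathcal{R}_f\text{ samples }x\mid g]\ge(1-p_{\mathrm{fail}}(g))\cdot\mathds{1}[g(x)=1]/|g^{-1}(1)|$, where $p_{\mathrm{fail}}(g)=(1-|g^{-1}(1)|/N)^{4n^2}$. Let $\mathcal{E}_g$ be the event $(1-\delta)\tfrac N2\le|g^{-1}(1)|\le(1+\delta)\tfrac N2$ with $\delta=N^{-1/3}$; the corollary following Theorem~\ref{th::trnc} gives $\Pr_{(f,g)\sim\mathcal{D}}[\neg\mathcal{E}_g]\le 5N^{-2}$. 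On $\mathcal{E}_g$ the lower bound on $|g^{-1}(1)|$ yields $p_{\mathrm{fail}}(g)\le((1+\delta)/2)^{4n^2}\le 2^{-2n^2}$, while the upper bound yields $1/|g^{-1}(1)|\ge 2/((1+\delta)N)$, so on $\mathcal{E}_g$ we get $\Pr[\mathcal{R}_f\text{ samples }x\mid g]\ge(1-2^{-2n^2})\cdot\tfrac{2}{(1+\delta)N}\cdot\mathds{1}[g(x)=1]$.

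Then I would take the expectation in the order $f$, then $x\sim\mathcal{C}_f$, then $g\sim\mathcal{D}_f$. Lower-bounding the innermost quantity by the display just obtained (valid since it is $\ge 0$ always and the stated bound holds on $\mathcal{E}_g$), pulling out the constant, dropping $\mathds{1}[\mathcal{E}_g]$ at a cost of $\Pr[\neg\mathcal{E}_g]$ (the summand is $\le 1$), and using $\Pr_{x\sim\mathcal{C}_f}[x=z]=\hat f(z)^2$ together with the first paragraph gives
$$\E_{\substack{f\\ x\sim\mathcal{C}_f}}\!\big[\Pr[\mathcal{R}_f\text{ samples }x]\big]\ \ge\ \frac{2(1-2^{-2n^2})}{(1+\delta)N}\left(\E_{(f,g)\sim\mathcal{D}}\Big[\textstyle\sum_{z:\,g(z)=1}\hat f(z)^2\Big]-5N^{-2}\right).$$
By the identity $\sum_{z:g(z)=1}\hat f(z)^2=\tfrac{1+\varphi(f,g)}{2}$ from Section~\ref{sec::algorithm} and Theorem~\ref{th::qa}, the inner expectation is $\ge\tfrac12+\tfrac{\varepsilon^2}{2}$, so the right-hand side is $\tfrac{1+\varepsilon^2}{N}$ minus error terms.

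The one place that needs care, and which I expect to be the only real obstacle, is checking that each error term — $2^{-2n^2}$ from the cutoff, $\delta=N^{-1/3}$ from the $(1+\delta)^{-1}$ factor, and $5N^{-2}$ from $\Pr[\neg\mathcal{E}_g]$ — is genuinely negligible against the signal $\varepsilon^2=\Theta(1/\log^2 N)$. This holds because every error is polynomially small in $N=2^n$ while $\varepsilon$ is only polylogarithmically small, so the combined error is $O(N^{-1/3}/N)=o(\varepsilon^2/N)$, and we conclude $\E_{f,\,x\sim\mathcal{C}_f}[\Pr[\mathcal{R}_f\text{ samples }x]]\ge\frac{1+\varepsilon^2/2}{N}=\frac{1+\poly(\varepsilon)}{N}$, as required. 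No estimates are needed beyond the corollary after Theorem~\ref{th::trnc}, the $\varphi$-identity of Section~\ref{sec::algorithm}, and Theorem~\ref{th::qa}.
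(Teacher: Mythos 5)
Your proposal is correct and follows essentially the same route as the paper's proof: both rely on Theorem~\ref{th::qa} for the $\varepsilon^2$ signal, the corollary following Theorem~\ref{th::trnc} for concentration of $|g^{-1}(1)|$ around $N/2$, and a bound on the rejection-sampling cutoff failure, then observe that the polynomially-small-in-$N$ error terms are negligible against $\varepsilon^2 = \Theta(1/\log^2 N)$. The only difference is bookkeeping: you condition on $g$ and lower-bound $\Pr[\mathcal{R}_f\text{ samples }x\mid g]$ pointwise before averaging, whereas the paper folds the concentration of $|g^{-1}(1)|$ and the success of rejection sampling into a single event $E$ and works with conditional expectations — both yield the same estimate.
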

\begin{proof}
   From Theorem \ref{th::qa}, we know that $ \E_\mathcal{D} \big[\sum_{g(x) = 1} \hat{f}(x)^2\big] \geq \frac{1}{2} + \frac{\varepsilon^2}{4}$. 
Let $E$ be the event that :
\begin{enumerate}
  \item  $ (1-\delta)\frac{N}{2} \leq \big|\{i | Y_i = 1\} \big| \leq (1+\delta) \frac{N}{2} $
  \item The rejection sampling algorithm outputs an $s$ such that $g(s) = 1$. 
\end{enumerate}
 where $\delta  = N^{-1/3}$. 
 We know that $\Pr[E] \geq 1-10/N^2$ and $ \E_\mathcal{D} \big[\sum_{g(x) = 1} \hat{f}(x)^2 | E\big] \geq \frac{1}{2} + \frac{\varepsilon^2}{8}$. 
   We can  lower bound  $$\E_{\substack{f,\\x\sim \mathcal{C}_f}}\left[\Pr\left[\mathcal{R}_f \text{~samples~}x\right]\right] \ge \Pr[E]\cdot \Pr_{\substack{f,\\x\sim \mathcal{C}_f}}\left[\mathcal{R}_f \text{~samples~}x|E\right].$$
   Note further that conditioned on $E$, the rejection sampling algorithm draws a random element from the set $\{y:g(y) = 1\}$. Thus we may rewrite
  \begin{align*}
  \Pr[E]\cdot \Pr_{\substack{f,\\x\sim \mathcal{C}_f}}\left[\mathcal{R}_f \text{~samples~}x|E\right]
   &= 
   \Pr[E] \cdot \E_{(f, g) \sim \mathcal{D}} \bigg[\frac{\sum_{g(x) = 1} \hat{f}(x)^2}{ |\{y:g(y) = 1\}| } \;\bigg|\; E \bigg] \\
&\ge\Pr[E]\cdot\frac{1/2 + \varepsilon^2/8}{(1+\delta)N/2} \\
&\ge \Pr[E] \cdot \frac{(1-\delta)(1+ \eps^2/4)}{N} \end{align*}
  which is at least $\frac{1+\varepsilon^2/8}{N}$ for $N$ large enough (where we used the fact that $\eps = \Theta(1/\log N)$ is much larger than $10/N^2$ and $\delta$).
\end{proof}
The only remaining step is to prove the hardness of (modified) $\bllqsv$. First, we define the following problem:
\begin{problem} \label{prob:llf+g}
  Given oracle access to a length $T = 2^{3n}$ list of triplets $(f_i, g_i, s_i)$ where $f_i, g_i: \{\pm 1\}^N \rightarrow \{0, 1\}$ and $s_i \in \{0, 1\}^N$ is sampled according to the rejection sampling algorithm applied to $g_i$ distinguish between the following two cases:
  \begin{enumerate}
    \item For every $i$, $f_i, g_i$ are chosen uniformly at random.
    \item For every $i$. $f_i, g_i$ are sampled according to $\mathcal{D}$.
  \end{enumerate}
\end{problem}
We can use this to prove our final result:
\begin{theorem} \label{th:ac02}
No constant depth Boolean circuit of size $\mathrm{quasipoly}(N)$ can solve $\bllqsv$ with advantage more than $\mathrm{polylog}(N)/\sqrt{N}$.
\end{theorem}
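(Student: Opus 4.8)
The plan is to reduce Theorem~\ref{th:ac02} to the ``long list'' version of Theorem~\ref{th::hardness} promised in Remark~\ref{rem:llh}, fed through the level-$2$ Fourier $L_1$ bound for $\mathsf{AC^0}$ circuits of~\cite{Tal17}. The only feature of Problem~\ref{prob:llf+g} not already handled by the single-instance argument is the presence of the sampled strings $s_i$; I would eliminate these first by hardcoding the randomness of the rejection sampler.

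\textbf{Step 1: eliminate the $s_i$.} In both cases of Problem~\ref{prob:llf+g}, every $s_i$ is obtained from $g_i$ by the \emph{same} randomized map --- rejection sampling with $4n^2$ attempts. Writing $\rho$ for the tuple of uniformly random attempt-points across all $T=N^3$ instances, $s = s(\rho,g)$ is a deterministic function of $\rho$ and $(g_i)_i$, and for any depth-$O(1)$, size-$\qpoly(N)$ circuit $C$,
$$\Big|\Pr_{\mathcal D^T}[C\text{ accepts}] - \Pr_{U}[C\text{ accepts}]\Big| \;=\; \Big|\E_\rho\big[\Pr_{(f,g)\sim\mathcal D^T}[C_\rho\text{ acc}] - \Pr_{(f,g)\sim U}[C_\rho\text{ acc}]\big]\Big| \;\le\; \max_\rho\Big|\E_{\mathcal D^T}[C_\rho] - \E_{U}[C_\rho]\Big|,$$
where $C_\rho(f,g):=C(f,g,s(\rho,g))$ and I identify $C_\rho$ with its $\{\pm1\}$-valued multilinear extension. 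For fixed $\rho$, producing each $s_i(\rho,g_i)$ from the hardcoded attempt-points only requires reading $4n^2$ bits of $g_i$ and returning the first attempt with $g_i$-value $1$ (or a default), a depth-$2$, $\poly(n)$-size computation; composing with $C$ over all $i$ shows $C_\rho$ is an $\mathsf{AC^0}$ circuit on the $2NT = \poly(N)$ bits $(f_i,g_i)_i$ of depth $O(1)$ and size $\qpoly(N)$.

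\textbf{Step 2: Fourier bound and long-list hardness.} By~\cite{Tal17}, a depth-$O(1)$, size-$\qpoly(N)$ circuit satisfies $L_{1,2}(C_\rho) = \sum_{|S|=2}|\hat{C_\rho}(S)| \le (O(\log\qpoly(N)))^{O(1)} = \polylog(N)$. By Remark~\ref{rem:llh}, Theorem~\ref{th::hardness} holds verbatim over $\mathcal D^T$ on $\mathbb R^{2NT}$: for any multilinear $F:\{\pm1\}^{2NT}\to\{\pm1\}$ with $L_{1,2}(F)\le\ell$, $|\E_{(X,Y')\sim\mathcal D^T}[F] - F(0)| \le O(\varepsilon\ell/\sqrt N)$ --- cross-instance level-$2$ covariances vanish so the $\sqrt N$ is unaffected, and the only change is that the union bounds defining the events $E_{i-1}$ now range over $\poly(N)$ coordinates, absorbed by choosing the constant $C$ in $\varepsilon = 1/(C\ln N)$ large enough (as in Appendix~\ref{app:trunc}). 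Since $F(0)=\hat F(\emptyset)=\E_U[F]$, applying this to $\pm C_\rho$ gives $|\E_{\mathcal D^T}[C_\rho]-\E_U[C_\rho]| \le O(\varepsilon L_{1,2}(C_\rho)/\sqrt N) = \polylog(N)/\sqrt N$; maximizing over $\rho$ as in Step~1 yields the claimed advantage bound.

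\textbf{Main obstacle.} The crux will be Step~1: I must check that folding the $s_i$ into the circuit keeps the depth $O(1)$ and the size $\qpoly(N)$ (so Tal's bound still gives $\polylog(N)$), and that hardcoding $\rho$ is legitimate --- which works precisely because $s_i$ is generated from $g_i$ identically in both cases and hence carries no distinguishing power of its own. A secondary, purely bookkeeping point is confirming that the Raz--Tal telescoping-plus-randomized-rounding argument behind Theorems~\ref{th::5.2}--\ref{th::hardness} survives the blow-up from one instance to $T=N^3$ instances, which costs only a larger constant in $\varepsilon$.
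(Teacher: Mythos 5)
Your proof is correct and follows the same high-level strategy as the paper: eliminate the sampled strings $s_i$, observe the resulting circuit is still $\mathsf{AC^0}$ of quasipolynomial size, then apply Tal's level-two Fourier bound together with the long-list extension (Remark~\ref{rem:llh}) of Theorem~\ref{th::hardness}. The one place you diverge is in how the rejection-sampling randomness is disposed of. The paper replaces each $s_i$ by a small DNF reading $(r_i, g_i)$, doubles the list to supply the seeds $r_i$ from the spare half of the oracle, and invokes the long-list hardness at the larger list length. You instead fix a single $\rho$ (the attempt points and default outputs for all $T$ instances), bound the distinguishing advantage by $\max_\rho\left|\E_{\mathcal{D}^T}[C_\rho]-\E_U[C_\rho]\right|$ via averaging, and note that for each fixed $\rho$ the resulting $C_\rho$ is a constant-depth circuit on just the $(f_i,g_i)$ bits. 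This is a cleaner route: it avoids inflating the list and sidesteps the mild bookkeeping the paper needs to justify that the spare-half bits are a legitimate source of uniform seeds in both cases (under $\mathcal{D}$ the $g_j$'s are not exactly uniform, so one must use the $f_j$ coordinates, which your approach never has to worry about). Your remark that the $\sqrt{N}$ in Theorem~\ref{th::hardness} survives the $T$-fold product because cross-instance level-$2$ covariances vanish, and that the only cost of going to $T=\poly(N)$ instances is a constant-factor change in $\varepsilon$ in the truncation bounds, is exactly the content of Remark~\ref{rem:llh} and Appendix~\ref{app:trunc}. No gap.
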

\begin{proof}
 First, it is clear that we can reduce Problem \ref{prob:llf+g} to $\bllqsv$ since we have access to the list $(f_i, s_i)$. To prove the same oracle separation result using Problem \ref{prob:llf+g}, we first notice that we can prove the same $\mathsf{AC^0}$ lower bound if we only had access to $(f_i, g_i)$, but not $s_i$ (Remark \ref{rem:llh}). Furthermore, by choosing $\varepsilon$ in the definition of $\mathcal{D}$ small enough, the lower bound even holds for a list of length $2\cdot2^{n^3}$. The final step is to prove that $s_i$'s do not give us extra computation power. Suppose there exists a circuit $C$ of size $s$ and depth $d$ for Problem \ref{prob:llf+g}, we can convert this circuit to work without having access to $s_i$. First, we can replace $s_i$ with a small circuit that given $r_i \in \{0, 1\}^{\poly(n)}$, the random seed of the rejection sampling, simulate rejection sampling, making $O(n^2)$ queries to $g_i$ and outputs $s_i$.  Such a computation can be described by a decision tree on that reads $r_i$ and at most $O(n^2)$ entries of $g_i$, and thus also by a DNF of size $2^{\poly(n)} =  \mathrm{quasipoly}(N)$. Making all these replacements everywhere gives a new circuit with size $s' = s\cdot \mathrm{quasipoly}(N)$ (which is still quasi-polynomial in $N$) and depth $d+2$. Furthermore, if we have access to a list of length $2\cdot2^{n^3}$ of pairs of functions, we can use the second half as the random seed, instead of accessing $r_i$ directly. So this new circuit of size $s'$ and depth $d+2$ must also distinguish the two distributions with the same advantage, having only access to a list of length $2\cdot 2^{n^3}$ of $(f_i, g_i)$. Thus, the same lower bound applies to the size of the starting circuit $s$.
 \end{proof}

\section*{Acknowledgements}

We thank Scott Aaronson and Umesh Vazirani for helpful discussions.
A.B. was supported in part by the AFOSR under grant FA9550-21-1-0392.
B.F. and R.B. acknowledge support from AFOSR (YIP number FA9550-18-1-0148 and FA9550-21-1-0008).
This material is based upon work partially
supported by the National Science Foundation under Grant CCF-2044923 (CAREER) and by the U.S. Department of Energy, Office of Science, National Quantum Information Science Research Centers (Q-NEXT) as well as by DOE QuantISED grant DE-SC0020360.
A.T. was supported in part by the National Science Foundation under Grant CCF-2145474 (CAREER) and by a Sloan Research Fellowship.

\nocite{*}
\bibliography{certifiedbib}
\bibliographystyle{alpha-mod}

\appendix
\section{Truncated Gaussians - Proofs} \label{app:trunc}
Here we prove a long list version of Theorems \ref{th::5.2} and \ref{th::5.3}. The core statements are the same, but the expectation is over the distribution $\mathcal{G'}^{p(N)}$ for some polynomial $p$. In LLQSV we are interested in the case where $p(N) = N^3$. We show that for any polynomial $p$ we can always choose $\varepsilon = 1/(C\ln N)$, to make expectation values in the next theorems arbitrarily small ($N^{-c}$ for some constant $c$ that depends on $p$ and $C$).
\begin{proof} [Proof of Thm. \ref{th::5.2}]
We can write:
\begin{align*}
(*)&=\E_{(x,y) \sim \mathcal{G'}^{p(n)}}\left[\prod_{i=1}^{p(N)\cdot N} \max \left(1,\left|x_{i}\right|\right) \cdot \prod_{i=1}^{p(N)\cdot N} \max \left(1,\left|y_{i}\right|\right) \cdot \mathbbm{1}_{(x, y) \neq \operatorname{trnc}(x, y)}\right] \\
& = \E_{(x,y) \sim \mathcal{G'}^{p(n)}}\left[\prod_{i=1}^{p(N)\cdot N} \max \left(1,\left|x_{i}\right|\right) \cdot \prod_{i=1}^{p(N)\cdot N} \max \left(1,\left|y_{i}\right|\right)\right] \\
& - \E_{(x,y) \sim \mathcal{G'}^{p(n)}}\left[\prod_{i=1}^{p(N)\cdot N} \max \left(1,\left|x_{i}\right|\right) \cdot \prod_{i=1}^{p(N)\cdot N} \max \left(1,\left|y_{i}\right|\right) \cdot \mathbbm{1}_{(x, y) = \operatorname{trnc}(x, y)}\right] \\
& = \E_{(x,y) \sim \mathcal{G'}^{p(n)}}\left[\prod_{i=1}^{p(N)\cdot N} \max \left(1,\left|x_{i}\right|\right) \cdot \prod_{i=1}^{p(N)\cdot N} \max \left(1,\left|y_{i}\right|\right)\right] - \Pr\left[(x, y) = \trnc(x, y) \right]
\end{align*}
Let $z \coloneqq (x, y)$ and $\varepsilon = \frac{1}{2k\ln(p(N)\cdot N)}$ for some constant $k$. We first bound the second term:
$$\Pr\big[z = \trnc(z) \big] \geq 1 - \sum_i \Pr[z \neq \trnc(z)] = 1 - 2N\cdot p(N) \cdot e^{-1/(2\varepsilon)}\geq 1 - N^{-k+1}$$
Next we can bound the first term using Cauchy-Schwarz:
\begin{align*} 
& \leq \sqrt{\E_{x}\left[\prod_{i=1}^{p(N)\cdot N} \max \left(1,x_{i}^2\right)\right]} \cdot \sqrt{\mathbb{E}_{x}\left[\prod_{i=1}^{p(N)\cdot N} \max \left(1,(x_{i}^2-\varepsilon)^2\right)\right]} \\
& = \E_{x}\left[ \max \left(1,x_{i}^2\right)\right]^{p(N)\cdot N/2} \cdot \E_{x}\left[ \max \left(1,(x_{i}^2-\varepsilon)^2\right)\right]^{p(N)\cdot N/2} 
\end{align*}
We can write the first expectation value as:
\begin{align*}
    2 \int_0^1 \frac{1}{\sqrt{2\pi\varepsilon}} e^{-x^2/2\varepsilon} dx + 2 \int_1^\infty \frac{x^2}{\sqrt{2\pi\varepsilon}}e^{-x^2/2\varepsilon}dx &\leq 1 + \varepsilon \cdot \erfc(1/\sqrt{2\varepsilon}) + \sqrt{2\varepsilon/\pi} \cdot e^{-1/2\varepsilon} \\ 
    & \leq 1 + \varepsilon e^{-1/2\varepsilon} + \sqrt{2\varepsilon/\pi}e^{-1/2\varepsilon} \\
& \leq 1+N^{-k+1}\cdot p(N)^{-k}
\end{align*}
For the second expectation value:
\begin{align*}
    &2 \int_0^{\sqrt{1+\varepsilon}} \frac{1}{\sqrt{2\pi\varepsilon}} e^{-x^2/2\varepsilon} dx + 2 \int_{\sqrt{1+\varepsilon}}^\infty \frac{(x^2-\varepsilon)^2}{\sqrt{2\pi\varepsilon}} e^{-x^2/2\varepsilon} dx \\
    &\leq 2\int_0^{\sqrt{1+\varepsilon}} \frac{1}{\sqrt{2\pi\varepsilon}} e^{-x^2/2\varepsilon} dx +  2 \int_{1}^\infty \frac{x^4}{\sqrt{2\pi\varepsilon}} e^{-x^2/2\varepsilon} \\
    &\leq 1 + 3\varepsilon^2 \erfc(1/\sqrt{2 \varepsilon}) + \frac{\varepsilon(1+3\varepsilon)}{\sqrt{2\pi \varepsilon}}e^{-1/2\varepsilon} \\
    & \leq 1 + 3\varepsilon^2 e^{1/2\varepsilon} + \frac{\varepsilon(1+3\varepsilon)}{\sqrt{2\pi \varepsilon}}e^{-1/2\varepsilon} \leq   1+N^{-k+1}\cdot p(N)^{-k}
\end{align*}
So we can bound the first term by  $(1+N^{-k+1}\cdot p(N)^{-k})^{N\cdot p(N)} \leq 1+N^{-k+2}$. Combining both we get the claim:
$$(*) \leq 1+N^{-k+2} - (1-N^{-k+1}) \leq 2 N^{-k+2} $$

\end{proof}

\begin{proof} [Proof of Thm. \ref{th::5.3}]
  The original proof works without any change with the modified Claim 5.2. 
  \begin{align*}
&\underset{z \sim \mathcal{G'}^{p(N)}}{\E}\left[\left|F\left(\operatorname{trnc}\left(z_{0}+p \cdot z\right)\right)-F\left(z_{0}+p \cdot z\right)\right|\right] \\& \leq \underset{z \sim \mathcal{G'}^{p(N)}}{\E}\left[\left(1+\left|F\left(z_{0}+p \cdot z\right)\right|\right) \cdot \mathbbm{1}_{\trnc(z_0+p\cdot z) \neq z_0+p\cdot z}\right] \\
& \leq \underset{z \sim \mathcal{G'}^{p(N)}}{\E}\left[\left(1+\left|F\left(z_{0}+p \cdot z\right)\right|\right) \cdot \mathbbm{1}_{z \neq \operatorname{trnc}(z)}\right] \\
& \leq \underset{z \sim \mathcal{G'}^{p(N)}}{\E}\left[\left(1+\prod_{i=1}^{2 N \cdot p(N)} \max \left(1,\left|\left(z_{0}\right)_{i}+p \cdot z_{i}\right|\right)\right) \cdot \mathbbm{1}_{z \neq \operatorname{trnc}(z)}\right] \\
& \leq \underset{z \sim \mathcal{G'}^{p(N)}}{\E}\left[2 \cdot \prod_{i=1}^{2 N \cdot p(N)} \max \left(1,\left|\left(z_{0}\right)_{i}+p \cdot z_{i}\right|\right) \cdot \mathbbm{1}_{z \neq \operatorname{trnc}(z)}\right]
\end{align*}
And since we know $\prod_{i=1}^{2 N \cdot p(N)} \max \left(1,\left|\left(z_{0}\right)_{i}+p \cdot z_{i}\right|\right) \leq \prod_{i=1}^{2 N\cdot p(N)} \max \left(1, p_{0}+p\left|z_{i}\right|\right) \leq \prod_{i=1}^{2 N\cdot p(N)} \max \left(1,\left|z_{i}\right|\right)$, we can write:
\begin{align*}
&\underset{z \sim \mathcal{G'}^{p(N)}}{\E}\left[\left|F\left(\operatorname{trnc}\left(z_{0}+p \cdot z\right)\right)-F\left(z_{0}+p \cdot z\right)\right|\right] \\
&\quad \leq \underset{z \sim \mathcal{G'}^{p(N)}}{\E}\left[2 \cdot \prod_{i=1}^{2 N\cdot p(N)} \max \left(1,\left|z_{i}\right|\right) \cdot \mathbbm{1}_{z \neq \operatorname{trnc}(z)}\right] 
\leq 4 \cdot N^{-k+2}.\qedhere
\end{align*}
\end{proof}

\section{Aaronson's Certified Random Number Generation} \label{app:a-certified}
  For completeness, we summarize how certified random number generation is possible from ``long list'' conjectures. Consider the following problem:

\begin{problem} \label{prob::xeb}
  Given a random quantum circuit $C$ on $n$ qubits, output a sample $s$ such that:
  $$\E_C\Big[P_C(s)\Big] \geq \frac{b}{N} $$
  where $P_C(s) = |\bra{s}C\ket{0^n}|^2$ and $b = 1+\varepsilon$ and $\varepsilon = \text{poly}(1/n) $.
\end{problem}
 
 We already know that the trivial quantum algorithm that samples from the output distribution of $C$ is a solution for this problem \cite{PhysRev.104.483, boixo-characterizing}, and in fact, we have strong evidence that current NISQ devices are capable of solving this problem \cite{boixo-supremacy}. If we can prove that any algorithm in $\mathsf{BQP}$ for Problem \ref{prob::xeb} has min-entropy larger than the random seed required by our pseudorandom generator, we can connect cryptographic assumptions to certified random bit generation. Our main tool for proving such a claim is the hardness of $\mathsf{LLQSV}$. Suppose we are given access to oracle $\mathcal{O}$ which is a long (exponentially large) list of $n$ qubit random quantum circuits $C_1, C_2, \ldots, C_T$ paired with strings $s_1, s_2, \ldots, s_T$. We can informally define the Long List Hardness Assumption ($\mathsf{LLHA}$):
\begin{conjecture} $\mathsf{(LLHA)}$ \label{conj::LLQSV}
  Given oracle access to $\mathcal{O}$ it is $\mathsf{QCAM}$-Hard to distinguish the following two cases:
  \begin{enumerate}
    \item Each $s_i$ is sampled uniformly at random.
    \item Each $s_i$ is sampled from the output distribution of $C_i$.
  \end{enumerate}
\end{conjecture}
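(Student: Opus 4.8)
\medskip
\noindent\textbf{Proof proposal.} The honest target is the black-box instantiation: that no $\mathsf{QCAM}$ protocol ---  Arthur flips coins, receives a $\poly(n)$-bit classical proof from Merlin, and then runs a polynomial-time, polynomial-query quantum verifier with oracle access to the list --- distinguishes $\mathcal{D}$ from $\mathcal{U}$ with advantage $1/\poly(n)$. (For genuine random circuits the ``oracle'' has a short description that a $\mathsf{QCAM}$ verifier could in principle exploit, which is why the statement there stays a conjecture; the black-box version is what the unconditional arguments of Sections~\ref{sec:maj}--\ref{sec:bqpph} are built to reach.) The plan is to reuse the reduction of Section~\ref{sec:maj} essentially verbatim, tracking the distinguishing quantity $\mu_{\Oo}:=\max_\pi\Pr_{\mathrm{coins}}[V^{\Oo}(\pi)\text{ accepts}]$ attached to an oracle $\Oo$, and then run a BBBV-style hybrid argument that is insensitive to Merlin's proof.

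First I would check that every step of Section~\ref{sec:maj} survives replacing the ordinary acceptance probability by the nonlinear functional $\mu$. Theorem~\ref{th:typical} plus a union bound still lets us condition on the event that every $f_i$ has all squared Fourier coefficients below $p(n)^2/N$, at the cost of an exponentially small additive loss in $\E_{\mathcal{D}}[\mu]-\E_{\mathcal{U}}[\mu]$. Writing the distribution of the $f_i\cdot\chi_{s_i}$ as a convex combination of products $\Uu_{d_1}^N\times\cdots\times\Uu_{d_m}^N$ and applying an averaging argument to the \emph{bounded} quantity $\mu\in[0,1]$ (using $\sum_D p_D=\sum_D p_D'=1$) produces a pair $D^{+},D^{-}$ of parameter vectors so that the same protocol distinguishes $\Uu_{D^{+}}$ from $\Uu_{D^{-}}$ with advantage $\Omega(1/\poly(n))$; a triangle inequality, again valid because $\mu\in[0,1]$, reduces this to distinguishing $\Uu_D$ from $\Uu_0$ --- that is, to $\distbal$ --- even with the harmless uniform outcome strings $s_i$ also available.

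The heart of the argument is a $\mathsf{QCAM}$ analogue of Theorem~\ref{th:bqpll}. Couple $S\sim\Uu_D$ with its rebalanced version $S'=\mathrm{reb}_\Delta(S)$, where $\Delta$ flips a uniformly random set of $d_i$ excess ones in each block $S_i$; as in the proof of Theorem~\ref{th:bqpll} this map is distribution-preserving, so $S'\sim\Uu_0$. Fix the optimal Merlin proof $\pi^{*}_S$ for $S$. Since $\pi^{*}_S$ depends only on $S$ and is independent of the rebalancing randomness $\Delta$, the query-magnitude bound in the proof of Theorem~\ref{th:bqpll} applies to the fixed $T$-query verifier $V^{(\cdot)}(\pi^{*}_S)$ without any change --- the proof register is simply part of the workspace $w$ --- so for a $1-N^{-1/4}$ fraction of $\Delta$ we get $\lvert\Pr[V^{S}(\pi^{*}_S)\text{ acc}]-\Pr[V^{S'}(\pi^{*}_S)\text{ acc}]\rvert\le O(p(n)\,T\,N^{-1/8})$. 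Since $\mu_{S'}\ge\Pr[V^{S'}(\pi^{*}_S)\text{ acc}]$ (the maximum over proofs beats the single candidate $\pi^{*}_S$), this yields $\mu_{S'}\ge\mu_S-O(p(n)TN^{-1/8})$; unbalancing a $\Uu_0$-instance gives the reverse inequality. Averaging over the coupling and taking $T=O(N^{1/10})$ forces $\lvert\E_{\Uu_D}[\mu]-\E_{\Uu_0}[\mu]\rvert=o(1)$, contradicting the $1/\poly(n)$ advantage; hence no $\mathsf{QCAM}$ protocol solves $\distbal$, and therefore none solves black-box $\mathsf{LLQSV}$ (and, since $\mathsf{QCAM}\supseteq\mathsf{AM}$, in particular we recover the $\mathsf{AM}$ statement).

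The step I expect to demand the most care --- and the one that genuinely distinguishes this from the plain $\mathsf{BQP}$ argument --- is controlling Merlin's existential quantifier: one must be sure the verifier's near-acceptance on the rebalanced (NO) instance cannot be ``repaired'' by switching to a better proof. The resolution above is that it suffices to exhibit \emph{one} surviving proof, namely $\pi^{*}_S$ transplanted from $S$, under which $V^{S'}$ still nearly accepts. What then needs double-checking is (i) that the averaging and triangle-inequality reductions really do go through for the nonlinear $\mu_{\Oo}=\max_\pi(\cdots)$ rather than for a fixed linear acceptance probability; (ii) that an exponentially long list does not spoil the union bound behind Theorem~\ref{th:typical} (it does not, since $\ln m=\poly(n)$); and (iii) that the $2^{\Omega(n)}$-query slack in Theorem~\ref{th:bqpll} comfortably swallows the verifier's polynomial query budget. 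I would also flag that the whole argument is intrinsically black-box: against genuine pseudorandom circuits there is no analogue of ``flip a few oracle bits without changing the distribution,'' which is exactly why Conjecture~\ref{conj::LLQSV} is stated as a conjecture and is supported only by the black-box lower bounds against $\mathsf{BQP}$, $\mathsf{PH}$, and $\mathsf{AM}$.
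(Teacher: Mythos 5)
The statement you were asked about is not proved anywhere in the paper: it is Aaronson's Long List Hardness Assumption, stated (\Cref{conj::LLQSV}) as a \emph{conjecture} about explicitly described (pseudo)random circuits, and the paper's contribution is only \emph{evidence} for it, namely unconditional black-box lower bounds against $\mathsf{BQP}$ (the BBBV-style bound of \Cref{th:bqpll}) and against $\mathsf{PH}$, hence $\mathsf{AM}$ (via $\mathsf{AC^0}$ bounds and the Raz--Tal-style analysis of $\sqfor$). You recognize this and honestly retarget the black-box instantiation, so your proposal should be read as an attempt at a strengthening of the paper's Section on $\bllqsv$, not as a proof of \Cref{conj::LLQSV} itself --- and it cannot be one: the conjecture's content is precisely the non-relativizing setting in which the circuits $C_i$ have short descriptions, where no query/oracle argument applies. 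One small correction to your framing: the paper's unconditional arguments are \emph{not} ``built to reach'' a black-box $\mathsf{QCAM}$ bound; they stop, deliberately, at $\mathsf{BQP}$ and $\mathsf{PH}\supseteq\mathsf{AM}$, proved by two unrelated techniques, and nowhere combine prover and quantum verifier into a single lower bound.

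On the merits of what you do prove: the sketch looks essentially sound and, if written out, would be a genuine (modest) strengthening of the paper's black-box evidence, subsuming the $\mathsf{BQP}$ bound and the $\mathsf{AM}$ consequence in one statement. The reduction of \Cref{sec:maj} (typicality via \Cref{th:typical}, decomposition into products of $\Uu_d^N$'s, averaging and the triangle inequality against $\Uu_0$) indeed only uses that the distinguishing functional is bounded and that expectations are linear in the distribution over oracles, so it tolerates your nonlinear $\mu_{\Oo}$; and the proof-transplant trick --- fix the optimal proof for $S$, rebalance by a distribution-preserving random flip that is independent of the (now fixed) $T$-query verifier, apply the query-magnitude bound of \Cref{th:bqpll} --- correctly defeats Merlin's existential quantifier, with the symmetric coupling giving both directions of $\bigl|\E_{\Uu_D}[\mu]-\E_{\Uu_0}[\mu]\bigr|=o(1)$. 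Two points need tightening. First, your $\mu_{\Oo}=\max_\pi\Pr_{\mathrm{coins}}[\cdot]$ is the $\mathsf{QCMA}$-type (proof-before-coins) quantity; in $\mathsf{QCAM}$ Arthur's random challenge comes first and Merlin's proof may depend on it, so the right functional is $\E_r[\max_\pi\Pr[V^{\Oo}(r,\pi)\text{ acc}]]$ and the transplant must be done per coin string $\pi^*_{S,r}$ (the query-magnitude bound holds for each fixed $r$, so this goes through, but it must be said). Second, the Markov-inequality step gives closeness only for a $1-N^{-1/4}$ fraction of rebalancings, so the exceptional set must be absorbed additively using $\mu\in[0,1]$ before averaging over the coupling; again routine, but it is where the expectation-over-$\Delta$ bookkeeping lives. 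With those repairs your black-box claim appears correct, but it remains evidence for, not a proof of, the conjecture as stated.
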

\noindent To summarize the overall proof strategy, we focus on refuting the existence of a deterministic $\mathsf{BQP}$ algorithm for Problem \ref{prob::xeb}. The key idea is that if we have access to a deterministic function solving Problem \ref{prob::xeb}, we can use this function in an approximate counting argument to count the number of collisions between the output of that function and the output of the quantum circuit. Furthermore, if the list in the conjecture is long enough, we can use Chernoff bound to separate the two cases which gives a $\mathsf{QCAM}$ protocol for $\mathsf{LLQSV}$. Suppose we are given an instance of $\mathsf{LLQSV}$, where $T = N^3 = 2^{3n}$. Let $S = \sum_i^T P_{C_i}(s_i)$. Then by Hoeffding's inequality, we have:
$$\Pr \Big[S \leq bN^2 - \frac{\varepsilon}{2}N^2\Big] \geq \exp (- \frac{\varepsilon^2 N^4}{N^3}) \geq \exp (-O(n)) $$
Let $V = |\{i | s_i \in Q(C_i) \} | $. Next, we can use Chernoff bound to separate the two cases:
\begin{enumerate}
  \item $\Pr \Big[V \geq (1+\delta)N^2 \Big] \leq \exp (-\frac{\delta^2}{2+\delta}N^2) \leq \exp(-N) $
  \item $\Pr \Big[V \leq (1-\delta)((1+\frac{\varepsilon}{2})N^2) \Big] \leq \exp(-\frac{\delta}{2}(1+\frac{\varepsilon}{2})N^2) \leq \exp(-N) $
\end{enumerate}
By setting $\delta = \varepsilon^2$, with high probability:
\begin{enumerate}
  \item In the uniform case $V < (1+\varepsilon^2)N^2$.
  \item Otherwise $V > (1+\frac{\varepsilon}{4})N^2$
\end{enumerate}
Lastly, we can use approximate counting to estimate this value up to a multiplicative factor of $(1+\varepsilon^2)$ to decide between these two cases with high probability. In \cite{a-talk}, Aaronson provides further evidence that this argument can be extended to constant min-entropy algorithms and provides methods to accumulate entropy.

\section{Proof of Claim~\ref{claim:st_dist}} \label{ap:st_dist}
We restate the claim.
\begin{claim}
Conditioned on $\mathcal{E}$, with $1-\exp(-\Omega(n))$ probability over the choice of $\tau$, the distributions $p_{Z|\mathcal{P}}$ and $\tilde{p}_{Z|\mathcal{P}}$ are $\exp(-\Omega(n))$-close in statistical distance.
\end{claim}
\begin{proof}
We assume that $F$ is good in the sense that Eq.\eqref{eq:tau} holds. This means that $Z: p_Z \ge \tau$ and $Z: \tilde{p}_Z \ge \tau$ capture at least $\gamma/\poly(n) \ge 2^{-o(n)}$ of the probability mass of the two distributions $p$ and $\tilde{p}$.

Recall that $\tau$ is a uniformly random integer multiple of $2^{-n/20}$ in the range $[2^{-n/30}, 2\cdot 2^{-n/30}]$, and there are $2^{-n/30 + n/20}+1 = 2^{n/60}+1$ such choices.

Consider the $2^{n/60}$ choices of $\tau$. They partition the values of $p_Z$ into buckets. 
For $i= 0,1, \ldots,  2^{n/60}-1$, the $i$-th bucket contain all $Z: p_Z \in [2^{-n/30} + i\cdot 2^{-n/20}, 2^{-n/30} + (i+1)\cdot 2^{-n/20})$.
Finally, let the last bucket contain all $Z: p_Z \ge 2\cdot 2^{-n/30}$.
The key idea is that there cannot be too many buckets whose probability mass is large.
More precisely, the $i$-th bucket is \emph{problematic} if its probability mass is at least $2^{-n/80}$. There cannot be more than $2^{n/80}$ problematic buckets.
Thus, most buckets are non-problematic, and in fact, most consecutive three buckets are non-problematic.
Say we pick $\tau$ to be $2^{-n/30} + i\cdot 2^{-n/20}$ for $i$ such that both $i-1$ and $i$ are non-problematic.

Note that due to event $\mathcal{E}$ each $Z$ would get either the same bucket according to $p_Z$ and $\tilde{p}_Z$ or a bucket off by $1$ (either up or down).
Thus if both $i-1$ and $i$ are non-problematic bucket than picking the threshold $\tau = 2^{-n/30} + i\cdot 2^{-n/20}$ guarantees that $$\sum_{Z: p_Z \ge \tau, \tilde{p}_Z< \tau} p_Z \le \sum_{Z: p_Z \in [\tau, \tau + 2^{-n/20})} p_Z \le 2^{-n/80}$$ and
$$\sum_{Z: p_Z < \tau, \tilde{p}_Z\ge  \tau} p_Z \le
\sum_{Z: p_Z \in [\tau-2^{-n/20}, \tau)} p_Z\le 2^{-n/80}$$
and so the sums $S := \sum_{Z:p_Z \ge \tau } p_Z $ and $\tilde{S}:=\sum_{Z:\tilde{p}_Z \ge \tau } \tilde{p}_Z$ are close since
\begin{align*}
|S-\tilde{S}|&=\left|
\sum_{Z:p_Z \ge \tau } p_Z 
- \sum_{Z:\tilde{p}_Z \ge \tau} \tilde{p}_Z
\right|
\\
&\le 
\left|
\sum_{Z:p_Z \ge \tau } p_Z 
- \sum_{Z:\tilde{p}_Z \ge \tau} p_Z \pm 2^{-n/20}
\right|\\
&\le 
\left|
\sum_{Z:p_Z \ge \tau } p_Z 
- \sum_{Z:\tilde{p}_Z \ge \tau} p_Z
\right| + 2^{-n/60}\\
&\le 2^{-n/80} + 2^{-n/80} + 2^{-n/60} = 2^{-\Omega(n)}\end{align*}
On the other hand both sums are at least $\gamma/\poly(n) = \exp(-o(n))$ and thus their ratio is $1\pm \exp(-\Omega(n))$ and $|\frac{1}{S} - \frac{1}{\tilde{S}}| \le \exp(-\Omega(n))$.
Finally, we have
\begin{align*}&
\sum_{Z}{\left|\frac{1}{S} \cdot 
|p_Z \cdot \mathds{1}_{p_Z \ge \tau}| 
- \frac{1}{\tilde{S}}
\cdot 
|\tilde{p}_Z \cdot \mathds{1}_{\tilde{p}_Z \ge \tau}|\right|}\\
&\le
\sum_{Z:p_Z \ge \tau, \tilde{p}_Z <\tau}{\frac{1}{S} |p_Z|} 
+ \sum_{Z:p_Z < \tau, \tilde{p}_Z \ge \tau}{\frac{1}{\tilde{S}} |\tilde{p}_Z|} 
+ \sum_{Z: p_Z, \tilde{p}_Z \ge \tau}
\left|\frac{1}{S}p_Z - \frac{1}{\tilde{S}}\tilde{p}_Z  \right|
\end{align*}
We already got that the first two sums are $2^{-\Omega(n)}$. As for the third sum, we use triangle inequality to bound it by $\sum_{Z: p_Z, \tilde{p}_Z \ge \tau}
\left|\frac{1}{S}p_Z - \frac{1}{S}\tilde{p}_Z  \right|+\sum_{Z: p_Z, \tilde{p}_Z \ge \tau}
\left|\frac{1}{S} - \frac{1}{S}\right|\tilde{p}_Z$, where both terms are $\exp(-\Omega(n))$ small.
\end{proof}

\end{document}